\newcolumntype{L}[1]{>{\raggedright\let\newline\\\arraybackslash\hspace{0pt}}m{#1}}
\newcolumntype{C}[1]{>{\centering\let\newline\\\arraybackslash\hspace{0pt}}m{#1}}
\let\realbfseries=\bfseries
\def\bfseries{\realbfseries\boldmath}
\let\realbibitem=\bibitem
\def\bibitem{\par \vspace{-1.2ex}\realbibitem}
\let\epsilon=\varepsilon
\def\defn#1{\textbf{\textit{\boldmath #1}}}
\theoremstyle{plain}
\newtheorem{theorem}{Theorem}[section]
\newtheorem{lemma}[theorem]{Lemma}
\theoremstyle{definition}
\theoremstyle{remark}
 \gdef\xxxmark{%
   \expandafter\ifx\csname @mpargs\endcsname\relax % in minipage?
     \expandafter\ifx\csname @captype\endcsname\relax % in figure/caption?
       \marginpar{xxx}% not in a caption or minipage, can use marginpar
     \else
       xxx % notice trailing space
     \fi
   \else
     xxx % notice trailing space
   \fi}
 \gdef\xxx{\@ifnextchar[\xxx@lab\xxx@nolab}
 \long\gdef\xxx@lab[#1]#2{\textbf{[\xxxmark #2 ---{\sc #1}]}}
 \long\gdef\xxx@nolab#1{\textbf{[\xxxmark #1]}}
 \long\gdef\xxx@lab[#1]#2{}\long\gdef\xxx@nolab#1{}%
\numberwithin{equation}{section}
\title{This Game Is Not Going To Analyze Itself}
\author{
  Aviv Adler%
    \thanks{Computer Science and Artificial Intelligence Laboratory,
      Massachusetts Institute of Technology, Cambridge, MA 02139, USA,
      \protect\url{{adlera,joshuaa,lkdc,mcoulomb,edemaine,diomidova,dylanhen,jaysonl}@mit.edu}}
\and
  Hayashi  Ani\footnotemark[1]
\and
  Lily Chung\footnotemark[1]
\and
  Michael Coulombe\footnotemark[1]
\and
  Erik D. Demaine\footnotemark[1]
\and
  Jenny Diomidova\footnotemark[1]
\and
  Dylan Hendrickson\footnotemark[1]
\and
  Jayson Lynch\footnotemark[1]
}
\date{}
\begin{document}
\maketitle

\begin{abstract}
  We analyze the puzzle video game \emph{This Game Is Not Going To Load Itself},
  where the player routes data packets of three different colors
  from given sources to given sinks of the correct color.
  Given the sources, sinks, and some previously placed arrow tiles,
  we prove that the game is in $\Sigma_2^P$; in NP for sources of equal period;
  NP-complete for three colors and six equal-period sources with player input;
  and even without player input, simulating the game is both NP- and coNP-hard
  for two colors and many sources with different periods.
  On the other hand, we characterize which locations for three data sinks
  admit a \emph{perfect} placement of arrow tiles that guarantee correct routing
  no matter the placement of the data sources,
  effectively solving most instances of the game as it is normally played.
\end{abstract}

%{\small
  %{\bf Keywords:} %video games, hardness, NP, PSPACE.
  %\\[5pt]
  %{\bf 2010 MSC:} %Primary 68Q17; Secondary 68Q25.
  %}

%%%%%%%%%%%%%%%%%%%%%%%%%%%%%%%%%%%%%%%%%%%%%%%%%%%%%%%%%%%%%%

% shorthand for "This Game Is Not Going To Load Itself" in paper
\def\gamename{TGINGTLI}

\section{Introduction}
\label{sec:intro}

\emph{This Game Is Not Going To Load Itself} (\gamename{}) \cite{tgingtli} is a free game created in 2015 by Roger ``atiaxi'' Ostrander for the Loading Screen Jam, a game jam hosted on \protect\url{itch.io},
where it finished $7$th overall out of $46$ entries.
This game jam was a celebration of the expiration of US Patent 5,718,632 \cite{US5718632}, which covered the act of including mini-games during video game loading screens.
In this spirit, \gamename{} is a real-time puzzle game themed around the player helping a game load three different resources of itself --- save data, gameplay, and music, colored red, green, and blue --- by placing arrows on the grid cells to route data entering the grid to a corresponding sink cell.
Figure~\ref{fig:realplay} shows an example play-through.

\begin{figure}
  \centering
  \includegraphics[scale=0.63]{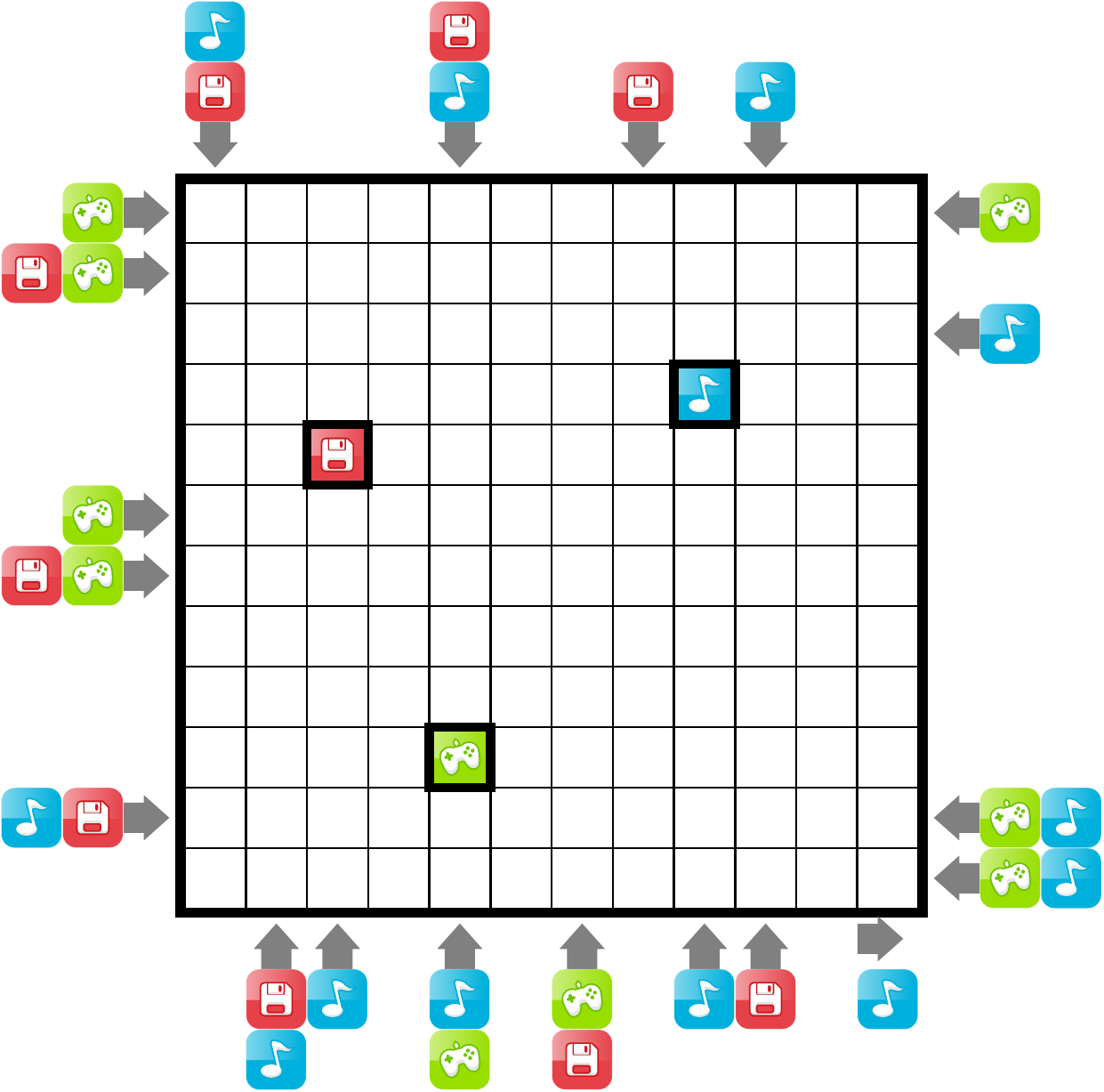}\hfill
  \includegraphics[scale=0.63]{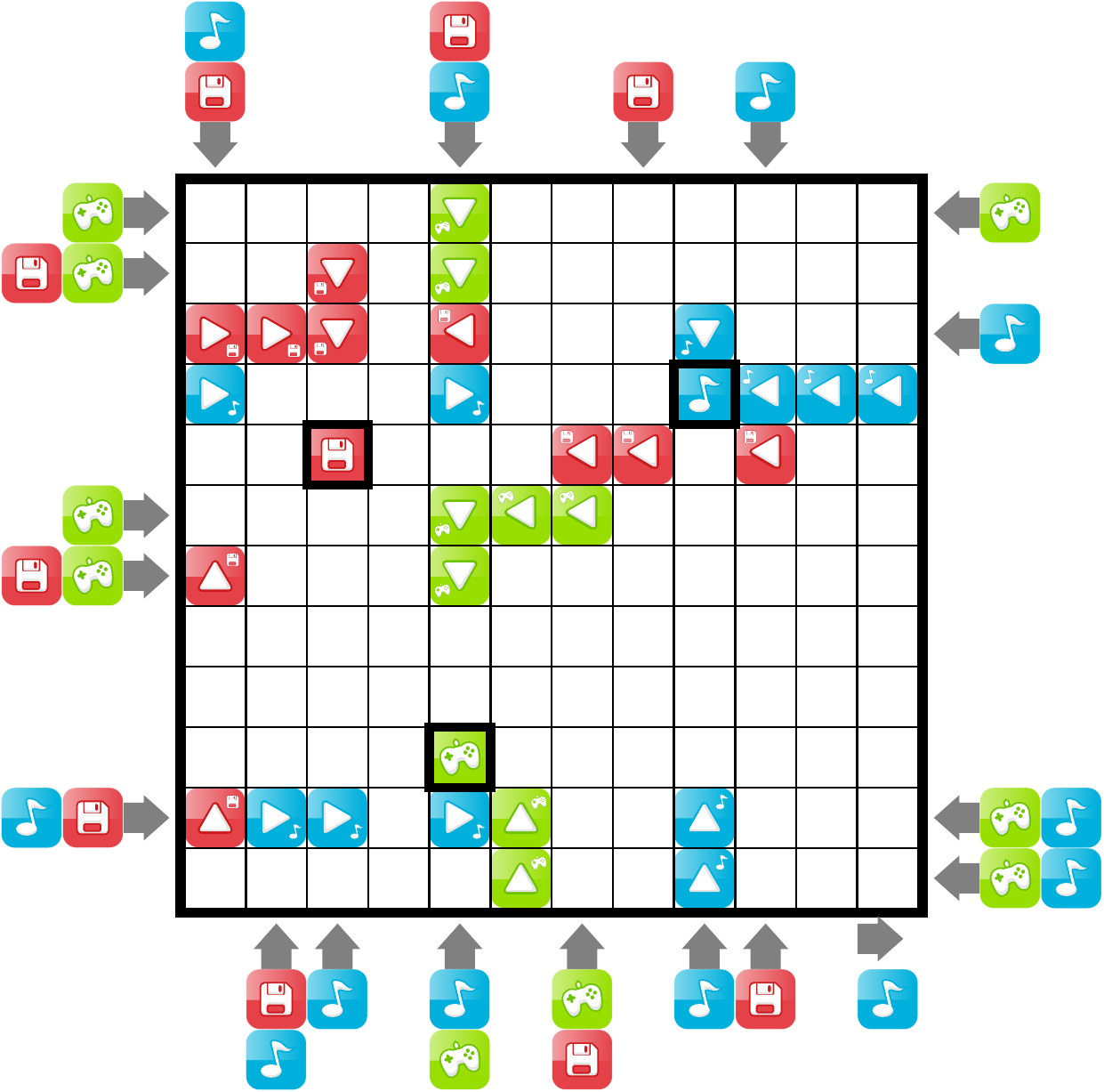}
  \caption{%
    Left: The (eventual) input for a real-world Level 16 in \gamename{}.
    Right: A successful play-through that
    routes every packet to its corresponding sink.}
    %Can you find the packet that is not routed to its sink?}
  \label{fig:realplay}
\end{figure}

We formalize \gamename{} as follows.
You are given an $m \times n$ grid where each unit-square cell is either empty,
contains a data sink, or contains an arrow pointing
in one of the four cardinal directions.
(In the implemented game, $m=n=12$ and no arrows are placed initially.)
Each data sink and arrow has a color (resource) of red, green, or blue;
and there is exactly one data sink of each color in the grid.
In the online version (as implemented), sources appear throughout the game;
in the offline version considered here, all sources are known a priori.
Note that an outer edge of the grid may have multiple sources of different colors.
% Only observed two in gameplay.
Finally, there is an loading bar that starts at an integer $k_0$
and has a goal integer~$k^*$.

During the game,
each source periodically produces data packets of its color,
which travel at a constant speed into the grid.
If a packet enters the cell of an arrow of the same color,
then the packet will turn in that direction.
(Arrows of other colors are ignored.)
If a packet reaches the sink of its color,
then the packet disappears and the loading bar increases by one unit of data.
If a packet reaches a sink of the wrong color, or exits the grid entirely,
then the packet disappears and the loading bar decreases by one unit of data, referred to as taking damage.
Packets may also remain in the grid indefinitely
by going around a cycle of arrows;
this does not increase or decrease the loading bar.
The player may at any time permanently fill an empty cell with an arrow, which may be of any color and pointing in any of the four directions.
If the loading bar hits the target amount $k^*$, then the player wins; but if the loading bar goes below zero, then the player loses.

In Section~\ref{sec:NP-hardness},
we prove NP-hardness of the {\gamename} decision problem:
given a description of the grid
(including sources, sinks, and preplaced arrows),
can the player place arrows to win?
This reduction works even for just six sources and three colors;
it introduces a new problem, \defn{3DSAT}, where variables have three
different colors and each clause mixes variables of all three colors.
In Section~\ref{sec:sigma-2}, we introduce more detailed models for the
periodic behavior of sources, and show that many sources of differing periods
enable both NP- and coNP-hardness of winning the game,
\emph{even without player input} (just simulating the game).
On the positive side, we prove that these problems are in $\Sigma_2^P$;
and in NP when the source periods are all equal,
as in our first NP-hardness proof, so this case is in fact NP-complete.

In Section~\ref{sec:perfect-layouts}, we consider how levels start in
the implemented game: a grid with placed sinks but no preplaced arrows.
We give a full characterization of when there is a \defn{perfect layout}
of arrows, where all packets are guaranteed to route to the correct sink,
\emph{no matter where sources get placed}.
In particular, this result provides a winning strategy for most
sink arrangements in the implemented game.
Notably, because this solution works independent of the sources,
it works in the online setting.

\xxx{We should perhaps point to some of the other hardness of videogames literature and if there is literature on solving small instances of puzzle games through math and exhaustive search.}

\section{NP-Hardness for Three Colors and Six Sources}
\label{sec:NP-hardness}

We first prove that \gamename{} is NP-hard by reducing from a new problem called
\defn{3-Dimensional SAT (3DSAT)},
defined by analogy to 3-Dimensional Matching (3DM).
3DSAT is a variation of 3SAT where, in addition to a 3CNF formula,
the input specifies one of three colors (red, green, or blue) to each variable
of the CNF formula, and the CNF formula is constrained to have trichromatic
clauses, i.e., to have exactly one variable (possibly negated) of each color.

\begin{lemma}
  3DSAT is NP-complete.
\end{lemma}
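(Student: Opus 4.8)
Membership in NP is immediate: a satisfying truth assignment is a certificate of size linear in the formula and checkable in polynomial time. For NP-hardness I plan to reduce from ordinary 3SAT, which I may assume is given in the standard normal form where every clause has exactly three literals on three \emph{distinct} variables (tautological clauses are deleted, and shorter or repeated-variable clauses are padded with fresh variables in the usual way).

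Given such a formula $\phi$ with variable set $V$ and clause set $C$, the reduction makes three colored copies of each variable --- $x_R$ (red), $x_G$ (green), $x_B$ (blue) for every $x \in V$ --- and introduces three extra ``free'' variables $f_R, f_G, f_B$, one of each color. Each original clause on variables $x, y, z$ is translated to a single clause that takes its literal on $x$ from the red copy, its literal on $y$ from the green copy, and its literal on $z$ from the blue copy; this translated clause is automatically trichromatic. To force the three copies of each $x$ to agree I want the implication cycle $(\neg x_R \vee x_G) \wedge (\neg x_G \vee x_B) \wedge (\neg x_B \vee x_R)$, but these are two-literal, bichromatic clauses and hence illegal in 3DSAT. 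The fix --- and this is the one point that needs an idea rather than bookkeeping --- is to pad each such clause $D$ with the free variable $f$ of the color missing from $D$, replacing $D$ by the pair $D \vee f$ and $D \vee \neg f$; this pair is logically equivalent to $D$ for every value of $f$, so each of $f_R, f_G, f_B$ stays completely unconstrained while every clause becomes a genuine trichromatic $3$-clause. (Concretely, $\neg x_R \vee x_G$ is padded with $f_B$, $\neg x_G \vee x_B$ with $f_R$, and $\neg x_B \vee x_R$ with $f_G$.)

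For correctness, any satisfying assignment of $\phi$ lifts to the constructed instance by setting all three copies of each $x$ to the value of $x$ (and $f_R, f_G, f_B$ arbitrarily): the consistency clauses hold because the copies agree, and each translated clause is satisfied by whichever literal satisfied the original. Conversely, in any satisfying assignment of the constructed instance the consistency clauses force $x_R = x_G = x_B$ for every $x$, so reading off $x := x_R$ yields an assignment under which every original clause is satisfied by the same literal that satisfies its translate. The instance has $3|V| + 3$ variables and $|C| + 6|V|$ clauses and is produced in polynomial time, so this is a valid polynomial-time reduction. As noted, the only non-routine step is the padding trick that turns the bichromatic consistency clauses into legal trichromatic ones; the remainder is just verifying that every emitted clause really does contain exactly one variable of each color.
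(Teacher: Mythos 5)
Your proof is correct and follows essentially the same approach as the paper: reduce from 3SAT by making three colored copies of each variable, enforce their equality via an implication cycle, and pad each two-literal consistency clause with a variable of the missing color in both polarities. The only (immaterial) difference is that the paper pads with the third copy $x^{(3)}$ of the same variable rather than with shared fresh free variables, which saves the three extra variables but changes nothing substantive.
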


\begin{proof}
  We reduce from 3SAT to 3DSAT by converting a 3CNF formula $F$
  into a 3D CNF formula~$F'$.
  For each variable $x$ of~$F$,
  we create three variables $x^{(1)}, x^{(2)}, x^{(3)}$ in $F'$
  (intended to be equal copies of $x$ of the three different colors)
  and add six clauses to $F'$ to force $x^{(1)} = x^{(2)} = x^{(3)}$:
  \def\halfup#1{\raisebox{2.5ex}{\smash{$#1$}}}
  \begin{align*}
    \lnot x^{(1)} \lor x^{(2)} \lor x^{(3)}       &\iff (x^{(1)} \to x^{(2)}) \lor x^{(3)} \\
    \lnot x^{(1)} \lor x^{(2)} \lor \lnot x^{(3)} &\iff (x^{(1)} \to x^{(2)}) \lor \lnot x^{(3)} ~\halfup{\Bigg\rbrace \iff x^{(1)} \to x^{(2)}} \\
    x^{(1)} \lor \lnot x^{(2)} \lor x^{(3)}       &\iff (x^{(2)} \to x^{(3)}) \lor x^{(1)} \\
    \lnot x^{(1)} \lor \lnot x^{(2)} \lor x^{(3)} &\iff (x^{(2)} \to x^{(3)}) \lor \lnot x^{(1)} ~\halfup{\Bigg\rbrace \iff x^{(2)} \to x^{(3)}} \\
    x^{(1)} \lor x^{(2)} \lor \lnot x^{(3)}       &\iff (x^{(3)} \to x^{(1)}) \lor x^{(2)}\\
    x^{(1)} \lor \lnot x^{(2)} \lor \lnot x^{(3)} &\iff (x^{(3)} \to x^{(1)}) \lor \lnot x^{(2)} ~\halfup{\Bigg\rbrace \iff x^{(3)} \to x^{(1)}}
  \end{align*}
  Thus the clauses on the left are equivalent to the implication loop
  $x^{(1)} \implies x^{(2)} \implies x^{(3)} \implies x^{(1)}$,
  which is equivalent to $x^{(1)} = x^{(2)} = x^{(3)}$.
  
  For each clause $c$ of $F$ using variables $x$ in the first literal,
  $y$ in the second literal, and $z$ in the third literal,
  we create a corresponding clause $c'$ in $F'$ using
  $x^{(1)}$, $y^{(2)}$, and $z^{(3)}$ (with the same negations as in~$c$).
  All clauses in $F'$ (including the variable duplication clauses above)
  thus use a variable of the form $x^{(i)}$ in the
  $i$th literal for $i \in \{1,2,3\}$,
  so we can 3-color the variables accordingly.
\end{proof}

\begin{theorem} \label{thm:NP-hard36}
  {\gamename} is NP-hard, even with three colors and six sources
  of equal period.
\end{theorem}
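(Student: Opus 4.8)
The plan is to reduce from 3DSAT, which is NP-complete by the lemma. Given a 3D CNF formula $F'$ whose variables are colored red, green, or blue and whose clauses are trichromatic, I would build a \gamename{} instance: an $m\times n$ grid carrying a family of preplaced arrows that form the gadget skeletons, the three required sinks, and six sources (two of each color) all sharing a single period. The start and target of the loading bar, $k_0$ and $k^*$, are chosen so that the player can force the bar to $k^*$ exactly when $F'$ is satisfiable. As is typical for such games, the intended correspondence is: the player's permanent arrow placements in ``variable gadgets'' pick a truth assignment, and then, because the sources are periodic, the same wave of ``test packets'' repeatedly runs through ``clause gadgets'' checking satisfaction, with the bar accumulating the net of correct deliveries over incorrect ones.

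Concretely, I would (i) first preprocess $F'$ into a bounded-occurrence form, duplicating each variable and chaining the copies together with trichromatic equality clauses (as in the lemma), so that each variable-copy is consulted by only a constant number of test packets --- this is needed because an arrow tile turns every same-colored packet the same way and so cannot fan a signal out; (ii) build a variable gadget as a cell in which the player places an arrow of the variable's color, with the surrounding preplaced arrows arranged so that the two ``useful'' directions mean \emph{true}/\emph{false} while the remaining options (including leaving the cell empty) are strictly dominated; (iii) route test packets between gadgets along corridors of preplaced arrows, using that two different-colored corridors may cross for free and that a small dedicated gadget handles same-colored crossings; and (iv) build the clause gadget for $(\ell_x^{\mathrm R}\lor\ell_y^{\mathrm G}\lor\ell_z^{\mathrm B})$ whose intended behavior is that, of the three test packets entering it, any corresponding to a satisfied literal peels off early and reaches the sink of its own color, while those for falsified literals are funneled into a small shared region whose limited routing freedom (a couple of cells whose single color the player must choose) lets at most two of the three be steered to safety --- so that the gadget forces a packet off the grid precisely when all three literals are false.

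The main obstacle, I expect, is twofold and intertwined. First is designing the clause gadget so that the OR semantics genuinely hold under oblivious, per-color arrow routing with no packet collisions: in particular one must rule out the player ``rescuing everything'' by ensuring that the shared cells really are a bottleneck and that an unrescued packet has no reachable cycle, so its only fate is to exit the grid. Second is calibrating the loading-bar arithmetic so that a single violated clause is decisive. Since the sources are periodic, play settles into repeating one wave forever, so I need the per-wave net change to be positive under a satisfying assignment (whence the bar climbs to $k^*$ without ever dipping below $0$ --- automatic if that assignment causes no loss at all) yet nonpositive under \emph{every} assignment when $F'$ is unsatisfiable (whence the bar never reaches $k^*$). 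Making the local profit/loss figures add up --- for instance by having the ``success'' of a satisfied clause contribute nothing on its own, while a global ``progress'' stream that only an all-satisfied wave delivers supplies the positive drift --- is the delicate part, and I would tune $k_0,k^*$ to it. Finally I would close the soundness direction by noting that the real-time freedom to place arrows does not help: an arrow placed late only risks losing packets that already passed, and placements are permanent, so without loss of generality the player fixes every arrow before the first test packet enters any gadget, i.e.\ commits to an assignment; combined with the gadget analysis this gives the equivalence, using only three colors and six equal-period sources.
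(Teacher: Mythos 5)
Your skeleton matches the paper's: reduce from 3DSAT, let the player's arrow in each variable gadget choose between two literal paths, route each literal path serially through its clauses via crossover gadgets, and use a clause gadget whose shared bottleneck lets at most two of three incoming streams survive. But the scoring mechanism you propose does not work, and the paper differs in exactly the places where yours breaks. You send the doomed packet of an unsatisfied clause \emph{off the grid}, costing $-1$; meanwhile the other two colors' streams survive the bottleneck, continue along their chains, and eventually score $+1$ each at their sinks, so an assignment violating a single clause still nets $+1$ per wave and the player wins anyway --- the opposite of what soundness requires. The paper instead traps the doomed stream in a \emph{cycle} of arrows (net $0$, the third possible fate of a packet) and adds three separate \emph{damage gadgets}, each draining $1$ per period starting one time step after the first wave could score; a satisfying assignment then scores $+3$ at time $\ell$ and wins with $k^* = k_0 + 3$, while any failure caps the gain at $+2$ per period against $-3$ of damage, forcing an eventual loss no matter how large $k_0$ is. Your proposal has no analogue of the damage gadgets, and without them ``not all clauses satisfied'' never becomes ``losing.''

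Second, your clause semantics --- a satisfied literal's packet ``peels off early and reaches the sink of its own color'' --- is inverted relative to what is needed and opens two further holes: it awards $+1$ per satisfied literal occurrence (so the calibration you flag as delicate is actually broken, since many satisfied clauses can drown out one violated clause), and it removes that packet from its color's chain, so later variable gadgets and clauses on that route are never tested at all. The paper uses the opposite convention: the path that \emph{carries} data represents the \emph{false} literal, empty paths are true, nothing scores inside a clause gadget, and each color's single stream scores exactly once, at the sink at the end of its entire chain --- this is the concrete realization of the ``global progress stream delivered only by an all-satisfied wave'' that you postulate but do not construct. (Minor points: your bounded-occurrence preprocessing is unnecessary once literal paths visit their clauses serially, and the six equal-period sources are accounted for as one per color for the three variable streams plus one for each of the three damage gadgets.)
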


\begin{proof}
  Our reduction is from 3DSAT.
  Figure~\ref{fig:reduction-sketch} gives a high-level sketch:
  each variable gadget has two possible routes
  for a packet stream of the corresponding color,
  and each clause gadget allows at most two colors of packets
  to successfully route through.
  When a clause is satisfied by at least one variable,
  the clause gadget allows the other variables to successfully pass
  through; otherwise, at least one of the packet streams enters a cycle.
  Variables of the same color are chained together
  to re-use the packet stream of that color.

  \begin{figure}
    \centering
    \subcaptionbox{Satisfied clause}{\includegraphics[scale=0.5]{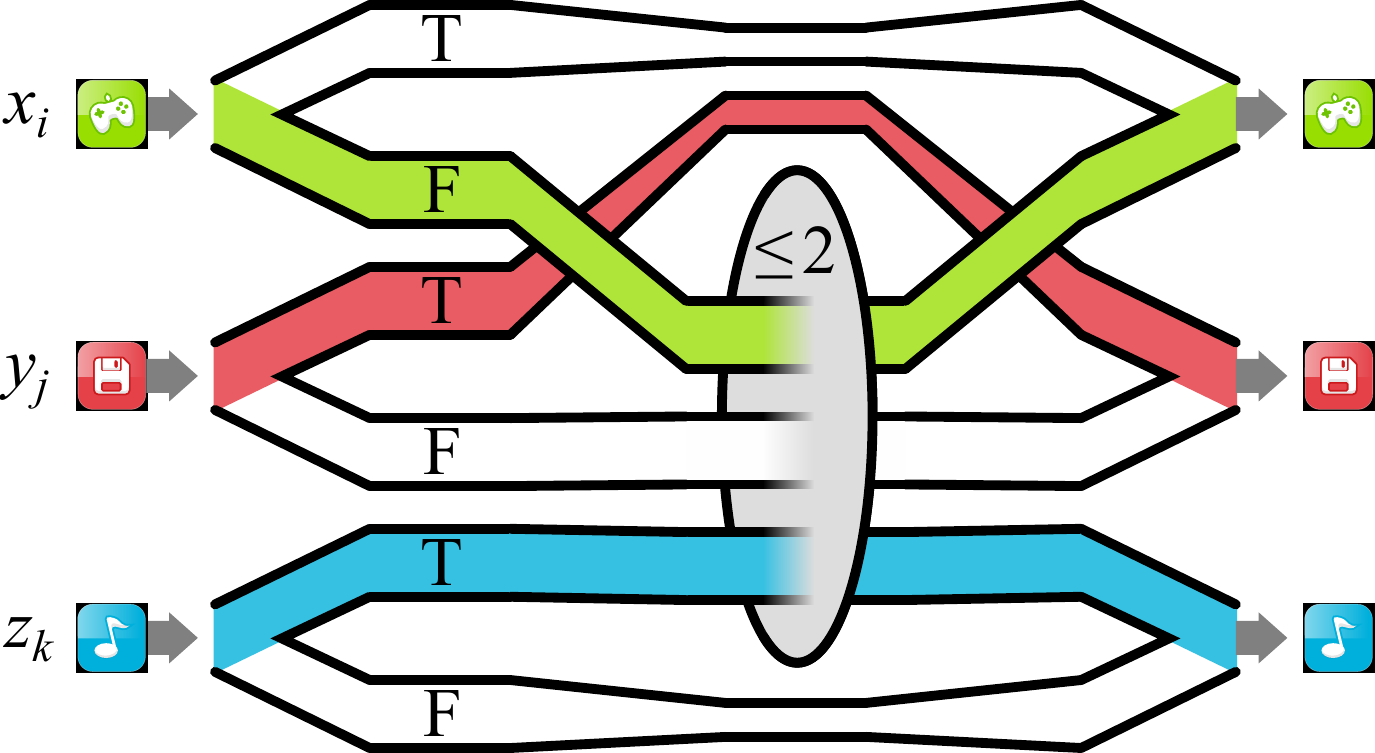}}\hfil
    \subcaptionbox{Unsatisfied clause}{\includegraphics[scale=0.5]{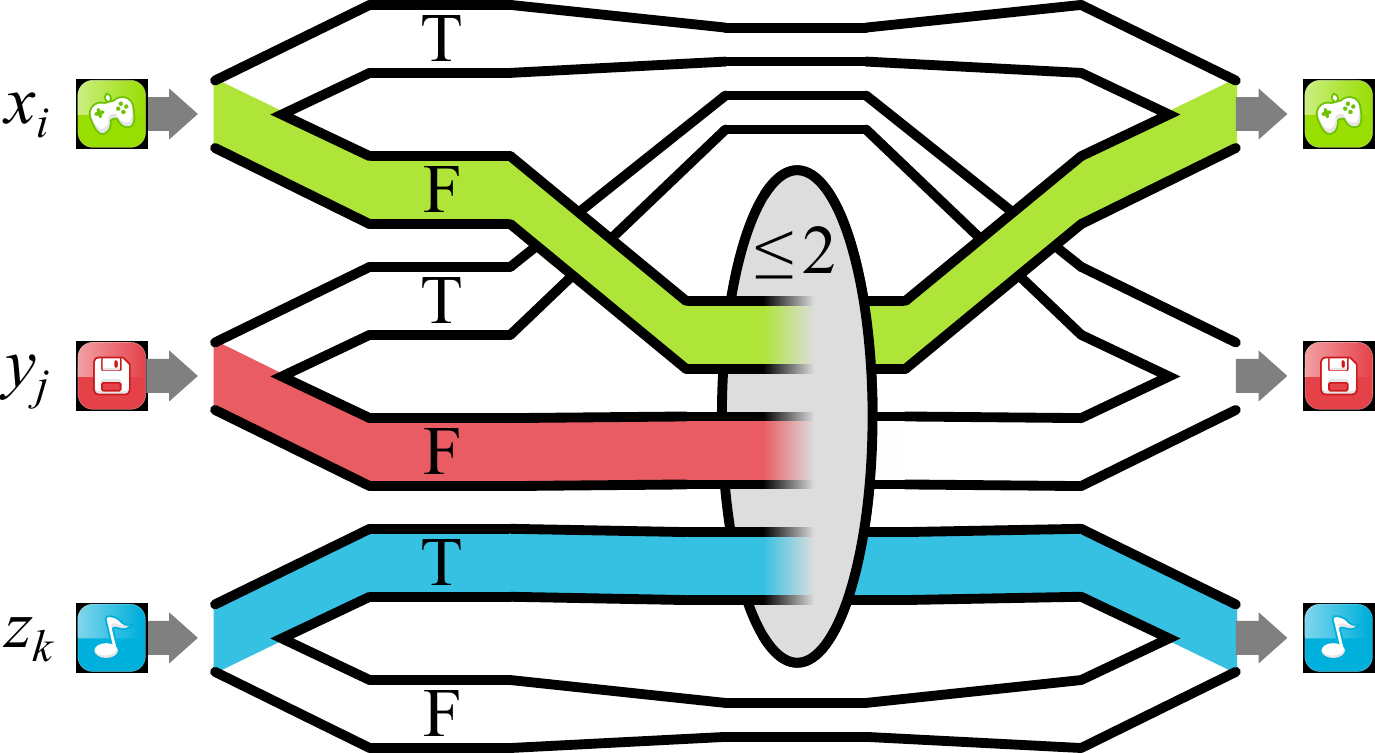}}
    \caption{Sketch of our NP-hardness reduction.}
    \label{fig:reduction-sketch}
  \end{figure}
  
  In detail, most cells of the game board
  will be prefilled, leaving only a few empty cells
  (denoted by question marks in our figures) that the player can fill.
  %Given a formula, we draw the variable--clause incident graph on an
  %$O(n) \times O(n)$ grid (with crossings) \cite{biedl98}
  %and scale up by a constant factor.

  For each color, say red, we place a red source gadget
  on the left edge of the construction.
  Then, for each red variable $x$ in sequence,
  we place a variable gadget of Figure~\ref{fig:gadget:variable}
  at the end of the red stream.
  To prevent the packets from entering a loop,
  the player must choose between sending the stream upward or downward,
  which results in it following one of the two rightward paths,
  representing the literals $x$ and $\overline x$ respectively.
  The path followed by the packet stream is viewed as \emph{false};
  an empty path is viewed as \emph{true}.

\begin{figure}[t]
  \centering
  \begin{minipage}[b]{0.51\linewidth}
    \centering
    \includegraphics[scale=0.9]{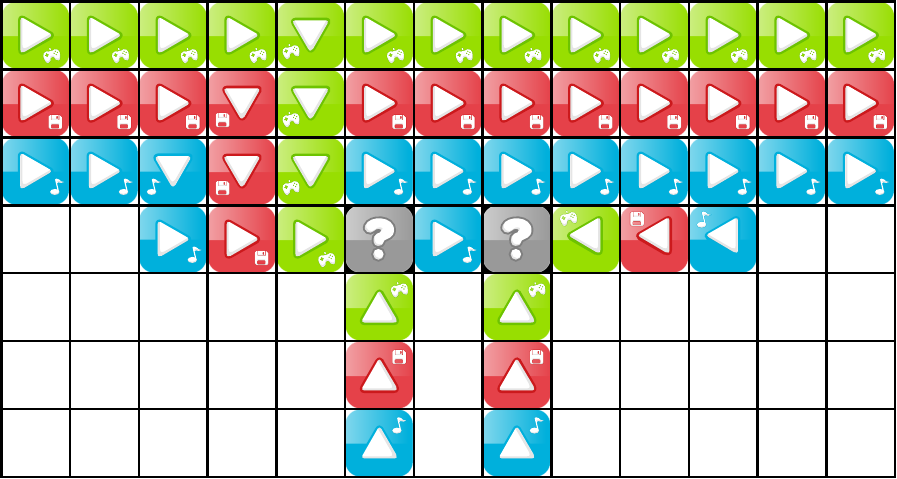}
    \caption{Clause gadget. At most two streams of data, representing false literals, can pass through the gadget (by placing upwards arrows in the ``?'' cells) without entering a cycle.
      Placing any other direction of arrow also puts the stream into a cycle.}
    \label{fig:gadget:clause}
    % can be routed away from entering the wrong data sink, so at least one literal in the clause must be set to true to avoid taking damage.
  \end{minipage}\hfill
  \begin{minipage}[b]{0.22\linewidth}
    \centering
    \includegraphics[scale=0.9]{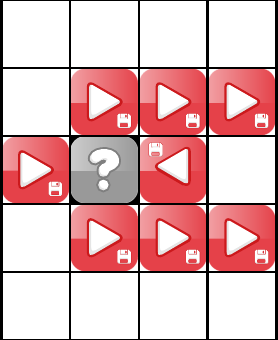}
    \caption{\centering Variable gadget lets the player route packets from a source to one of two literal paths.}
    \label{fig:gadget:variable}
  \end{minipage}\hfill
  \begin{minipage}[b]{0.22\linewidth}
    \centering
    \includegraphics[scale=0.9]{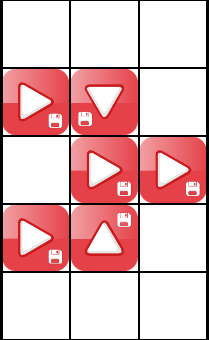}
    \caption{\centering Merge gadget combines two literal paths back into one.\linebreak~}
    \label{fig:gadget:merge}
  \end{minipage}
  
  \bigskip
  
  \begin{minipage}[b]{0.5\linewidth}
    \centering
    \includegraphics[scale=0.9]{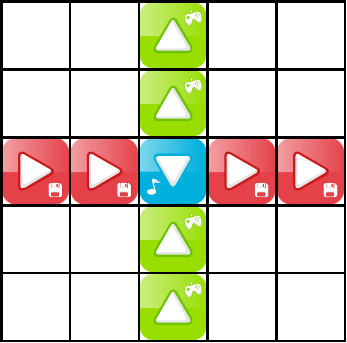}
    \caption{Crossover gadget between two literal paths of same or different colors. (The center cell is colored different from both paths.)}
    \label{fig:gadget:crossover}
  \end{minipage}\hfill
  %\begin{minipage}{0.51\linewidth}
  %  \centering
  %  \includegraphics[scale=0.9]{}
  %  \caption{End gadget loads data at a unit rate per literal path that has not been lost.}
  %  \label{fig:gadget:end}
  %\end{minipage}\hfill
  \begin{minipage}[b]{0.46\linewidth}
    \includegraphics[scale=0.9]{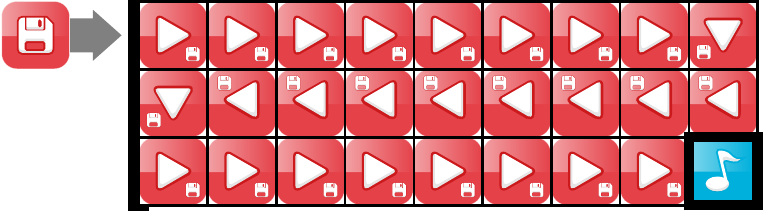}
    \caption{Damage gadget forces damage at a unit rate after a desired start delay.\hspace{\fill}\linebreak~}
    \label{fig:gadget:damage}
  \end{minipage}
\end{figure}

  Then we route each literal path to sequentially visit
  every clause containing it.
  Figure~\ref{fig:gadget:crossover} shows a crossover gadget
  to enable such routing.
  (Note that it still works if both lines are the same color.)

  Figure~\ref{fig:gadget:clause} shows the clause gadget,
  which allows at most two packet streams to pass through it.
  If all three literals are false then at least
  one stream of data must be placed into a cycle.
  On the other hand, if the clause is satisfied, then the literal paths
  carrying data can pass their data on to the next clause, and so on.
  Note that the length of the diverted path through the clause
  is the same for all three colors.

  After all uses of the red variable \(x\),
  we lengthen the two red literal paths
  corresponding to \(x\) and \(\overline{x}\) to have the same length,
  then combine them back together
  using the merge gadget of Figure~\ref{fig:gadget:merge}.
  We then route this red path into the variable gadget
  for the next red variable, and so on.
  Finally, after all red variables, we connect the red stream
  to a red sink.

  We lengthen the red, green, and blue streams
  to all have the same length~$\ell$.
  If the player successfully satisfies all clauses, then
  they will increase the loading bar by $3$ units ($1$~per color)
  after an initial delay of~$\ell$.
  We set the parameters so that the player wins in this case:
  $k^* - k_0 = 3$.
  Otherwise, the loading rate is at most~$2$.
  To ensure that the player loses in this case, we add
  $3$ damage gadgets of Figure~\ref{fig:gadget:damage},
  each incurring damage at a rate of $1$ after an initial delay of $\ell+1$.
  Thus we obtain a net of $-1$ per period,
  so the player eventually loses even if $k_0$ is large.
\end{proof}

%For each clause, we use a unique resource for each position and use dual-rail logic \xxx{cite dual-rail logic or explain} to encode whether the variable is true or false. The presence of a stream of a
%particular resource indicates along a rail indicates the corresponding variable is false, the absence of such a
%stream indicates the variable is true. Each variable has a split gadget connected to each source of the right
%resource to select the (dual-rail) truth value. We ensure the streams for all three resources for each variable
%hold the same truth value. The variables are then sent through the clause gadgets with the associated literals.
%The clause gadgets contain various mechanisms causing damage if the clause is not satisfied. Finally, after passing
%through the clause gadgets, all streams of the same resource are rejoined and sent into their matching sinks.
%To ensure that we win the game if and only if all clauses are satisfied, we add a damage gadget that causes a constant rate
%of damage to the loading bar that is offset by the gain if and only if all clauses are satisfied. We show the specific gadgets
%we use in our reduction in Fig.~\ref{fig:gadgets}.

This NP-hardness result does not need a very specific model of sources and
how they emit packets.
To understand whether the problem is in NP, we need a more specific model,
which is addressed in the next section.

%%%%%%%%%%%%%%%%%%%%%%%%%%%%%%%%%%%%%%%%%%%%%%%%%%%%%%%%%%%%%%

\section{Membership in $\Sigma_2^P$ and Hardness from Source Periodicity}
\label{sec:sigma-2}

\iffalse
membership in sigma 2 - algorithm
    check if you didn't lose
    check if you didn't win
    skip to the next event

co-np hard:
    checking if you didn't win/lose
\fi

In this section, we consider the effect of potentially differing periods
for different sources emitting packets.
Specifically, we show that carefully setting periods together with
the unbounded length of the game results in both NP- and coNP-hardness
of determining the outcome of {\gamename},
\emph{even when the player is not making moves}.
Conversely, we prove that the problem is in~$\Sigma_2^P$,
even allowing player input.

\subsection{Model and Problems}

More precisely, we model each source $s$ as emitting data packets of its color
into the grid with its own period $p_s$, after a small warmup time $w_s$
during which the source may emit a more specific pattern of packets.
{\gamename} as implemented has a warmup behavior of each source
initially (upon creation) waiting 5 seconds before the first emitted packet,
then emitting a packet after 2 seconds, after 1.999 seconds,
after 1.998 seconds, and so on,
until reaching a fixed period of 0.5 seconds.
This is technically a warmup period of 1881.25 seconds with
1500 emitted packets, followed by a period of 0.5 seconds.
%We can approximate the behavior as a warmup period of 5 seconds
%with no emitted packets followed by a period of 2 seconds.

In the \defn{simulation} problem, we are given the initial state of the grid,
a list of timestamped \defn{events} for
when each source emits a packet during its warmup period,
when each source starts periodic behavior,
and when the player will place each arrow.
We assume that timestamps are encoded in binary but
(to keep things relatively simple)
periods and warmup times are encoded in unary.
The problem then asks to predict whether the player wins;
that is, the loading bar reaches~$k^*$ before a loss occurs.

In the \defn{game} problem, we are not given the player's arrow placements.
If we allow nondeterministic algorithms, the game problem reduces to the
simulation problem: just guess what arrows we place, where, and at what times.

A natural approach to solving the simulation problem is to simulate the game
from the initial state to each successive event.
Specifically, given a state of the game (a grid with sinks, sources of varying periods and offsets, placed arrows, and the number of in-flight packets at each location) and a future timestamp \(t\),
we wish to determine the state of the game at time \(t\).
Using this computation, we can compute future states of the game quickly by
``skipping ahead'' over the time between events.
On an $m \times n$ grid, there are $O(m n)$ events, so we can determine the state of the game at any time \(t\) by simulating polynomially many phases between events.

This computation is easy to do.
Given the time \(t\), we can divide by each source's period and the period of each cycle of arrows to determine
how many packets each source produces and where the arrows route them ---
either to a sink which affects loading, off the grid, stuck in a cycle,
or in-flight outside a cycle ---
and then sum up the effects to obtain the new amount loaded and the number of packets at each location.
\xxx{Be more specific?}

However, being able to compute future states
does not suffice to solve the simulation and game problems because there might be an intermediate time
where the loading amount drops below $0$ or reaches the target amount~$k^*$.
Nonetheless, this suffices to show that the problems are in \(\Sigma_2^P\),
by guessing a win time and verifying there are no earlier loss times:

\begin{lemma}
  The simulation and game problems are in \(\Sigma_2^P\).
\end{lemma}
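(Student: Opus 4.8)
The plan is to express both problems in the $\exists\,y\;\forall\,z$ form that defines $\Sigma_2^P$. For the game problem the existential block guesses the player's entire move list --- for each placed arrow its cell, color, direction, and timestamp --- together with a candidate \emph{win time} $t^*$; for the simulation problem the moves are already part of the input and only $t^*$ is guessed. The universal block then supplies a time $t'$, and the polynomial-time verifier accepts iff the loading amount $L(t^*)$ is at least $k^*$ and, when $t'\le t^*$, the loading amount $L(t')$ is at least $0$. Letting $t'$ range over all times up to $t^*$ makes this say exactly ``$L$ reaches $k^*$ at time $t^*$ and never fell below $0$ before then,'' i.e.\ (assuming $k_0<k^*$, the case $k_0\ge k^*$ being an immediate win) that the player wins. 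Each verifier call only computes $L$ at one prescribed time, which is precisely the polynomial-time ``skip ahead over the $O(mn)$ events'' computation described above, run on binary-encoded numbers.

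The delicate point is that the guessed objects --- the move timestamps, $t^*$, and $t'$ --- can be taken to be strings of polynomial length, even though the numbers in play (the loading amount, the packet counts that accumulate inside cycles, and the relevant timestamps) can be as large as $2^{\mathrm{poly}}$: $k^*$ is written in binary, and although the periods $p_s$ are unary their least common multiple $\Lambda$ can be exponential. After scaling time so that all event timestamps are integers, there are only polynomially many events (warmup is unary; there are at most $mn$ arrows), so the last of them occurs at a time of polynomial bit-length. For $t$ more than $O(mn)$ past the last player move, every non-cycling in-flight packet has resolved and every source emits along a fixed trajectory, so source $s$ contributes a constant net rate $\varepsilon_s/p_s$ with $\varepsilon_s\in\{-1,0,1\}$, and $L(t)$ equals $rt+c$ up to an additive oscillation of absolute value at most the number of sources, where $r=\sum_s\varepsilon_s/p_s$ is a multiple of $1/\Lambda$, hence $r\ge1/\Lambda$ whenever $r>0$. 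From this linear-up-to-bounded-oscillation shape, $L$ first attains a value $\ge k^*$ after the last move (if ever) within $O(k^*\Lambda)$ further steps when $r>0$, and within $O(\Lambda)$ steps when $r\le0$ (there $L$ is bounded above and eventually strictly decreasing); so a first win time, if one exists, has polynomial bit-length. The same periodic-plus-linear-drift picture bounds how late the player ever needs to act: past the settling time the only evolving quantities are the loading amount and the cycle populations, both moving at a rate that is at most polynomially slow ($|r|\ge1/\Lambda$ when nonzero, cycle fill rate $\ge1/p_s$), and the player gains nothing from releasing a blob exceeding about $k^*$ packets or from waiting for a bar value exceeding about $\max(k^*,|k_0|)$, so each useful delay is at most $2^{\mathrm{poly}}$ and, over $\le mn$ moves, all timestamps stay of polynomial bit-length.

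Assembling the pieces, ``the player can win'' is equivalent to: there exist a move list and a $t^*$, both of polynomial bit-length, such that for every $t'\le t^*$ of polynomial bit-length the polynomial-time-computable loading amounts satisfy $L(t^*)\ge k^*$ and $L(t')\ge0$ --- a $\Sigma_2^P$ statement, and dropping the existential over moves yields the simulation problem. I expect the main obstacle to be exactly this bit-length bookkeeping: proving rigorously that after a polynomially long prefix the loading curve is linear with bounded oscillation, and that no beneficial player move ever needs to be postponed past singly-exponential time, so that the win time and the move timestamps can honestly be guessed as polynomial-length strings while the inner simulation stays polynomial-time.
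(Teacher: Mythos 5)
Your proposal is correct and follows essentially the same route as the paper: existentially guess the arrow placements (for the game problem) and a win time, verify the win with the polynomial-time ``skip ahead'' state computation, and use the universal/coNP layer to check that no earlier time is a loss. The bit-length bookkeeping you flag --- bounding the first win time and the useful move timestamps by $2^{\mathrm{poly}}$ via the eventual periodic-plus-drift behavior --- is a real issue that the paper's proof silently elides, and your treatment of it is a reasonable (if informal) way to close that gap rather than a departure from the paper's argument.
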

\begin{proof}
  The player wins if there exists a time with a win such that all smaller times are not losses.
  To solve the simulation problem,
  nondeterministically guess the winning time and verify that it is a win
  by computing the state at that time.
  Then check using a coNP query that there was no loss before that time,
  again using the ability to quickly compute states at individual timestamps.

  To solve the game problem, we first existentially guess the details of the arrow placements,
  then solve the resulting simulation problem as before.
\end{proof}

An easier case is when the source periods are all the same
after warmup, as implemented in the real game.
Theorem~\ref{thm:NP-hard36} proved this version of the game NP-hard,
and we can now show that it is NP-complete:

\begin{lemma}
  If all sources have the same polynomial-length period after a
  polynomial number $t_p$ of time steps,
  then the simulation problem is in P and the game problem is in NP.
\end{lemma}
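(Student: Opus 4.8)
The plan is to bound the total running time of the naive "skip between events" simulation and show it is polynomial once all periods coincide. First I would recall the structure established in the preceding discussion: there are $O(mn)$ events, and between consecutive events the grid configuration (sinks, arrows, sources, source periods) is fixed, so the dynamics are purely periodic. The key observation is that after the warmup ends at time $t_p$, the global behavior of the system is periodic with period equal to the common source period $P$ (or a small common multiple of $P$ and the lengths of the arrow cycles, each of which is $O(mn)$, so the combined period $\Lambda$ is still polynomial). Within one such period $\Lambda$, the net change to the loading bar is some fixed integer $\delta$ that can be computed in polynomial time by the state-computation procedure already described (divide $\Lambda$ by each relevant period, sum the sink contributions, subtract the off-grid and wrong-sink losses).

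The main step is then handling the "intermediate crossing" issue that was flagged as the obstacle to putting the general problem in P. After the last event, we are in a single periodic phase. Compute the loading value at each of the finitely many "interesting" sub-times within one period $\Lambda$ (there are only polynomially many, one per packet-arrival event at a sink or grid boundary, and each period contributes $O(mn)$ such sub-events since periods are polynomial-length in unary). Let $L_{\min}$ and $L_{\max}$ be the minimum and maximum loading values attained over one representative period, measured relative to the value at the start of that period. If $\delta \le 0$: the loading bar never exceeds its first-period maximum by more than a bounded amount, so the game is won (reaches $k^*$) within the first few periods if at all, and it is lost (drops below $0$) either within the first few periods or never — in either case we can decide it by simulating $O(1)$ periods explicitly, i.e.\ polynomial time. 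If $\delta > 0$: the bar trends upward, so a win is eventually forced; we must check only whether a loss (drop below $0$) happens before that win. Since the bar's minimum within period $j$ is roughly $(j-1)\delta + L_{\min}$, which is increasing in $j$, any loss must occur in one of the first $\lceil (L_{\max}-L_{\min})/\delta\rceil + O(1)$ periods — again polynomially many — so we simulate those explicitly and otherwise report a win at the first period where the running value meets $k^*$. A symmetric analysis handles the pre-$t_p$ warmup phases, which are only polynomially many events anyway, each separating periodic sub-phases whose within-phase extrema are computed the same way.

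Combining: the entire timeline decomposes into $O(mn)$ event-phases before and after warmup, and within the unbounded final phase we need only examine $\mathrm{poly}(m,n,\text{input size})$ periods to detect the first win or first loss. Each period's worth of sub-events is computed in polynomial time using the state-computation subroutine. Hence the simulation problem is in P. For the game problem, by the reduction noted earlier — nondeterministically guess the arrow placements (their number, positions, colors, directions, and timestamps, all of polynomial size), then run the polynomial-time simulation decider — we get membership in NP.

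The part I expect to require the most care is the bookkeeping that bounds the number of periods to simulate in the $\delta > 0$ case: one must argue rigorously that the per-period minimum is monotonically (and linearly) increasing, so that "no loss in the first $N$ periods" implies "no loss ever," where $N$ is polynomial. This needs the observation that the state at the start of period $j+1$ differs from the state at the start of period $j$ only in the loading-bar value (by exactly $\delta$) — the in-flight packet configuration recurs identically — which follows from the periodicity of the post-warmup dynamics and is the one place where encoding periods in unary is genuinely used (so that $\Lambda$ and the count of within-period sub-events are polynomial rather than exponential).
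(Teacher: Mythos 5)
Your proposal follows essentially the same route as the paper: decompose the timeline into the $O(mn)$ phases between events, observe that within each phase the loading-bar increments become periodic with the common source period after a transient no longer than the maximum packet path length, compute the per-period drift $\delta$ and the within-period extrema, and conclude that any threshold crossing must occur within polynomially many periods that can be simulated explicitly; the game problem then reduces to the simulation problem by guessing the arrow placements. Your treatment of the $\delta > 0$ case is in fact more explicit than the paper's one-line version.

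The one spot where your write-up does not quite work as stated is the handling of arrow cycles. You claim the system is periodic with period $P$ ``or a small common multiple of $P$ and the lengths of the arrow cycles, each of which is $O(mn)$, so the combined period $\Lambda$ is still polynomial,'' and later that ``the in-flight packet configuration recurs identically.'' Neither branch of that disjunction holds: with period $P$ alone, packets trapped in cycles whose lengths do not divide $P$ do not return to their positions, so the full in-flight configuration does not recur; and the least common multiple of $P$ with cycle lengths up to $O(mn)$ can be exponential in $mn$ (cycles of distinct prime lengths), so $\Lambda$ need not be polynomial. The paper's fix is to note that full-state recurrence is unnecessary: packets stuck in cycles never reach a sink or leave the grid and hence never change the loading bar, so only the sequence of loading-bar increments must be periodic --- and it is, with period exactly $P$, since every loading-bar-relevant packet follows a path of length $O(mn)$ and the sources emit with common period $P$. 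With that substitution, the rest of your argument (the extrema/drift bookkeeping and the NP upper bound for the game problem) goes through.
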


\begin{proof}
  In this case, we can check for wins or losses in each phase between events by explicitly simulating longer than all packet paths,
  at which point the loading bar value becomes periodic with the common source period.
  (Cycles of arrows may have different periods but these do not affect the loading bar, and thus do not matter when checking for wins and losses.)
  We skip over each phase, checking for win or loss along the way.
  If the game continues past the last event, we measure the sign of the net score change over the period.
  If it is positive, the player will eventually win;
  if it is negative, the player will eventually lose; and
  if it is zero, the game will go on forever.
\end{proof}

%\begin{corollary}
%	\gamename{} with sources of equal period is in NP, thus is NP-complete.
%\end{corollary}

In the remainder of this section, we consider the case where each source can be assigned any integer period, and the period does not change over time.

\subsection{Periodic Sum Threshold Problem}

With varying source periods, the challenge is that the overall periodic
behavior of the game can have an extremely large (exponential) period.

We can model this difficulty via the
\defn{Periodic Sum Threshold Problem}, defined as follows.
We are given a function $f(x) = \sum_{i} g_i(x)$
where each $g_i$ has unary integer period $T_i$
and unary maximum absolute value $M_i$.
In addition, we are given a unary integer $\tau > 0$
and a binary integer time $x^*$.
The goal is to determine whether there exists an integer $x$
in $[0, x^*)$ such that $f(x) \geq \tau$.
(Intuitively, reaching $\tau$ corresponds to winning.)

\begin{theorem}
\label{thm:pstp-np-complete}
  The Periodic Sum Threshold Problem is NP-complete,
  even under the following restrictions:
  \begin{enumerate}
  \item \label{prop:one-hot}
    Each $|g_i|$ is a one-hot function, i.e.,
    $g_i(x) = 0$ everywhere except for exactly one $x$ in its period
    where $g_i(x) = \pm 1$.
  \item \label{prop:lambda}
    We are given a unary integer $\lambda < 0$ such that
    $f(x) > \lambda$ for all $0 \leq x < x^*$ and $f(x^*) \leq \lambda$.
    (Intuitively, dipping down to $\lambda$ corresponds to losing.)
  \end{enumerate}
\end{theorem}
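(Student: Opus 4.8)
The plan is to prove NP-completeness by reduction from 3SAT (or equivalently 3DSAT from the previous section), encoding a truth assignment through residues modulo small primes. Membership in NP is easy: to certify a \emph{yes} instance, guess the witness $x \in [0,x^*)$ in binary, then evaluate each $g_i(x)$ in polynomial time (since each $T_i$ is given in unary, so $x \bmod T_i$ is computable, and $M_i$ is polynomially bounded), sum them, and check $f(x) \geq \tau$. For the hardness direction, the key idea is the Chinese Remainder Theorem: pick $n$ distinct small primes $q_1, \dots, q_n$ (the $k$-th prime is $O(k \log k)$, so all are polynomially bounded and can be written in unary), one per Boolean variable $v_k$, and designate residue $0$ modulo $q_k$ as ``$v_k$ true'' and any fixed nonzero residue (say $1$) as ``$v_k$ false.'' An integer $x$ then encodes an assignment via the pattern of which $q_k$ divide it; set $x^* = \prod_k q_k$ so that $[0,x^*)$ ranges over all residue combinations, i.e., all assignments together with ``garbage'' values where $x \bmod q_k \notin \{0,1\}$.

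First I would build, for each clause $C_j = (\ell_{j,1} \vee \ell_{j,2} \vee \ell_{j,3})$, a collection of one-hot functions whose sum contributes a large positive bonus to $f(x)$ exactly when $x$ (restricted to the relevant three primes) encodes an assignment satisfying $C_j$. A clean way to get a one-hot indicator of a single residue class $x \equiv r \pmod{q}$ is to use a $g_i$ with period $T_i = q$ that is $+1$ at $x = r$ and $0$ elsewhere; summing three such indicators (for the three literals of $C_j$, each firing on its satisfying residue) and thresholding detects satisfaction, but since our building blocks are one-hot and we can only add them, I would instead realize a ``clause is satisfied'' bonus as a sum over the satisfying residue-triples modulo $q_{j,1} q_{j,2} q_{j,3}$ — this is a constant ($\le 2^3$) number of residue classes modulo a polynomially bounded modulus, each an indicator expressible as a sum of one-hot $g_i$'s of periods dividing that modulus (or, more simply, using period equal to the product, which is still polynomial since the three primes are small). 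Second, I would add a uniform negative drift: a constant function $-K$ (itself a sum of one-hot pieces, or absorbed into the threshold) plus ``penalty'' one-hot functions that fire on the garbage residues $x \bmod q_k \notin\{0,1\}$, so that only genuine assignments have a chance of clearing the bar. Set $\tau$ so that $f(x) \geq \tau$ iff $x$ is a valid assignment satisfying \emph{all} $m$ clauses. Finally, to enforce restriction~\ref{prop:lambda}, I would append to the construction a ``doomsday'' block active only at the single time $x^*$: arrange the functions so that $f$ stays strictly above some $\lambda < 0$ throughout $[0, x^*)$ — achievable since each $g_i$ has bounded amplitude and we control the baseline — while at $x = x^*$ (where, by CRT with $x^* = \prod q_k$, we have $x \equiv 0$ modulo every $q_k$, a predictable residue pattern) a large one-hot negative spike drives $f(x^*) \le \lambda$. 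Restriction~\ref{prop:one-hot} is maintained throughout because every gadget is assembled purely from one-hot $g_i$'s.

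The main obstacle I expect is controlling the \emph{global} behavior of $f$ on all of $[0,x^*)$ simultaneously — in particular, ensuring $f(x) > \lambda$ for every $x < x^*$ with $\lambda$ only polynomially large in magnitude (unary!), not just that satisfying assignments clear $\tau$. The danger is that at garbage values of $x$ many penalty terms could stack up and push $f$ far below any polynomial $\lambda$; conversely, I must make sure no garbage $x$ accidentally reaches $\tau$. Balancing these requires carefully bounding how many one-hot terms can be simultaneously active at any single $x$: since there are polynomially many $g_i$ and each contributes at most $1$ in absolute value, $|f(x)| \le \sum_i M_i$ is automatically polynomial, which gives a valid unary $\lambda$ — but then making the \emph{gap} between ``satisfying'' and ``everything else'' robust enough to threshold cleanly, while keeping $\tau$ and $\lambda$ on the correct sides, is the delicate accounting. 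A secondary subtlety is guaranteeing $f(x^*) \le \lambda$ \emph{and} $f(x) > \lambda$ for all earlier $x$: this is handled by making the doomsday spike fire only at the residue pattern $(0,0,\dots,0)$ modulo $(q_1,\dots,q_n)$, which by CRT occurs in $[0, x^*]$ only at $x^*$ itself (and possibly $x=0$, which we can shift away or handle by an initial offset), so the spike is genuinely isolated.
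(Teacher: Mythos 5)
Your NP-membership argument and the core of your reduction (one small prime per variable, CRT-encoding assignments as residue patterns, and realizing each clause as a sum of at most $2^3$ one-hot functions over the satisfying residue triples modulo the product of its three primes) match the paper's proof. The genuine gap is in how you enforce restriction~\ref{prop:lambda}. Your ``doomsday'' spike cannot work as described: a single one-hot function that fires only at $x^* = \prod_k q_k$ and nowhere earlier would need period exceeding $x^*$, which is exponentially large while periods must be unary. If you instead decompose the spike into $n$ per-prime indicators of $x \equiv 0 \pmod{q_k}$ (so that all $n$ fire together only at multiples of $\prod_k q_k$), each component also fires individually at every $x$ with $x \equiv 0 \pmod{q_k}$ --- but $0$ is precisely the residue you designated as ``$v_k$ true.'' These components therefore penalize every assignment-encoding $x$ with at least one true variable, which simultaneously corrupts the threshold test at $\tau$ (a satisfying assignment no longer reliably reaches $\tau$) and the lower bound $f(x) > \lambda$ on $[0,x^*)$. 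Your proposed fix of ``shifting away'' $x=0$ does not address this, since the interference occurs at all the used residues, not just at $0$.

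The missing idea is to reserve an \emph{unused} residue class for the loss mechanism. The paper excludes the prime $2$, so each $p_i \geq 3$ has at least three residues: $1$ for true, $0$ for false, and $-1$ unused. It then adds $h_i(x) = -[x \equiv -1 \bmod p_i]$ for each prime, sets $\lambda = -n$ and $x^* = \prod_i p_i - 1$. Since $x^* \equiv -1 \pmod{p_i}$ for every $i$, all $n$ loss functions fire at $x^*$ and no clause function does (clause indicators only fire on residues in $\{0,1\}$), giving $f(x^*) = \lambda$; and by CRT no $x < x^*$ has all residues equal to $-1$, so $f(x) \geq \sum_i h_i(x) > \lambda$ there. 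Because the $h_i$ never fire on assignment-encoding values, they do not disturb the threshold argument. (Incidentally, your extra garbage-residue penalties and the constant drift $-K$ are unnecessary: a garbage $x$ at which all clause indicators fire already certifies satisfiability, since setting each garbage variable arbitrarily and reading off the non-garbage residues yields a satisfying assignment.) Without the unused-residue device, restriction~\ref{prop:lambda} is not established, so the proof as proposed is incomplete.
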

\begin{proof}
  First, the problem is in NP: we can guess $x \in [0,x^*)$
  and then evaluate whether $f(x) \leq c$ in polynomial time.

  For NP-hardness, we reduce from 3SAT.
  We map each variable $v_i$ to the $i$th prime number $p_i$ excluding~$2$.
  Using the Chinese Remainder Theorem,
  we can represent a Boolean assignment $\phi$ as a single integer $0 \le x < \prod_i p_i$
  where $x \equiv 1 \mod p_i$ when $\phi$ sets $v_i$ to true,
  and $x \equiv 0 \mod p_i$ when $\phi$ sets $v_i$ to false.
  (This mapping does not use other values of $x$ modulo~$p_i$.
  In particular, it leaves $x \equiv -1 \mod p_i$ unused,
  because $p_i \geq 3$.)

  Next we map each clause such as $C = (v_i \vee v_j \vee \neg v_k)$
  to the function
  $$g_C(x) = \max\{[x \equiv 1 \mod p_i], [x \equiv 1 \mod p_j], [x \equiv 0 \mod p_k]\},$$
  i.e., positive literals check for $x \equiv 1$
  and negated literals check for $x \equiv 0$.
  This function is $1$ exactly when $x$ corresponds
  to a Boolean assignment that satisfies~$C$.
  %We also add the resetting function $g'_C(x) = -g_C(x-1)$,
  %so that the sum $f_C(x) = g_C(x) + g'_C(x)$ is $1$ when $x$ corresponds to
  %a Boolean assignment where clause $C$ is true, and $0$ otherwise.
  This function has period $p_i p_j p_k$,
  whose unary value is bounded by a polynomial.
  Setting $\tau$ to the number of clauses,
  there is a value $x$ where the sum is $\tau$ if and only if
  there is a satisfying assignment for the 3SAT formula.
  (Setting $\tau$ smaller, we could reduce from Max 3SAT.)

  To achieve Property~\ref{prop:one-hot}, we split each $g_C$ function
  into a sum of polynomially many one-hot functions
  (bounded by the period).  In fact, seven functions per clause suffice,
  one for each satisfying assignment of the clause.

  To achieve Property~\ref{prop:lambda},
  for each prime $p_i$, we add the function
  $h_i(x) = -[x \equiv -1 \mod p_i]$.
  This function is $-1$ only for unused values of \(x\)
  which do not correspond to any assignment \(\phi\),
  so it does not affect the argument above.
  Setting $-\lambda$ to the number of primes (variables)
  and $x^* = \prod_i p_i - 1$,
  we have $\sum_i h_i(x^*) = \lambda$
  because $h_i(x^*) \equiv -1 \mod p_i$ for all~$i$,
  while $\sum_i h_i(x) > \lambda$ for all $0 \leq x < x^*$.
  All used values \(x\) are smaller than \(x^*\).

  In total, $f(x)$ is the sum of the constructed functions
  and we obtain the desired properties.
\end{proof}

%To apply this result to \gamename{}, we note the following property of the above proof.

\subsection{Simulation Hardness for Two Colors}

We can use our hardness of the Periodic Sum Threshold Problem
to prove hardness of simulating {\gamename}, even without player input.

\begin{theorem} \label{thm:sim NP-hard}
  Simulating {\gamename} and determining whether the player wins
  is NP-hard, even with just two colors.
\end{theorem}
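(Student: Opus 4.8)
The plan is to reduce from the Periodic Sum Threshold Problem of Theorem~\ref{thm:pstp-np-complete}, using its restricted form with one-hot functions $g_i$ (Property~\ref{prop:one-hot}) and a losing threshold $\lambda$ (Property~\ref{prop:lambda}). The key observation is that a one-hot function of period $T_i$ that fires $+1$ at residue $r_i$ corresponds naturally to a source with period $T_i$ and warmup offset $r_i$: such a source emits a packet at exactly the times $x \equiv r_i \pmod{T_i}$, and if we route that packet stream to a correct-colored sink, each emission bumps the loading bar by $+1$; a one-hot $-1$ function is realized by routing the stream off the grid (or into a wrong-colored sink) instead, for $-1$ per hit. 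Since the PSTP instance uses only two ``roles'' for functions (those contributing to $f$'s positive part and the $h_i$'s contributing negatively), two colors suffice — in fact we can get away with using the color structure only to keep distinct source streams from interfering. So I would build a grid with one source per function $g_i$, set each source's period to $T_i$ (in unary, as allowed by the simulation model) and its first-emission offset to $r_i$, route each positively-weighted stream to its matching sink along a short path and each negatively-weighted stream off the board, and set $k_0 = -\lambda$ and $k^* = \tau - \lambda$ (after the appropriate constant shift), so that the loading bar value at time $x$ equals $\sum_i g_i(x) - \lambda$ plus lower-order bookkeeping.

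The main technical steps, in order: (1)~describe the gadgets — a source feeding a straight run of arrows of its color into the sink (for $+1$ terms) and a source feeding a run of arrows that carries the stream to the boundary (for $-1$ terms), all of equal geometric length so that a packet emitted at time $x$ registers its $\pm1$ effect at time $x + \ell$ for a common delay $\ell$; (2)~argue packets from different sources never collide in a way that changes the count — here we use the extra colors, or simply disjoint routing corridors, so that arrows of one color are invisible to packets of another and no two same-color streams share a cell; (3)~verify the loading-bar trajectory: after the warmup/delay $\ell$, the bar at integer time $x$ is exactly $k_0 + \sum_i g_i(x) = \sum_i g_i(x) - \lambda$, which by Property~\ref{prop:lambda} stays strictly above $0$ for all $x < x^*$ and hits $\lambda$ (i.e.\ the bar hits $0$, a loss) at $x = x^*$ — so the only way to win is to reach $k^*$, i.e.\ to have $f(x) \ge \tau$, at some time in $[\ell, \ell + x^*)$; (4)~confirm that all quantities the simulation model wants in unary (periods $T_i$, warmup lengths) are polynomially bounded, while $x^*$ is given in binary and only affects the overall time horizon, which the simulation problem handles via its event-skipping machinery. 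Then ``the player wins'' in the constructed simulation instance iff the PSTP instance is a yes-instance.

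The step I expect to be the main obstacle is (2)–(3): making absolutely sure the construction has \emph{no spurious events} that perturb the loading bar between the honest $+1$/$-1$ hits. I need the packet speed, the path lengths, and the source periods to line up so that the sequence of sink-arrivals and off-board exits, as a function of continuous time, is precisely the superposition of the $g_i$'s sampled at the integer grid — no rounding artifacts, no two effects landing in the wrong order near a potential win/loss time, and no packet lingering in an unintended cycle. A clean way to handle this is to scale time so that all relevant lengths are integer multiples of the packet step, route every stream through its own dedicated straight corridor of the common length $\ell$ with no cycles at all (so the ``stuck in a cycle'' outcome never arises), and check that consecutive emissions from the same source are spaced by $T_i \ge 1$ steps, which is more than the one step a packet needs to clear the first cell. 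Once the trajectory is pinned down exactly, the equivalence with PSTP — and hence NP-hardness of simulation with two colors — follows, completing the proof of Theorem~\ref{thm:sim NP-hard}.
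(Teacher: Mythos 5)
There is a genuine gap, and it is not in the steps you flagged as risky (timing alignment, collisions) but in step (3), which conflates an instantaneous quantity with a cumulative one. The loading bar is a running total: every packet that reaches a sink (or leaves the board) changes the bar \emph{permanently} by $\pm 1$. If you route a source of period $T_i$ firing at residue $r_i$ into its matching sink, then by time $x$ that source has contributed roughly $\lfloor (x-r_i)/T_i\rfloor + 1$ to the bar, not $g_i(x) = [x \equiv r_i \bmod T_i]$. So your construction makes the bar equal $k_0 + \sum_i \sum_{y \le x} g_i(y)$, the cumulative sum, whereas the Periodic Sum Threshold Problem asks about the instantaneous value $f(x) = \sum_i g_i(x)$. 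These are completely different: a single positive one-hot function would drive your bar to any threshold eventually, regardless of whether $f$ ever reaches $\tau$. The claimed identity ``the bar at integer time $x$ is exactly $k_0 + \sum_i g_i(x)$'' therefore fails, and with it the equivalence with PSTP.

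The missing idea is a \emph{pulse} gadget that makes each one-hot spike transient. The paper realizes $g_i$ with value $+1$ at $x_i$ by creating \emph{two} sources of the same period $T_i$ --- one blue and one red --- and routing both to the \emph{blue} sink, with path lengths and warmup offsets arranged so that the red packet (wrong color, hence $-1$) arrives exactly one time unit after the blue packet (correct color, hence $+1$). The net effect is $+1$ at time $x_i$ and back to $0$ at time $x_i + 1$, i.e., a one-step pulse matching the one-hot function; the $-1$ case swaps the colors. This is also precisely where ``two colors'' enters: the cancellation needs a wrong-colored packet hitting the same sink, not separate corridors. Your reduction could be repaired by adopting this pairing, after which your remaining bookkeeping (unary periods, common path length $\ell$, thresholds $k_0$ and $k^*$ tied to $\lambda$ and $\tau$) is essentially the paper's.
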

\begin{proof}
  We reduce from the Periodic Sum Threshold Problem proved NP-complete
  by Theorem~\ref{thm:pstp-np-complete}.

  For each function $g_i$ with one-hot value $g_i(x_i) = 1$ and period~$T_i$,
  we create a blue source $b_i$ and a red sources $r_i$,
  of the same emitting period~$T_i$,
  and route red and blue packets from these sources to the blue sink.
  By adjusting the path lengths and/or the warmup times of the sources,
  we arrange for a red packet to arrive one time unit after each blue packet
  which happens at times $\equiv x_i \mod T_i$.
  Thus the net effect on the loading bar value is $+1$ at time $x_i$
  but returns to $0$ at time $x_i + 1$.
  Similarly, for each function $g_i$ with one-hot value $g_i(x_i) = -1$,
  we perform the same construction but swapping the roles of red and blue.

  Setting $k_0 = -\lambda-1 \geq 0$, the loading bar goes negative
  (and the player loses)
  exactly when the sum of the functions $g_i$ goes down to~$\lambda$.
  Setting $k^* = k_0 + \tau$, the loading bar reaches $k^*$
  (and the player wins)
  exactly when the sum of the functions $g_i$ goes up to~$\tau$.
\end{proof}

This NP-hardness proof relies on completely different aspects of the game
from the proof in Section~\ref{sec:NP-hardness}: instead of using player input,
it relies on varying (but small in unary) periods for different sources.
More interesting is that we can also prove the same problem coNP-hard:

\begin{theorem}
  Simulating {\gamename} and determining whether the player wins
  is coNP-hard, even with just two colors.
\end{theorem}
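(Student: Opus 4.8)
The plan is to reduce from the \emph{complement} of the Periodic Sum Threshold Problem, which is coNP-complete by Theorem~\ref{thm:pstp-np-complete}, essentially re-running the reduction of Theorem~\ref{thm:sim NP-hard} but with the roles of winning and losing exchanged. So I am given a Periodic Sum Threshold instance $f = \sum_i g_i$, a unary $\tau > 0$, and a binary $x^*$ satisfying Properties~\ref{prop:one-hot} and~\ref{prop:lambda}, together with the guaranteed unary $\lambda < 0$ for which $f(x) > \lambda$ on $[0,x^*)$ and $f(x^*) \le \lambda$; the complementary question is whether $f(x) < \tau$ for every $x \in [0, x^*)$. I will build a two-color {\gamename} instance whose loading bar value equals $k_0 - f(t)$ at every time $t \ge 0$, for a starting value $k_0$ fixed below.

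To realize $-f$ in the game, I would use, for each one-hot function $g_i$, exactly the source pair that the proof of Theorem~\ref{thm:sim NP-hard} uses for $-g_i$ --- equivalently, run that construction with the two colors interchanged throughout. Concretely, for $g_i$ with $g_i(x_i) = +1$ and period $T_i$, place a red source and a blue source of period $T_i$, route both to the blue sink, and adjust path lengths and warmups so that a red packet reaches the (wrong-color) sink --- incurring one unit of damage --- at the times $\equiv x_i \pmod{T_i}$, with a blue packet arriving one time unit later to cancel it; for $g_i(x_i) = -1$, swap the colors of the two sources. As in Theorem~\ref{thm:sim NP-hard}, the net cumulative effect of gadget $i$ on the loading bar at time $t$ is $-g_i(t)$, crossover gadgets route all streams into the two sinks, and summing over $i$ yields a loading bar value of exactly $k_0 - f(t)$ for every $t \ge 0$.

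I then set $k_0 = \tau - 1 \ge 0$ and $k^* = k_0 - \lambda > 0$, and argue by cases. For every $t < x^*$ we have $f(t) > \lambda$, so the bar value $k_0 - f(t)$ is strictly below $k_0 - \lambda = k^*$; hence no win can occur strictly before time $x^*$. If the given instance is a \emph{yes}-instance of the complement problem, then $f(t) < \tau = k_0 + 1$ for all $t < x^*$, so the bar value $k_0 - f(t) > -1$ stays nonnegative up to time $x^*$, and at time $x^*$ it rises to $k_0 - f(x^*) \ge k_0 - \lambda = k^*$: since the bar changes by one unit per arriving packet and was strictly below $k^*$ at all earlier times, it hits $k^*$ and the player wins. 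If instead it is a \emph{no}-instance, let $x_0 < x^*$ be the least time with $f(x_0) \ge \tau$; then $f(t) < \tau$ for $t < x_0$ keeps the bar nonnegative there, while at time $x_0$ the bar drops to $k_0 - f(x_0) \le k_0 - \tau = -1 < 0$, so the player loses --- strictly before time $x^*$, and hence before any possible win. Thus the player wins if and only if $f(x) < \tau$ for all $x \in [0, x^*)$, and simulating {\gamename} is coNP-hard even with two colors.

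The reduction is mostly a repackaging of Theorems~\ref{thm:pstp-np-complete} and~\ref{thm:sim NP-hard}; the one genuinely new observation is that, after the sign flip, the forced ``crash'' at time $x^*$ guaranteed by Property~\ref{prop:lambda} becomes a forced ``win'' at time $x^*$, realized precisely when no threshold-crossing witness (hence no loss) occurred earlier. The point I expect to need the most care is the ``hits $k^*$'' condition at time $x^*$: it is legitimate because the loading bar moves in unit steps as packets arrive and is strictly below $k^*$ at every time $t < x^*$, so it necessarily passes through the value $k^*$ exactly; a secondary check is that the warmup and path-length adjustments from Theorem~\ref{thm:sim NP-hard} really do produce bar value $k_0 - f(t)$ from time $0$ onward, including the initial transient.
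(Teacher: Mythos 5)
Your proposal is correct and follows essentially the same route as the paper: reduce from the complement of the Periodic Sum Threshold Problem, negate the gadgets of Theorem~\ref{thm:sim NP-hard} by swapping the two colors so the bar tracks $k_0 - f(t)$, and set $k_0 = \tau - 1$, $k^* = k_0 - \lambda$ so that Property~\ref{prop:lambda} turns the forced dip at $x^*$ into a forced win. Your explicit case analysis and the unit-step ``must pass through $k^*$'' observation are just a more detailed spelling-out of what the paper states tersely.
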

\begin{proof}
  We reduce from the complement of the Periodic Sum Threshold Problem,
  which is coNP-complete by Theorem~\ref{thm:pstp-np-complete}.
  The goal in the complement problem is to determine whether
  there is \emph{no} integer $x$ in $[0, x^*)$ such that $f(x) \geq \tau$.
  The idea is to negate all the values to flip the roles of winning and losing.
  
  For each function $g_i$, we construct two sources and wire them to a sink
  in the same way as Theorem~\ref{thm:sim NP-hard}, but negated:
  if $g_i(x_i) = \pm 1$, then we design the packets to have a net effect
  of $\mp 1$ at time $x_i$ and $0$ otherwise.

  Setting $k_0 = \tau-1$, the loading bar goes negative
  (and the player loses)
  exactly when the sum of the functions $g_i$ goes up to~$\tau$, i.e.,
  the Periodic Sum Threshold Problem has a ``yes'' answer.
  Setting $k^* = k_0 - \lambda$, the loading bar reaches $k^*$
  (and the player wins)
  exactly when the sum of the functions $g_i$ goes down to $\lambda$, i.e.,
  the Periodic Sum Threshold Problem has a ``no'' answer.
\end{proof}

%%%%%%%%%%%%%%%%%%%%%%%%%%%%%%%%%%%%%%%%%%%%%%%%%%%%%%%%%%%%%%

\section{Characterizing Perfect Layouts}
\label{sec:perfect-layouts}

Suppose we are given a board which is empty except for the location of the
three data sinks.
Is it possible to place arrows such that all possible input packets
get routed to the correct sink?
We call such a configuration of arrows a \defn{perfect layout}.
In particular, such a layout guarantees victory,
regardless of the data sources.
In this section, we give a full characterization of boards
and sink placements that admit a perfect layout.
Some of our results work for a general number $c$ of colors,
but the full characterization relies on $c=3$.

\subsection{Colors Not Arrows}

We begin by showing that we do not need to consider the directions of the arrows, only their colors and locations in the grid.

Let \(B\) be a board with specified locations of sinks,
and let \(\partial B\) be the set of edges on the boundary of \(B\).
Suppose we are given an assignment of colors to the cells of \(B\)
that agrees with the colors of the sinks;
let \(C_i\) be the set of grid cells colored with color \(i\).
We call two cells of \(C_i\), or a cell of \(C_i\) and a boundary edge
$e \in \partial B$, \defn{visible} to each other if and only if
they are in the same row or the same column
and no sink of a color other than \(i\) is between them.
Let \(G_i\) be the graph whose vertex set is \(C_i \cup \partial B\),
with edges between pairs of vertices that are visible to each other.

\begin{lemma}
  \label{lem:perfect-colors}
  Let \(B\) be a board with specified locations of sinks.
  Then \(B\) admits a perfect layout if and only if it is possible to choose
  colors for the remaining cells of the grid such that, for each color~\(i\),
  the graph \(G_i\) is connected.
\end{lemma}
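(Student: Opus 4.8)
The plan is to prove both directions of the equivalence, using the key observation that once a packet of color $i$ enters a maximal "corridor" of cells in $C_i$ (a horizontal or vertical run bounded by walls, board edges, or sinks of other colors), the arrows of color $i$ inside it determine which cell the packet leaves from, and the reachable set of cells is governed exactly by the visibility graph $G_i$. So the directions of arrows are just a way of implementing transitions along edges of $G_i$.

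For the "only if" direction, I would start with a perfect layout and define $C_i$ to be the set of cells whose arrow has color $i$ (cells with sinks get their sink's color; empty cells can be assigned arbitrarily, say via arrows, since a perfect layout must fill every cell or at least can be completed without harm — I should check that a perfect layout can be assumed to have every non-sink cell filled, which follows because an empty cell would let some packet exit or stall). Then I claim each $G_i$ is connected: a color-$i$ packet entering at a boundary edge $e$ must reach the sink $s_i$; tracing its trajectory, each straight segment it travels lies within a single $G_i$-edge (cell-to-cell or edge-to-cell visibility), and each turn happens at a color-$i$ arrow, so the trajectory is a walk in $G_i$ from $e$ to $s_i$. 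Since this holds for every boundary edge, every vertex of $G_i$ that a packet could reach is connected to $s_i$; and I need that in fact every vertex is reachable — a cell of $C_i$ not reachable by any color-$i$ packet could be recolored without breaking perfection, so WLOG $G_i$ is connected. (Here the mild subtlety is handling cells no packet visits; the cleanest fix is to note we get to *choose* the coloring in the statement, so we only need *some* coloring making all $G_i$ connected, and the layout gives us one after pruning unreachable cells.)

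For the "if" direction, suppose we have a coloring with every $G_i$ connected. I would construct arrows as follows: for each color $i$, pick a spanning tree $R_i$ of $G_i$ rooted at the sink $s_i$. For each cell $v \in C_i$, its parent edge in $R_i$ points either to another cell or (if $v$'s parent is a boundary edge — but the root is $s_i$, so boundary edges are leaves) toward $s_i$ along a row or column; orient the color-$i$ arrow at $v$ in the direction of the first step toward its parent. The main thing to verify is consistency and termination: a color-$i$ packet at cell $v$ moves in the direction of $v$'s arrow until it either reaches the next cell on the path to the parent (where it finds that cell's color-$i$ arrow and continues up the tree) or passes straight through cells of other colors (which ignore it) — so it strictly ascends $R_i$ and reaches $s_i$ in finitely many turns, never exiting and never cycling. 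For a packet entering at boundary edge $e \in \partial B$: since $e$ is a vertex of $G_i$ and $G_i$ is connected, $e$ is visible to some cell $v \in C_i$ in the same row/column with no other-color sink between; the packet travels straight from $e$ and the first color-$i$ arrow it hits is at some such visible cell, after which it ascends to $s_i$. (I must make sure the first color-$i$ cell the packet meets is genuinely $G_i$-visible from $e$ — it is, because "visible" only forbids other-color *sinks* in between, and other-color *arrows* are transparent.)

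The step I expect to be the main obstacle is the boundary-edge handling in the "if" direction, specifically guaranteeing that a packet entering from an arbitrary boundary edge actually engages the color-$i$ routing rather than shooting straight across the board and out the far side. The fix is to ensure that along every maximal other-color-sink-free segment from a boundary edge, there is at least one color-$i$ cell whose arrow turns the packet inward; this can be arranged by choosing, when building $R_i$, that the arrow at the color-$i$ cell closest to $e$ along that line does not point straight out through $e$'s opposite side. Since $G_i$ is connected and $e$ has degree $\ge 1$, such a cell exists, and a small local argument (or a careful choice of spanning tree orientation, e.g. a BFS tree so that cells near the boundary point "inward and toward $s_i$") makes every boundary packet get captured. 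I would also need to remark that a packet can pass through a sink of its *own* color only at the end (it's absorbed), and that passing a *wrong-color* sink is exactly what "visible" accounts for, so no packet is ever wrongly absorbed en route. The remaining details — that distinct colors' arrows don't interfere (each cell has one arrow of one color) and that the construction completes every cell — are routine.
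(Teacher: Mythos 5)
Your proposal follows essentially the same route as the paper's proof in both directions: for ($\Rightarrow$), read the coloring off the arrows and observe that each packet trajectory is a walk in $G_i$ ending at the sink; for ($\Leftarrow$), root a spanning tree of $G_i$ at the sink and point each cell's arrow toward its parent. (The paper handles the ``cells no packet ever visits'' wrinkle by taking a layout with the minimum number of arrows rather than by pruning, and it colors empty cells like an \emph{adjacent} cell --- your ``assigned arbitrarily'' is too loose, since an arbitrary color could leave that cell isolated in its graph --- but these are cosmetic differences.)

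Where I would push back is on your diagnosis of the main obstacle. The boundary issue you flag is not actually a problem: a boundary edge $e$ is itself a vertex of $G_i$, so connectivity gives it a visible neighbor $v \in C_i$ lying in the row or column the packet travels along, with no wrong-colored sink before $v$; hence the entering packet meets a color-$i$ arrow (or the correct sink) before it can exit the far side or be wrongly absorbed, and no special tree orientation is needed. The step that genuinely needs an argument is the one you assert in passing: that the packet ``strictly ascends $R_i$.'' For an \emph{arbitrary} spanning tree this is false: a packet leaving $u$ toward $p(u)$ can be intercepted by another color-$i$ cell $w$ lying strictly between them (visibility is blocked only by wrong-colored sinks, not by color-$i$ cells), and $w$ need not be an ancestor of $u$. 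With five suitably placed cells --- four corners of a rectangle, plus one more so the parent pointers stay acyclic --- one obtains a legitimate spanning tree whose induced arrows send a packet around the rectangle forever, so the resulting layout is not perfect. The paper's proof elides this point as well. The fix is cheap: take the spanning tree inside the subgraph of $G_i$ that uses only edges between \emph{consecutive} visible color-$i$ cells (no other color-$i$ cell between them). This subgraph is connected whenever $G_i$ is, because any visibility edge can be subdivided at the intervening color-$i$ cells, and in such a tree the first color-$i$ arrow a packet meets really is the parent, so the ascent-to-the-root argument goes through.
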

\begin{proof}
  ($\Longrightarrow$)
  Without loss of generality, assume that
  the perfect layout has the minimum possible number of arrows.
  Color the cells of the board with the same colors as the sinks and arrows in the perfect layout.
  (If a cell is empty in the perfect layout,
  then give it the same color as an adjacent cell;
  this does not affect connectivity.)
  Fix a color \(i\).
  Every boundary edge is connected to the sink of color \(i\)
  by the path a packet of color \(i\) follows when entering from that edge.
  (In particular, the path cannot go through a sink of a different color.)
  By minimality of the number of arrows in the perfect layout,
  every arrow of color \(i\) is included in such a path.
  Therefore \(G_i\) is connected.

  ($\Longleftarrow$)
  We will replace each cell by an arrow of the same color to form a perfect layout.
  Namely, for each color~\(i\), choose a spanning tree of \(G_i\)
  rooted at the sink of color~\(i\),
  and direct arrows from children to parents in this tree.
  By connectivity, any packet entering from a boundary edge
  will be routed to the correct sink, walking up the tree to its root.
\end{proof}

\subsection{Impossible Boards}
\label{sec:impossible-boards}

Next we show that certain boards cannot have perfect layouts.
First we give arguments about boards containing sinks
too close to the boundary or each other.
Then we give an area-based constraint on board size.

\begin{lemma}
  \label{lem:sink-distance}
  If there are fewer than $c-1$ blank cells in a row or column
  between a sink and a boundary of the grid,
  then there is no perfect layout.
\end{lemma}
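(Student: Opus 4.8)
The plan is to invoke Lemma~\ref{lem:perfect-colors} and reach a contradiction by a pigeonhole argument on colors. Writing $i$ for the color of the sink $\sigma$ in the hypothesis and $R$ for the witnessing row or column, I would first reduce to the case that $\sigma$ is the sink closest to the relevant boundary along $R$ --- replacing $\sigma$ by any closer sink only shrinks the blank-cell count, so the hypothesis still holds --- so that the cells of $R$ strictly between $\sigma$ and the boundary are blank; call them $x_1, \dots, x_b$ in order (so $b < c-1$), and let $e \in \partial B$ be the boundary edge at which $R$ leaves the grid, with $x_1$ incident to $e$ and $\sigma$ the cell of $R$ just past $x_b$. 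Assuming for contradiction that a perfect layout exists, Lemma~\ref{lem:perfect-colors} furnishes a coloring of all cells, agreeing with the sinks, for which each $G_j$ is connected.

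The key step is the visibility analysis at $e$. Since $e$ borders the grid along $R$, every cell visible to $e$ lies in $R$. Fix a color $j \neq i$: because $\sigma$ is a sink of color $i \neq j$, it blocks $j$-visibility along $R$, so no cell of $C_j$ on the far side of $\sigma$ from $e$ is visible to $e$; thus the cells of $C_j$ visible to $e$ lie among $x_1, \dots, x_b$. As $G_j$ has at least two vertices ($e$ and the color-$j$ sink) and is connected, $e$ is not isolated in $G_j$, so some $x_t$ has color $j$. Hence all $c-1$ colors $j \neq i$ occur among $x_1, \dots, x_b$, and since each cell bears a single color this forces $b \geq c-1$, contradicting $b < c-1$.

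I expect no real obstacle beyond getting the visibility bookkeeping right: that $e$ sees only cells of the single line $R$, and that the wrong-color sink $\sigma$ cuts off $e$'s view for every color but $i$; the pigeonhole is then immediate, and the degenerate cases ($b = 0$, i.e.\ a sink touching the boundary, and $c = 1$, where the hypothesis is vacuous) need no separate treatment. As an alternative that sidesteps Lemma~\ref{lem:perfect-colors}, I could argue directly on packets: a color-$j$ packet entering at $e$ runs straight along $R$, ignoring arrows of other colors, so unless one of $x_1, \dots, x_b$ carries a color-$j$ arrow to deflect it, the packet reaches the wrong-color sink $\sigma$ and causes damage --- so again $c-1$ distinct arrow colors would have to fit into $b < c-1$ cells.
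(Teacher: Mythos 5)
Your proof is correct, but your primary argument takes a different route from the paper's. The paper's entire proof is the two-line packet argument you relegate to your final ``alternative'': a packet of each of the other $c-1$ colors entering at that boundary edge travels straight along the row or column, ignoring arrows of other colors, so each such color needs its own arrow in the intervening cells to keep it from hitting the wrong-color sink --- forcing at least $c-1$ cells. Your main argument instead routes through Lemma~\ref{lem:perfect-colors}: the boundary edge $e$ must be non-isolated in each $G_j$ with $j \neq i$, its view is confined to the capped line and cut off by the wrong-color sink $\sigma$, so each of the $c-1$ other colors must occupy one of the $b$ intervening cells. Both are sound (and the reduction to the nearest sink, plus the $b=0$ check, are handled correctly); the graph-theoretic version is a touch heavier, since it leans on the $(\Rightarrow)$ direction of Lemma~\ref{lem:perfect-colors} and on the implicit convention that a boundary edge is visible only to cells of the single row or column it caps, whereas the direct packet argument needs no machinery at all. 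What your version buys is uniformity with the rest of Section~\ref{sec:perfect-layouts}, where obstructions are phrased as connectivity failures of the $G_j$; what the paper's version buys is brevity and independence from the coloring characterization.
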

\begin{proof}
  A perfect layout must prevent packets of the other \(c-1\) colors
  entering at this boundary from reaching this sink;
  this requires enough space for \(c-1\) arrows.
\end{proof}

\begin{lemma}
  \label{lem:impossible-c3}
  For $c=3$, a board has no perfect layout if either
  (as shown in Figure~\ref{fig:impossible})
  \begin{enumerate}[(a)]
    \item a data sink is two cells away from three boundaries and adjacent to another sink;
    \item a data sink is two cells away from two incident boundaries and is adjacent to two other sinks;
    \item a data sink is two cells away from two opposite boundaries and is adjacent to two other sinks; or
    \item a data sink is two cells away from three boundaries and is one blank cell away from a pair of adjacent sinks.
  \end{enumerate}
\end{lemma}

\begin{figure}
  \centering
  \subcaptionbox{}{\includegraphics[scale=0.75]{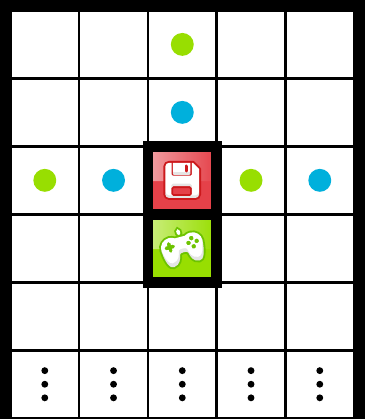}}\hfil
  \subcaptionbox{}{\includegraphics[scale=0.75]{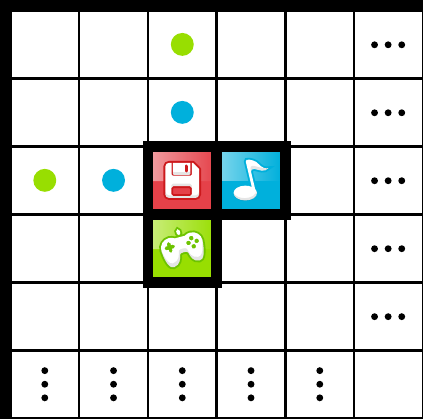}}\hfil
  \subcaptionbox{}{\includegraphics[scale=0.75]{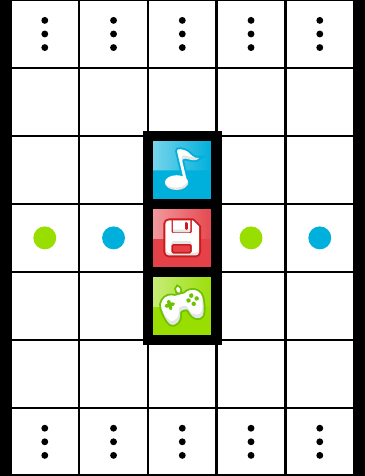}}\hfil
  \subcaptionbox{}{\includegraphics[scale=0.75]{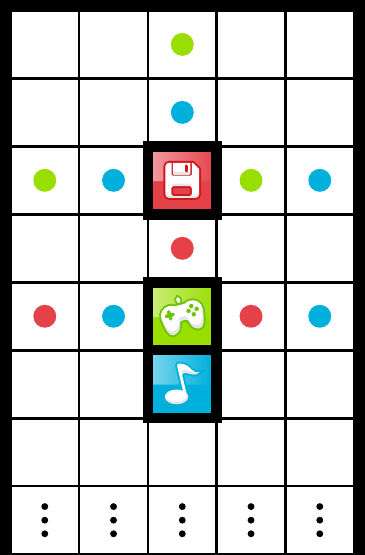}}
  \caption{Sink configurations with no perfect layout.
    Dots indicate arrows of forced colors
    (up to permutation within a row or column).}
  \label{fig:impossible}
\end{figure}

\begin{proof}
  Assume by symmetry that, in each case, the first mentioned sink is red.

  Cases (a), (b), and (c):
  The pairs of cells between the red sink and the boundary
  (marked with dots in the figure) must contain a green arrow and a blue arrow
  to ensure those packets do not reach the red sink.
  Thus there are no available places to place a red arrow
  in the same row or column as the red sink,
  so red packets from other rows or columns cannot reach the red sink.
  
  Case (d): The pairs of cells between the red sink and the boundary
  (marked with green and blue dots in the figure)
  must contain a green arrow and a blue arrow to ensure those packets
  do not collide with the red sink.
  Thus the blank square between the red sink and the other pair of sinks
  must be a red arrow pointing toward the red sink,
  to allow packets from other rows and columns to reach the red sink.
  Assume by symmetry that the sink nearest the red sink is green.
  As in the other cases, the pairs of cells between the green sink and the
  boundary must be filled with red and blue arrows.
  Thus there are no green arrows to route green packets from other
  rows or columns to the green sink.
  %Thus the square immediately above the green sink must be
  %a green arrow to route the green packets to the green sink.
  %This gives a contradiction.
\end{proof}

We now prove a constraint on the sizes of boards that admit a perfect layout.

\begin{lemma}
  \label{lem:size-constraint}
  Let \(c\) be the number of colors.
  Suppose there is a perfect layout on a board where \(m\) and \(n\) are respectively the number of rows and columns, and \(p\) and \(q\) are respectively the number of rows and columns that contain at least one sink.
  Then
  \begin{equation}
    \label{eqn:size-constraint}
    c(m + n) + (c-2)(p + q) \le m n - c.
  \end{equation}
\end{lemma}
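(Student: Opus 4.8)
The plan is to invoke Lemma~\ref{lem:perfect-colors}, which lets us replace the perfect layout by a coloring of all cells of $B$ for which every visibility graph $G_i$ is connected; fix such a coloring, with color classes $C_1,\dots,C_c$. These classes partition the $mn$ cells, and each (unique) sink of color $i$ lies in $C_i$, so $\sum_i |C_i| = mn$, and the strategy is to bound each $|C_i|$ below and sum. A first observation is that every boundary vertex $e\in\partial B$ is simplicial in $G_i$: its neighbors are the color-$i$ cells lying along the row or column of $e$ with no sink of another color between, and any two such cells likewise have no other-colored sink between them, hence are adjacent. Since no two boundary edges are visible to each other, $\partial B$ is an independent set, so deleting all of $\partial B$ from $G_i$ only reroutes paths around simplicial vertices; hence $G_i[C_i]$, the visibility graph restricted to the color-$i$ cells, is connected.

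Next I would partition each $C_i$ into \emph{row-blobs} and \emph{column-blobs}: in a fixed row, ``no sink of color $\neq i$ lies between'' is an equivalence relation on the color-$i$ cells of that row, whose classes (one per nonempty maximal stretch of the row between consecutive other-colored sinks) are the row-blobs, and column-blobs are defined symmetrically. Form the bipartite graph $H_i$ whose vertices are the row-blobs and column-blobs of color $i$, with one edge per cell of $C_i$ joining the row-blob and column-blob containing that cell. Distinct cells give distinct edges, so $H_i$ has exactly $|C_i|$ edges; and any path in $G_i[C_i]$ induces a walk in $H_i$, since consecutive cells of the path share either a row-blob or a column-blob, so $H_i$ is connected. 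Writing $a_i$ and $b_i$ for the numbers of row-blobs and column-blobs of color $i$, connectivity of $H_i$ gives $|C_i|\ge a_i+b_i-1$.

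Then I would bound $\sum_i a_i$ (and, symmetrically, $\sum_i b_i$) by counting \emph{forced} blobs. Connectivity of $G_i$ forces the leftmost and the rightmost stretch of each row $r$ to contain a color-$i$ cell — otherwise the boundary edge at that end of row $r$ is isolated in $G_i$ — and forces the stretch containing the color-$i$ sink to be nonempty whenever that sink lies in row $r$. A case analysis on the number $\ell_r$ of sinks in row $r$ then shows that, summed over all $c$ colors, row $r$ contributes at least $c$ forced row-blobs if $\ell_r=0$, and at least $\ell_r+2c-2$ if $\ell_r\ge1$ (the $c-\ell_r$ colors with no sink in row $r$ force $2$ blobs each; among the colors whose sink is in row $r$, the leftmost and rightmost sink's colors force $2$ each, an interior sink's color forces $3$, and a lone sink forces $1$, giving $3\ell_r-2$ in all when $\ell_r\ge1$). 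Summing over all rows, using that the rows containing a sink are $p$ in number and hold $c$ sinks in total, yields $\sum_i a_i \ge c(m+1)+(c-2)p$ and likewise $\sum_i b_i \ge c(n+1)+(c-2)q$. Combining, $mn = \sum_i|C_i| \ge \sum_i(a_i+b_i-1) \ge c(m+1)+c(n+1)-c+(c-2)(p+q) = c(m+n)+c+(c-2)(p+q)$, which rearranges to the claimed inequality.

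The conceptual steps — simpliciality of boundary vertices, and the edge bound on a connected bipartite incidence graph — are short to verify once set up. I expect the real work to be the per-row blob count for rows containing two or more sinks, especially a row containing the color-$i$ sink with other-colored sinks on both sides of it: there one must pin down exactly which stretches are forced to be nonempty and confirm that the forced ones are pairwise distinct, so that the per-row totals $c$ and $\ell_r+2c-2$ (and their column analogues) come out exactly right.
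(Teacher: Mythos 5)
Your proof is correct, but it takes a genuinely different route from the paper's. The paper argues directly about the arrows of the perfect layout itself: each of the $m-p$ sink-free rows must contain a vertical arrow of every color (to turn each color's packets out of the row before they exit the grid), and each of the $p$ occupied rows must contain $c-1$ vertical arrows to the left of its leftmost sink and $c-1$ to the right of its rightmost sink (to divert wrong-colored packets entering from the boundary), giving $cm+(c-2)p$ vertical arrows; the symmetric count of horizontal arrows and the $mn-c$ non-sink cells yield the inequality in a few sentences. You instead pass through Lemma~\ref{lem:perfect-colors} and count cells of each color class: connectivity of $G_i$ forces the leftmost and rightmost stretch of every row and column to contain a color-$i$ cell (your \emph{forced blobs}), and connectivity of the bipartite blob-incidence graph $H_i$ gives $|C_i|\ge a_i+b_i-1$. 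Your per-row totals ($c$ for an empty row, $\ell_r+2c-2$ for a row with $\ell_r\ge 1$ sinks) and the final algebra check out and reproduce the bound exactly; the extra $+c$ in your $\sum_i a_i \ge c(m+1)+(c-2)p$ relative to the paper's arrow count is precisely the $c$ sink cells, which you include in $\sum_i|C_i|=mn$ while the paper subtracts them from the available cells. What your version buys is that it lives entirely at the level of the coloring characterization, so the disjointness of the counted objects is completely explicit and no further reasoning about packet dynamics is needed; what it costs is length, since the blob machinery and the case analysis on the positions of sinks within a row do real work that the paper's direct arrow count sidesteps.
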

\begin{proof}
  Each of the \(m - p\) unoccupied rows must contain \(c\) vertical arrows in order to redirect packets of each color out of the row.
  Each of the \(p\) occupied rows must contain \(c-1\) vertical arrows to the
  left of the leftmost sink in order to redirect incorrectly colored packets
  from the left boundary edge away from that sink;
  similarly, there must be \(c-1\) vertical arrows to the right of the
  rightmost sink.
  Thus we require \(c(m - p) + 2(c - 1)p = c m + (c - 2)p\)
  vertical arrows overall.
  By the same argument, we must have \(c n + (c - 2)q\) horizontal arrows,
  for a total of
  \(c(m + n) + (c - 2)(p + q)\)
  arrows.
  There are \(m n - c\) cells available for arrows, which proves the claim.
\end{proof}

Up to insertion of empty rows or columns, rotations, reflections, and
recolorings, there are six different configurations that $c=3$ sinks
may have with respect to each other, shown in Figure~\ref{fig:3-sink-configs}.
We define a board's \defn{type} according to this configuration of its sinks
(C, I, J, L, Y, or /).

\begin{figure}[htbp]
  \centering
  \hfil
  \subcaptionbox{C}{\includegraphics[scale=0.9]{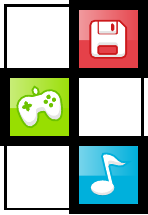}}\hfil
  \subcaptionbox{I}{\includegraphics[scale=0.9]{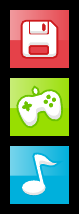}}\hfil
  \subcaptionbox{J}{\includegraphics[scale=0.9]{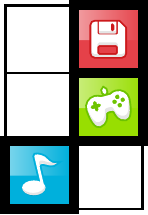}}\hfil
  \subcaptionbox{L}{\includegraphics[scale=0.9]{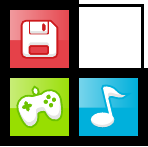}}\hfil
  \subcaptionbox{Y}{\includegraphics[scale=0.9]{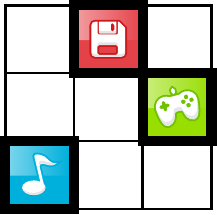}}\hfil
  \subcaptionbox{/}{\includegraphics[scale=0.9]{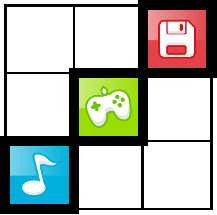}}%
  \hfil
  \caption{The six possible configurations of three sinks up to rotations, reflections, recolorings, and removal of empty rows.}
  \label{fig:3-sink-configs}
\end{figure}

A board's type determines the values of \(p\) and \(q\)
and thus the minimal board sizes as follows.
Define a board to have size \defn{at least} $m \times n$ if
it has at least $m$ rows and at least $n$ columns, or vice versa.

\begin{lemma}
  \label{lem:size-constraint-c3}
  For a perfect layout to exist with \(c=3\), it is necessary that:
  \begin{itemize}
  \item Boards of type Y or / have size at least $7\times8$.
  \item Boards of type C or J have size at least $7\times8$ or $6\times9$.
  \item Boards of type L have size at least $7\times7$ or $6\times9$.
  \item Boards of type I have size at least $7\times7$, $6\times9$, or $5\times11$.
  \end{itemize}
\end{lemma}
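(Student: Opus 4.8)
The plan is to derive the stated size bounds purely from three facts already available: the area inequality of Lemma~\ref{lem:size-constraint}, the clearance requirement of Lemma~\ref{lem:sink-distance}, and the values of $p$ and $q$ forced by each of the six sink types in Figure~\ref{fig:3-sink-configs}. Specializing Lemma~\ref{lem:size-constraint} to $c=3$, a board admitting a perfect layout satisfies $3(m+n)+(p+q)\le mn-3$, which I would rewrite in the more convenient form $(m-3)(n-3)\ge p+q+12$. Reading $(p,q)$ (up to swapping the two coordinates) off the six configurations gives $(3,3)$ for types Y and /, $(3,2)$ for types C and J, $(2,2)$ for type L, and $(1,3)$ for type I; so the right-hand side of the area inequality is $18$, $17$, $16$, and $16$ respectively.

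Next I would prove the auxiliary clearance bounds $m\ge p+4$ and $n\ge q+4$. Suppose the sinks occupy $p$ distinct rows, and let $r_{\min}$ and $r_{\max}$ be the smallest and largest of those row indices. The globally topmost sink, in row $r_{\min}$, has no sink above it within its own column, so Lemma~\ref{lem:sink-distance} (with $c-1=2$) forces $r_{\min}-1\ge 2$; symmetrically the bottommost sink gives $m-r_{\max}\ge 2$. Since the $p$ occupied rows all lie in $[r_{\min},r_{\max}]$ we have $r_{\max}\ge r_{\min}+(p-1)$, hence $m\ge r_{\max}+2\ge r_{\min}+p+1\ge p+4$. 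The identical argument on columns yields $n\ge q+4$. The one subtlety here is the claim that the extreme sink is unobstructed toward the boundary within its own column; this holds exactly because it is extreme, so no case analysis distinguishing C from J from Y is needed at this step.

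Finally, for each type I would intersect the three constraints $m\ge p+4$, $n\ge q+4$, and $(m-3)(n-3)\ge p+q+12$, using the symmetric ``size at least'' convention so that I may assume $p\le q$. For Y and / this reads $m,n\ge 7$ with area at least $18$, which rules out $7\times7$ (area $16$) but permits $7\times8$; for C and J it reads $m\ge 6$, $n\ge 7$ with area at least $17$, which collapses to ``at least $7\times8$ or $6\times9$''; for L it reads $m,n\ge 6$ with area at least $16$, giving ``at least $7\times7$ or $6\times9$''; and for I it reads $m\ge 5$, $n\ge 7$ with area at least $16$, giving ``at least $7\times7$, $6\times9$, or $5\times11$''. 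The substantive part of the write-up is this last step: a short but careful finite arithmetic check that, in each type, no board strictly smaller than those listed satisfies all three inequalities, and that the listed disjunctions really do cover the whole feasible region. Since the lemma only asserts necessity, I do not need Lemma~\ref{lem:impossible-c3}, or any argument that the listed boards actually do admit perfect layouts, at this point.
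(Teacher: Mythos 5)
Your proposal is correct and follows exactly the paper's (very terse) argument: combine the area inequality of Lemma~\ref{lem:size-constraint} with the boundary-clearance requirement from Lemma~\ref{lem:sink-distance}, specialized to the $(p,q)$ values of each sink type. Your write-up simply supplies the details (the rewriting as $(m-3)(n-3)\ge p+q+12$, the bounds $m\ge p+4$, $n\ge q+4$, and the finite arithmetic check) that the paper leaves implicit, and the arithmetic checks out in every case.
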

\begin{proof}
  These bounds follow from Lemma~\ref{lem:size-constraint} together with the requirement from Lemma~\ref{lem:sink-distance} that it be possible to place sinks at least two cells away from the boundary.
\end{proof}

\subsection{Constructing Perfect Layouts}

In this section, we complete our characterization of boards
with perfect layouts for \(c=3\).
We show that Lemmas \ref{lem:sink-distance}, \ref{lem:impossible-c3}, and
\ref{lem:size-constraint-c3} are the only obstacles to a perfect layout:

\begin{theorem}
  \label{thm:characterization-c3}
  A board with $c=3$ sinks has a perfect layout if and only if the following conditions all hold:
  \begin{enumerate}
  \item All sinks are at least two cells away from the boundary
  (Lemma~\ref{lem:sink-distance}).
  \item The board does not contain any of the four unsolvable configurations in Figure~\ref{fig:impossible} (Lemma~\ref{lem:impossible-c3}).
  \item The board obeys the size bounds of Lemma~\ref{lem:size-constraint-c3}.
  \end{enumerate}
\end{theorem}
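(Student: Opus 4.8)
The plan is to prove the backward direction (the forward direction is already covered by the three cited lemmas) by explicitly constructing a perfect layout for every board satisfying the three conditions. By Lemma~\ref{lem:perfect-colors}, it suffices to produce a coloring of the cells, agreeing with the sink colors, such that each monochromatic visibility graph $G_i$ is connected; we never need to worry about arrow directions. I would organize the construction by the six sink-configuration types (C, I, J, L, Y, /) of Figure~\ref{fig:3-sink-configs}, and within each type, treat the ``tight'' boards (those meeting the size bounds of Lemma~\ref{lem:size-constraint-c3} with equality, or nearly so) as base cases, then argue that larger boards reduce to these.

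First I would handle the reduction to minimal boards. The key observation is a \emph{padding lemma}: if a board $B$ admits a perfect layout and we insert an empty row or column adjacent to an existing non-sink row/column, we can extend the layout by copying the coloring of the neighboring line into the new line; this preserves connectivity of every $G_i$ because the new cells are visible (within their row or column) to the copies they were cloned from, and the inserted line does not separate any previously visible pair. Thus it is enough to exhibit perfect layouts for the finitely many ``small'' boards of each type whose dimensions are exactly the minimal values from Lemma~\ref{lem:size-constraint-c3} (e.g.\ $7\times 8$ for type Y, plus the alternative minimal shapes like $6\times 9$ and $5\times 11$), together with checking that in the excluded configurations of Figure~\ref{fig:impossible} no such minimal layout exists --- but that is exactly Lemma~\ref{lem:impossible-c3}, so it only remains to show that every minimal board \emph{not} containing a Figure~\ref{fig:impossible} configuration does admit a layout.

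For the base cases I would give an explicit coloring scheme. The generic idea: place the three sinks (in whichever of the six relative configurations), then for each occupied row assign the $c-1=2$ cells outside the extreme sinks the two ``other'' colors, and similarly for occupied columns; fill each empty row and each empty column with a block of all three colors positioned to be seen by the corresponding sinks; and then fill the remaining interior cells so as to glue everything together, checking that each $G_i$ becomes connected (typically by routing a ``spine'' of color-$i$ cells from the sink of color $i$ out to one full row and one full column, which then sees every boundary edge). Because the number of distinct minimal shapes is small --- six types times a handful of dimension options each, minus those ruled out by Figure~\ref{fig:impossible} --- this is a finite case analysis; I would present one representative coloring per type in a figure and note that the remaining shapes and the sub-cases of sink placement within a type are handled analogously (or by the padding lemma applied to a still-smaller witness, when one exists).

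The main obstacle I expect is the boundary book-keeping in the base cases: Lemma~\ref{lem:sink-distance} forces two blank cells between every sink and every boundary, and in the tight boards (especially type I on a $5\times 11$ board, and types C/J near $6\times 9$) almost every cell is forced, so there is very little slack to make all three $G_i$ connected simultaneously. The delicate point is showing that \emph{whenever} the forced arrows do \emph{not} already form one of the Figure~\ref{fig:impossible} obstructions, enough freedom remains to connect each color class --- in other words, that Figure~\ref{fig:impossible} really captures \emph{all} the local obstructions at minimal size. I would verify this by a careful enumeration of where the third sink can sit relative to the forced $c-1$ arrow-pairs of the other two sinks, exhibiting an explicit color for the one or two genuinely free cells in each surviving case; this is the computational heart of the argument and the step most prone to missed cases, so I would cross-check it against the size count of Lemma~\ref{lem:size-constraint} (which is exactly tight in these base cases, leaving essentially zero spare cells) to be sure no configuration is overlooked.
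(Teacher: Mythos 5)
Your overall architecture matches the paper's: forward direction from the three cited lemmas, backward direction by reducing to minimal boards via a padding/row-duplication lemma (this is exactly the paper's Lemma~\ref{lem:add-row}) plus a finite check of the minimal boards. However, there is a genuine gap in your reduction step. You assert that the padding lemma makes it ``enough to exhibit perfect layouts for the finitely many small boards,'' but this requires that every non-minimal board satisfying the three conditions can be shrunk to a \emph{minimal board that still satisfies the three conditions}. That is not automatic: deleting an empty row or column can bring sinks (or a sink and the boundary) closer together and thereby \emph{create} one of the forbidden configurations of Figure~\ref{fig:impossible}, even though the larger board contained none. The paper devotes Lemma~\ref{lem:characterization-c3-nonminimal} and the case analysis of Figure~\ref{fig:perfect-reductions} to exactly this point: it enumerates every board that satisfies conditions (1) and (3) but not (2), determines where the extra row/column of the larger board must have been inserted, and exhibits a \emph{different} row or column whose removal preserves all three conditions. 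Without this argument your induction does not go through, because the smaller board you land on may be one with no perfect layout at all.

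A second, lesser issue is the base case. You propose one representative hand-built coloring per sink type, with the rest ``handled analogously.'' Within a single type and a single minimal dimension there are many inequivalent sink placements (the paper's Appendix~\ref{apx:computer-solutions} records roughly eighty distinct minimal cases, e.g.\ eleven for type I at $5\times 11$ and twelve for type J at $7\times 8$), and these boards are so tight that the colorings do not transfer between placements by any uniform rule. The paper resolves this by an exhaustive computer search (Lemma~\ref{lem:characterization-c3-minimal}, via SMT/Z3). You correctly identify this as the step most prone to missed cases; as written, your sketch does not discharge it, though in principle a sufficiently careful hand enumeration could replace the computer search.
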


We call a board \defn{minimal} if it has one of the minimal dimensions
for its type as defined in Lemma~\ref{lem:size-constraint-c3}.
Our strategy for proving Theorem~\ref{thm:characterization-c3}
will be to reduce the problem to the finite set of minimal boards,
which we then verify by computer.
We will accomplish this by removing empty rows and columns
from non-minimal boards to reduce their size,
which we show can always be done while preserving the above conditions.

\begin{lemma}
  \label{lem:characterization-c3-minimal}
  All minimal boards satisfying the three conditions of
  Theorem~\ref{thm:characterization-c3} have a perfect layout.
\end{lemma}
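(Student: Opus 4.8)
The plan is to reduce the statement to a finite computer search. The claim concerns only boards whose dimensions are among the minimal values listed in Lemma~\ref{lem:size-constraint-c3} — namely $7\times7$, $7\times8$, $6\times9$, $5\times11$ (and their transposes) — and for each such dimension the sinks can sit in only polynomially many positions. So the set of minimal boards satisfying the three conditions of Theorem~\ref{thm:characterization-c3} is finite and completely enumerable. For each one, we invoke Lemma~\ref{lem:perfect-colors}: a perfect layout exists if and only if we can color the non-sink cells with three colors so that each monochromatic visibility graph $G_i$ is connected. This is a finite constraint-satisfaction question, which we solve by computer (e.g.\ by brute force or a SAT solver), and we report that in every case the answer is yes. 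The output of the search — an explicit coloring for each minimal board, or at least for one representative of each equivalence class under rotation, reflection, recoloring, and permutation of sinks within a shared row or column — should be included in an appendix or auxiliary file so the verification is reproducible.

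The main steps, in order, are as follows. First, enumerate the minimal board dimensions and, for each, the placements of the three sinks; cut this down using the symmetries of the board (the dihedral group of the rectangle, recoloring of the three colors, and the observation from Lemma~\ref{lem:impossible-c3}'s figures that arrows forced into a common row or column are interchangeable) so that only the six sink-configuration types C, I, J, L, Y, / of Figure~\ref{fig:3-sink-configs} and their legal embeddings need be checked. Second, discard any board violating condition~1 (a sink within one cell of the boundary) or condition~2 (containing one of the four forbidden patterns of Figure~\ref{fig:impossible}); by Lemma~\ref{lem:size-constraint-c3} the size condition~3 is automatic for minimal boards. Third, for each surviving board, search for a valid 3-coloring of the empty cells making all three $G_i$ connected, using Lemma~\ref{lem:perfect-colors} to translate a found coloring into an actual arrow layout. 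Fourth, confirm the search terminates positively on every instance.

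I expect the main obstacle to be purely practical rather than conceptual: making the search space small enough, and the connectivity check fast enough, that the computation is convincing and reproducible. A $5\times11$ board has $52$ non-sink cells and hence $3^{52}$ raw colorings, so naive brute force is infeasible; one needs either a smarter encoding (model the "each $G_i$ connected" requirement as a SAT or ILP instance, exploiting that most cells' colors are forced by Lemma~\ref{lem:sink-distance}-type arguments near the boundary and near sinks) or a careful case analysis that fixes the forced border arrows first and only searches the genuinely free interior cells. A secondary subtlety is being careful about the definition of "minimal": a board of type I, say, is minimal if its dimensions are $7\times7$ \emph{or} $6\times9$ \emph{or} $5\times11$, and each of these must be handled, as must the transposed orientations; the symmetry reduction must be stated precisely enough that the reader trusts no case was silently dropped. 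Once the enumeration is pinned down and the coloring search is implemented, the lemma follows immediately from Lemma~\ref{lem:perfect-colors}.
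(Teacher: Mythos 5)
Your proposal matches the paper's proof: the paper likewise establishes this lemma by exhaustively enumerating the finitely many minimal boards satisfying the three conditions, encoding the perfect-layout question (via the coloring/connectivity reformulation of Lemma~\ref{lem:perfect-colors}) as a constraint-satisfaction instance solved by an automated solver (Z3/SMT), and listing the resulting layouts in an appendix. The details you flag (symmetry reduction, encoding connectivity for the solver) are exactly the practical concerns the paper's implementation addresses, so there is no substantive difference in approach.
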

\begin{proof}
  \renewcommand{\qedsymbol}{\(\blacksquare\)}
  The proof is by exhaustive computer search of all such minimal boards.
  We wrote a Python program to generate all possible board patterns,
  reduce each perfect layout problem to Satisfiability Modulo Theories
  (SMT), and then solve it using Z3 \cite{z3}.
  The results of this search are in Appendix~\ref{apx:computer-solutions}.
\end{proof}

If \(B_0\) and \(B_1\) are boards, then we define \defn{\(B_0 \pmb{\lessdot} B_1\)}
to mean that \(B_0\) can be obtained by removing a single empty row or column
from \(B_1\).

\begin{lemma}
  \label{lem:add-row}
  If \(B_0 \lessdot B_1\) and \(B_0\) has a perfect layout,
  then \(B_1\) also has a perfect layout.
\end{lemma}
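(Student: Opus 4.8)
The plan is to reduce to colorings via Lemma~\ref{lem:perfect-colors} and extend a valid coloring of $B_0$ to one of $B_1$. By the symmetry between rows and columns and between the two boundary sides, I may assume $B_1$ is obtained from $B_0$ by inserting a single empty row $R$, either in the interior or along one horizontal boundary. Since $R$ contains no sinks, every color assignment to its cells is compatible with the sinks, so it suffices to choose a coloring $\chi_1$ of $B_1$ that agrees on all of $B_0$'s cells with a fixed valid coloring $\chi_0$ of $B_0$, makes each $G_i(B_1)$ connected, and is otherwise free on $R$.

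I would take $\chi_1$ to copy onto $R$ the colors of a vertically adjacent row of $B_0$: either neighbor if $R$ is interior, and the unique adjacent old extreme row if $R$ is along the boundary. To verify that each $G_i(B_1)$ is connected, I would show every vertex reaches the color-$i$ sink. (i) Visibility among the cells of $B_0$ is unchanged, as no new sinks are introduced; so when $R$ is interior, $G_i(B_0)$ sits inside $G_i(B_1)$ as a subgraph and every old vertex is already fine, while when $R$ is along the boundary, the boundary edges of $B_0$ on that side are gone from $\partial B_1$, but any path that used such an edge $t$ can be rerouted, because the two cell-neighbors of $t$ along that path lie in a common sink-free portion of the row or column abutting $t$ and hence are directly visible. (ii) Each new cell $(R,j)$ is vertically adjacent to the cell of $B_0$ it was copied from, which shares its color, so it connects to the old part. (iii) The two new side boundary edges of $R$ see every cell of $R$; and $R$ contains a cell of each color because the corresponding end segment of the copied row does, which in turn holds because that row's boundary edge must have a color-$i$ neighbor in $G_i(B_0)$. (iv) In the boundary case, each newly created boundary edge over a column $j$ sees down $j$ at least as far as the old end segment of column $j$ did, and that segment contains a color-$i$ cell for every $i$, again by validity of $\chi_0$.

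The main obstacle is the boundary-insertion case: the natural identification does not embed $G_i(B_0)$ in $G_i(B_1)$, since the old boundary edges of $B_0$ on the side of $R$ disappear and fresh boundary edges appear one unit further out. What carries the argument through are the rerouting observation in (i), which lets old color-$i$ cells bypass the vanished boundary edges, together with the fact --- forced by connectivity of each $G_i(B_0)$ --- that every end segment of a row or column of $B_0$ already contains all $c$ colors, which is exactly what is needed to reattach the new boundary edges of $B_1$. The remaining bookkeeping (renumbering of rows and columns, the symmetric cases for columns and for the other boundary side) is routine.
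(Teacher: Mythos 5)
Your proposal is correct and follows essentially the same route as the paper: apply Lemma~\ref{lem:perfect-colors}, color the inserted row by duplicating an adjacent row, and check that connectivity of each \(G_i\) is preserved. The paper's proof is a one-liner that does not spell out the boundary-insertion case (where \(\partial B_1\) differs from \(\partial B_0\)); your rerouting argument for the vanished and newly created boundary edges supplies exactly the detail the paper leaves implicit.
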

\begin{proof}
  By symmetry, consider the case where $B_1$ has an added row.
  By Lemma~\ref{lem:perfect-colors}, it suffices to show that we can color
  the cells of the new row while preserving connectivity in each color.
  We do so by duplicating the colors of the cells (including sinks)
  in an adjacent row.
  Connectivity of the resulting coloring follows from that of the original.
\end{proof}

\begin{lemma}
  \label{lem:characterization-c3-nonminimal}
  Let \(B_1\) be a non-minimal board satisfying the three conditions
  of Theorem~\ref{thm:characterization-c3}.
  Then there exists a board \(B_0\) that also satisfies all three conditions
  and such that \(B_0 \lessdot B_1\).
\end{lemma}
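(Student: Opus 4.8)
The plan is to produce one board \(B_0\lessdot B_1\) that still satisfies all three conditions; combined with Lemma~\ref{lem:characterization-c3-minimal} and Lemma~\ref{lem:add-row}, iterating this descent down to a minimal board proves Theorem~\ref{thm:characterization-c3}. I would choose the deletion in two stages: first select a direction --- deleting a row or deleting a column --- in which shrinking preserves the size bound (Condition~3); then, within that direction, select an empty line whose deletion also preserves the sink--boundary distance bound (Condition~1) and creates none of the forbidden configurations of Figure~\ref{fig:impossible} (Condition~2).

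\emph{Choosing the direction.} Because \(B_1\) satisfies Condition~3 but is not minimal, its dimensions \(m\times n\) dominate some minimal pair \(m'\times n'\) for its type (componentwise, allowing transposition) with strict inequality in at least one coordinate; deleting a line perpendicular to that coordinate keeps the size \(\ge m'\times n'\). Writing \(p,q\in\{1,2,3\}\) for the numbers of sink-occupied rows and columns, Condition~1 forces at least two empty rows above the topmost sink and two below the bottommost (and symmetrically for columns), so \(m-p\ge4\) and \(n-q\ge4\). For the chosen direction to be usable in stage two I also want it to have \emph{at least five} empty lines. Both directions can fail to be usable only when \(m-p=4\) or \(n-q=4\): if both are \(\ge5\), then at most one of the two deletions can break Condition~3 (a deletion breaks it only when \(m\), resp.\ \(n\), equals the corresponding coordinate of \emph{every} dominated minimal pair, and if both held \(B_1\) would be minimal). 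The case \(m=p+4\) and \(n=q+4\) together is impossible, since substituting into the area bound \(3(m+n)+(p+q)\le mn-3\) of Lemma~\ref{lem:size-constraint} forces \(pq\ge11\); the mixed cases (one direction with exactly four empty lines, the other unable to shrink without violating Condition~3) are eliminated by a direct check against the minimal-size list of Lemma~\ref{lem:size-constraint-c3}. So fix a direction, say rows, with \(m-p\ge5\) empty rows in which deletion preserves Condition~3.

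\emph{Preserving Condition~1.} Split the empty rows into \(a\ge2\) above all sinks, \(c\ge0\) strictly between occupied rows, and \(b\ge2\) below all sinks, with \(a+b+c\ge5\); hence \(a\ge3\), \(b\ge3\), or \(c\ge1\). Deleting an empty row strictly between occupied rows changes neither \(a\) nor \(b\), while deleting one from a margin of size \(\ge3\) leaves that margin of size \(\ge2\). Either way Condition~1 is preserved, so the set \(S\) of \emph{Condition-1-safe} empty rows is nonempty.

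\emph{Preserving Condition~2 --- the crux.} It remains to find a row in \(S\) whose deletion produces none of the configurations of Figure~\ref{fig:impossible}. Deleting an empty row only decreases by one (i) the top-boundary distance of sinks below it, (ii) the bottom-boundary distance of sinks above it, and (iii) the vertical distance between sinks it separates; each forbidden configuration is a bounded pattern needing a sink at distance exactly \(2\) from two or three boundaries together with sink(s) at mutual distance \(\le1\). Two observations cover most of \(S\): deleting a margin row when that margin has \(\ge4\) empty rows is harmless (all affected distances stay \(\ge3\)), and deleting inside a between-gap of size \(\ge3\) cannot produce a mutual distance \(\le1\). For the remaining tight situations --- which, since \(m-p\ge5\), cannot have all margins and all inter-sink gaps simultaneously minimal --- the argument is that if every candidate deletion created a forbidden configuration then \(B_1\) itself would already contain one (a deletion yielding, say, configuration~(d) can only do so when the relevant sink was already two cells from three boundaries and already one blank cell from an adjacent sink pair in \(B_1\)), contradicting Condition~2. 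Enumerating these tight situations over the six sink types C, I, J, L, Y, / is a short finite check, and the column direction is symmetric. This last stage is the main obstacle: stages one through three are bookkeeping against explicit inequalities, whereas ruling out that \emph{every} size-reducing deletion creates a forbidden configuration needs the type-by-type verification --- feasible only because those configurations are few and each demands simultaneous closeness to several boundaries and to other sinks, which a non-minimal board cannot exhibit in every deletable slot.
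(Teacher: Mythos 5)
Your outline matches the paper's proof in structure: delete an empty row or column chosen so that the sink--boundary margins (Condition 1) and the size bounds (Condition 3) survive, and then deal separately with the possibility that the deletion creates one of the forbidden configurations of Figure~\ref{fig:impossible}. Your bookkeeping for Conditions 1 and 3 (the direction choice, the $pq\ge 11$ contradiction ruling out $m=p+4$ and $n=q+4$ simultaneously, and the margin/gap counting) is essentially sound, and corresponds to the paper's one-line choice of ``a row that is not among the first or last two rows and does not contain a sink.''

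The gap is in the Condition-2 step, which is where all the content of this lemma lives. Your proposed justification --- ``if every candidate deletion created a forbidden configuration then $B_1$ itself would already contain one,'' e.g.\ that a deletion yielding configuration (d) requires the sink to \emph{already} be two cells from three boundaries and one blank from an adjacent pair in $B_1$ --- is false: deleting an empty row can reduce a boundary distance from $3$ to $2$, or an inter-sink gap from two blanks to one, and thereby create a configuration that was not present in $B_1$. So the conclusion cannot be reached by that monotonicity argument; one genuinely has to show that the boards for which the straightforward deletion goes wrong are so constrained that an alternative deletion always exists. The paper does exactly this: it enumerates (Figure~\ref{fig:perfect-reductions}) all boards satisfying Conditions 1 and 3 but violating Condition 2 --- there are only five up to rotation, reflection, and recoloring, precisely because the forbidden configurations force sinks so close to the boundary that the area bound of Lemma~\ref{lem:size-constraint} nearly pins down the board --- and for each one exhibits, for every way $B_1$ could have arisen by inserting an empty line, a specific other line whose removal preserves all three conditions. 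Your proposal names this as ``a short finite check'' over the six sink types but does not carry it out, nor does it establish that the check can always be done within the single direction you fixed in stage one. Until that enumeration is actually performed (or replaced by a correct structural argument), the proof is incomplete.
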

\begin{proof}
  By symmetry, suppose $B_1$ is non-minimal in its number $m$ of rows.
  By removing a row from $B_1$ that is not among the first or last two rows
  and does not contain a sink, we obtain a board \(B'_0\) satisfying conditions
  (1) and (3) such that \(B'_0 \lessdot B_1\).
  If \(B'_0\) also satisfies condition (2), then we are done,
  so we may assume that it does not.

  Then \(B'_0\) must contain one of the four unsolvable configurations,
  and \(B_1\) is obtained by inserting a single empty row or column to
  remove the unsolvable configuration.
  Figure~\ref{fig:perfect-reductions} shows all possibilities for \(B'_0\),
  as well as the locations where rows or columns may be inserted to yield
  a corresponding possibility for \(B_1\).
  (\(B'_0\) may have additional empty rows and columns beyond those shown,
  but this does not affect the proof.)
  For each such possibility, Figure~\ref{fig:perfect-reductions} highlights
  another row or column which may be deleted from \(B_1\) to yield
  \(B_0 \lessdot B_1\) where \(B_0\) satisfies all three conditions.
\end{proof}

\begin{figure}
  \centering
  \hfil
  \subcaptionbox{$L$, $7\times7$}{\includegraphics[scale=0.6]{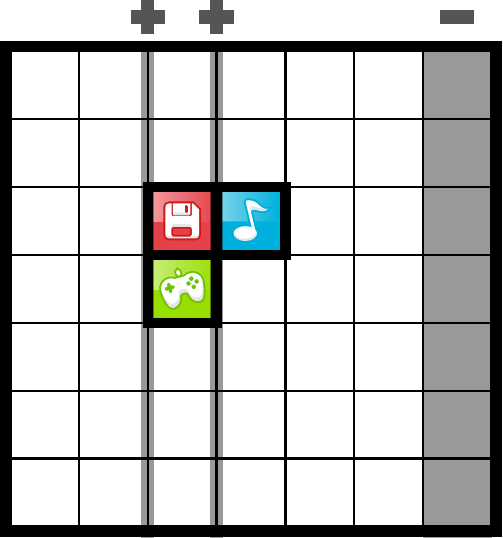}}\hfill
  \subcaptionbox{$L$, $6\times9$}{\includegraphics[scale=0.6]{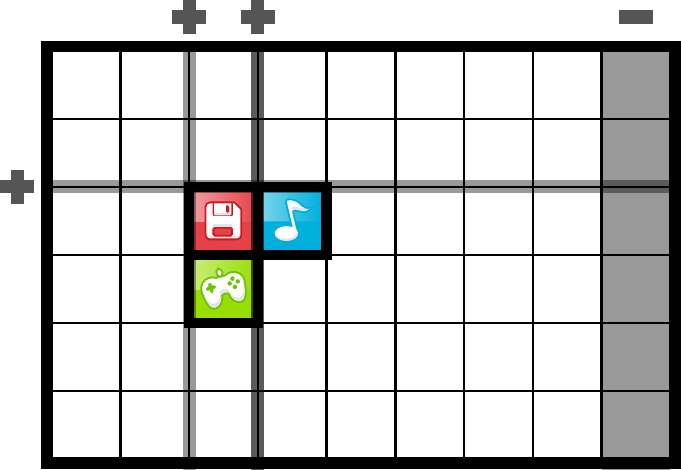}}\hfill
  \subcaptionbox{\label{fig:perfect-reductions-row-choice}$I$, $5\times11$}{\includegraphics[scale=0.6]{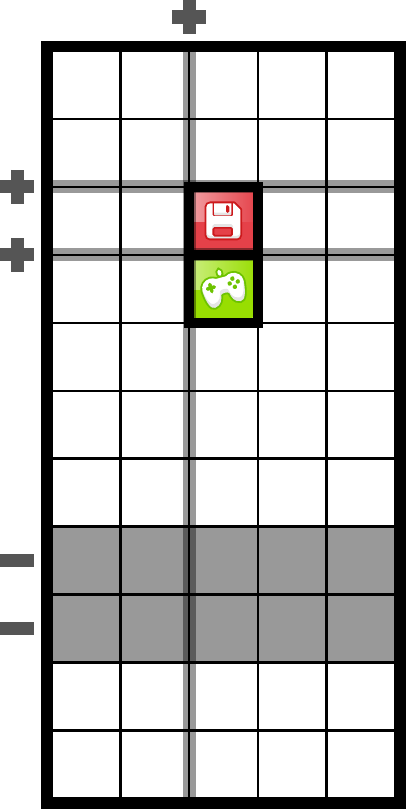}}\hfill
  \subcaptionbox{$I$, $5\times11$}{\includegraphics[scale=0.6]{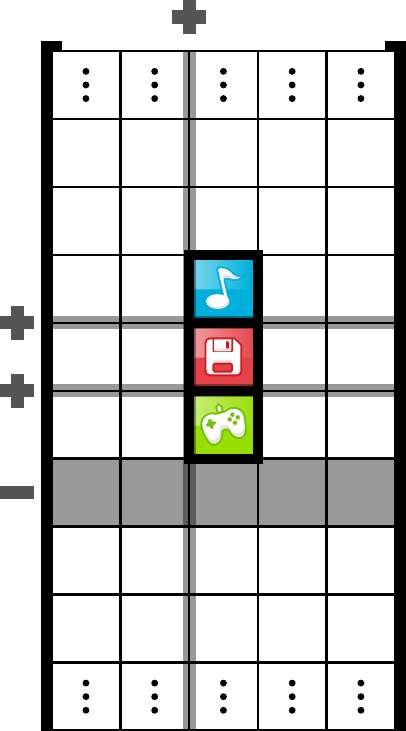}\vspace{0.2in}}\hfill
  \subcaptionbox{$I$, $5\times11$}{\includegraphics[scale=0.6]{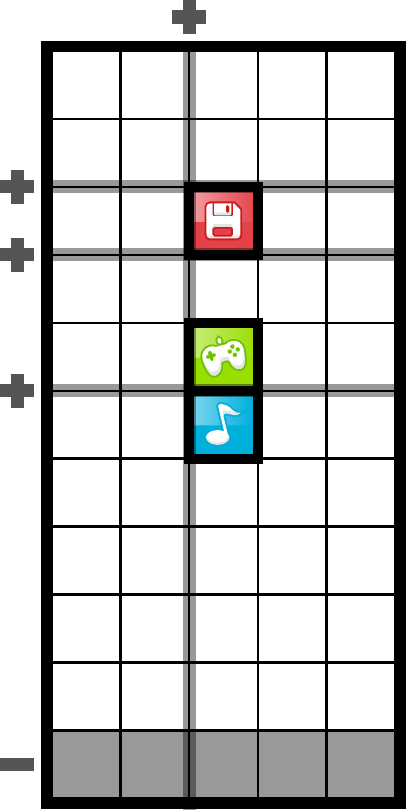}}%
  \hfil
  \caption{
    All boards satisfying conditions (1) and (3) but not (2),
    up to rotations, reflections, and recolorings.
    An empty row or column may be inserted in any of the locations
    marked ``$+$'' to yield a board satisfying all three conditions.
    Removing the row or column marked ``$-$'' then preserves the conditions.
    In case~(c),
    %\ref{fig:perfect-reductions-row-choice}),
    remove a row that does not contain the blue sink.
    In case~(d),
    $\vcenter{\hbox{\includegraphics[scale=0.6]{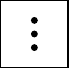}}}$
    denotes zero or more rows.
  }
  \label{fig:perfect-reductions}
\end{figure}

\begin{proof}[Proof of Theorem~\ref{thm:characterization-c3}]
  It follows from Lemmas \ref{lem:sink-distance}, \ref{lem:impossible-c3},
  and \ref{lem:size-constraint-c3} that all boards with perfect layouts
  must obey the three properties of the theorem.
  We prove that the properties are also sufficient
  by induction on the size of the board.
  As a base case, the claim holds for minimal boards
  by Lemma~\ref{lem:characterization-c3-minimal}.
  For non-minimal boards \(B_1\),
  Lemma~\ref{lem:characterization-c3-nonminimal} shows that
  there is a smaller board \(B_0\)
  that satisfies all three conditions and such that \(B_0 \lessdot B_1\).
  By the inductive hypothesis, \(B_0\) has a perfect layout.
  Lemma~\ref{lem:add-row} shows that \(B_1\) also has a perfect layout.
\end{proof}

%%%%%%%%%%%%%%%%%%%%%%%%%%%%%%%%%%%%%%%%%%%%%%%%%%%%%%%%%%%%%%

\section{Open Questions}

The main complexity open question is whether {\gamename} is $\Sigma_2^P$-complete.
Given our NP- and coNP-hardness results, we suspect that this is true.

One could also ask complexity questions of more restrictive versions of the game.
For example, what if the board has a constant number of rows?

When characterizing perfect layouts, we saw many of our lemmas generalized to different numbers of colors. It may be interesting to further explore the game and try to characterize perfect layouts with more than three colors.

A related problem is which boards and configurations of sinks
admit a \defn{damage-free} layout, where any packet entering from
the boundary either reaches the sink of the correct color
or ends up in an infinite loop.  Such a layout avoids losing,
and in the game as implemented, such a layout actually wins the game
(because the player wins if there is ever insufficient room
for a new source to be placed).
Can we characterize such layouts like we did for perfect layouts?

Perfect and damage-free layouts are robust to any possible sources.
However, for those boards that do not admit a perfect or damage-free layout,
it would be nice to have an algorithm that determines whether a given
set of sources or sequence of packets still has a placement of arrows
that will win on that board.
Because the board starts empty except for the sinks,
our hardness results do not apply.

Having a unique solution is often a desirable property of puzzles. Thus it is natural to ask about ASP-hardness and whether counting the number of solutions is \#P-hard.

%In addition to generalizing color, there are other reasonable alterations to the game that may be interesting algorithmically or from a game-play perspective. For example, what if the game board is a polyomino? A complex boundary would introduce significant constraints and even more so if board can have holes.
%One may also consider things other than square grids such as hexagonal, triangular, or three dimensional grids.

\section*{Acknowledgments}

This work was initiated during open problem solving in the MIT class on
Algorithmic Lower Bounds: Fun with Hardness Proofs (6.892)
taught by Erik Demaine in Spring 2019.
We thank the other participants of that class
for related discussions and providing an inspiring atmosphere.
In particular, we thank Quanquan C. Liu for helpful discussions
and contributions to early results.

Most figures of this paper were drawn using SVG Tiler
[\url{https://github.com/edemaine/svgtiler}].
Icons (which match the game) are by looneybits and
released in the public domain
[\url{https://opengameart.org/content/gui-buttons-vol1}].

\bibliographystyle{alpha}
\bibliography{biblio}

%\pagebreak

\appendix

\section{Perfect Layouts from the Automated Solver}
\label{apx:computer-solutions}

The following 13 figures show all cases found by the automated solver.
Figures~\ref{fig:solver-solutions-C-6x9},
\ref{fig:solver-solutions-C-7x8}, and
\ref{fig:solver-solutions-C-8x7} correspond to sinks in the C pattern.
Figures~\ref{fig:solver-solutions-I-5x11},
\ref{fig:solver-solutions-I-6x9}, and
\ref{fig:solver-solutions-I-7x7} correspond to sinks in the I pattern.
Figures~\ref{fig:solver-solutions-J-6x9},
\ref{fig:solver-solutions-J-7x8}, and
\ref{fig:solver-solutions-J-8x7} correspond to sinks in the J pattern.
Figures~\ref{fig:solver-solutions-L-6x9} and
\ref{fig:solver-solutions-L-7x7} correspond to sinks in the L pattern.
Figure~\ref{fig:solver-solutions-Y-7x8} corresponds to sinks in the Y pattern.
Finally,
Figure~\ref{fig:solver-solutions-/-7x8} corresponds to sinks in the / pattern.

\def\scale{0.8}
\tabcolsep=1em

\begin{figure}[bp]
  \centering
  \begin{tabular}{ccc}
    \subcaptionbox{C, $6\times9$}{\includegraphics[scale=\scale]{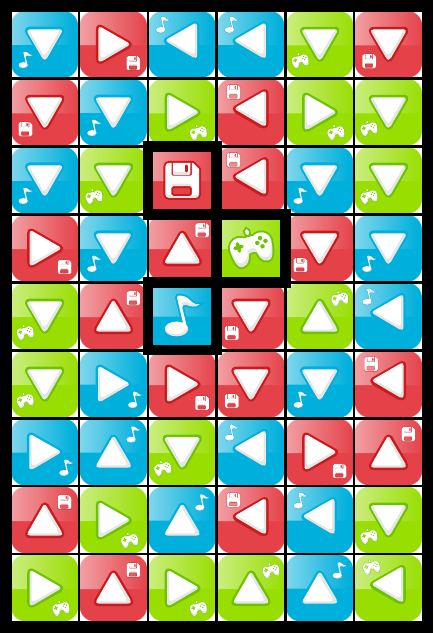}} &
    \subcaptionbox{C, $6\times9$}{\includegraphics[scale=\scale]{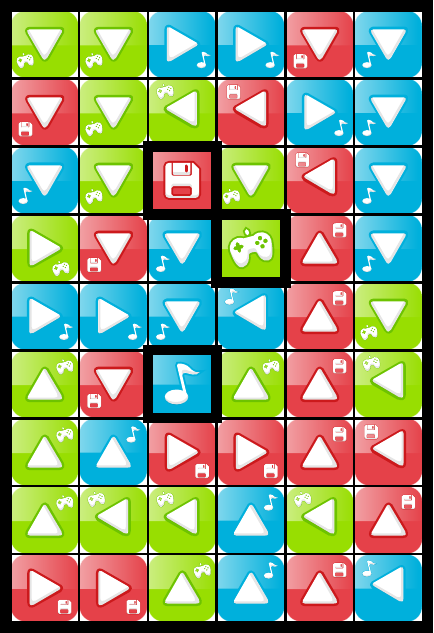}} &
    \subcaptionbox{C, $6\times9$}{\includegraphics[scale=\scale]{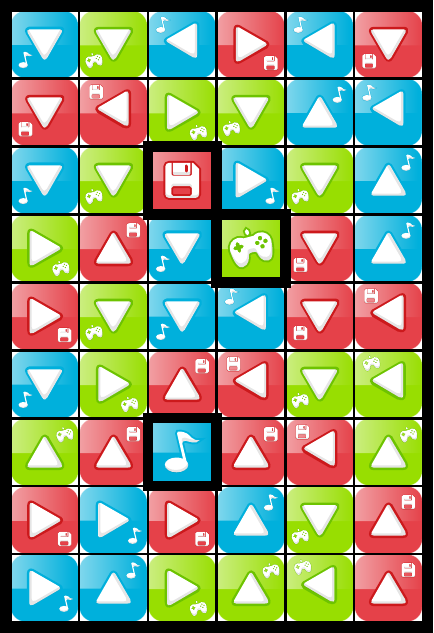}}
    \smallskip\\
    \subcaptionbox{C, $6\times9$}{\includegraphics[scale=\scale]{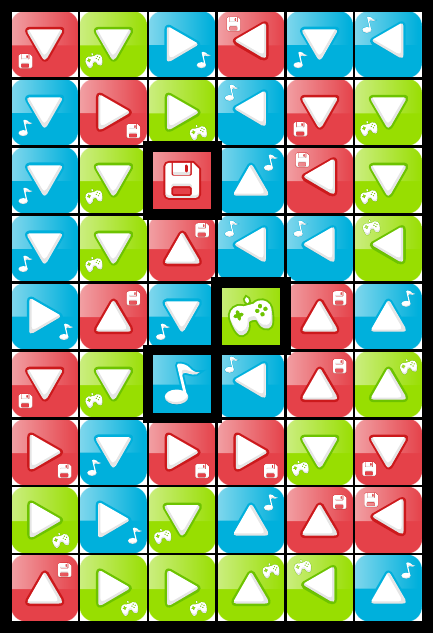}} &
    \subcaptionbox{C, $6\times9$}{\includegraphics[scale=\scale]{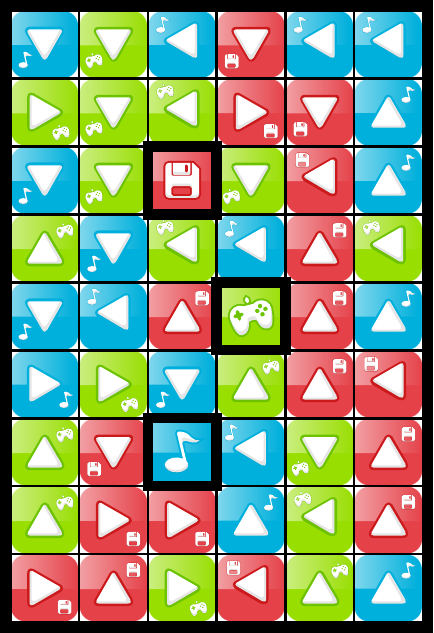}} &
    \subcaptionbox{C, $6\times9$}{\includegraphics[scale=\scale]{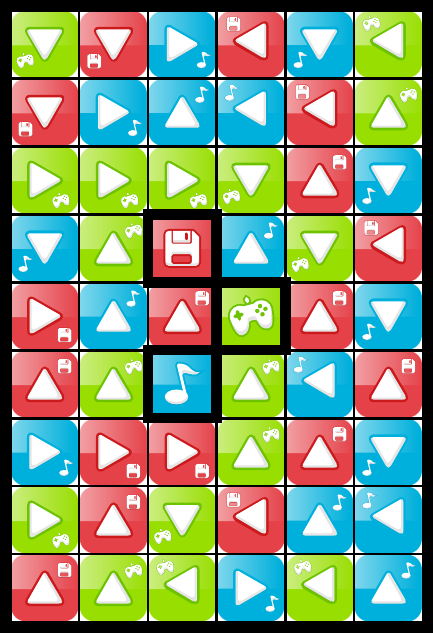}}
  \end{tabular}
  \caption{Solutions from the automated solver for sinks in the C pattern of size $6\times9$.}
  \label{fig:solver-solutions-C-6x9}
\end{figure}

\begin{figure}
  \centering
  \begin{tabular}{ccc}
    \smallskip\\
    \subcaptionbox{C, $7\times8$}{\includegraphics[scale=\scale]{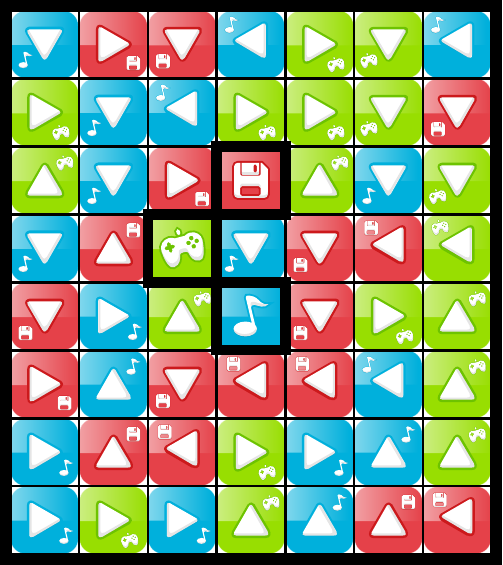}} &
    \subcaptionbox{C, $7\times8$}{\includegraphics[scale=\scale]{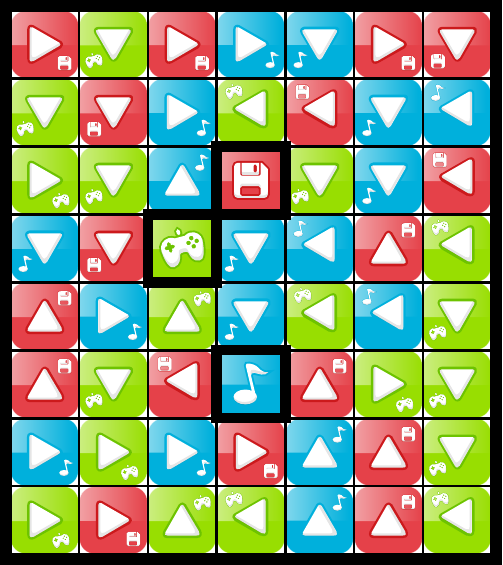}} &
    \subcaptionbox{C, $7\times8$}{\includegraphics[scale=\scale]{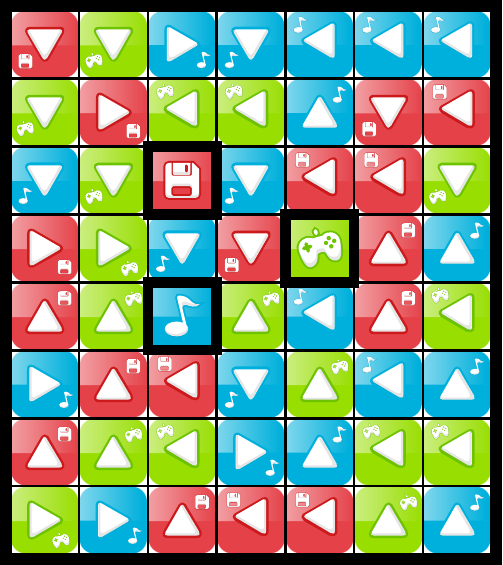}}
    \smallskip\\
    \subcaptionbox{C, $7\times8$}{\includegraphics[scale=\scale]{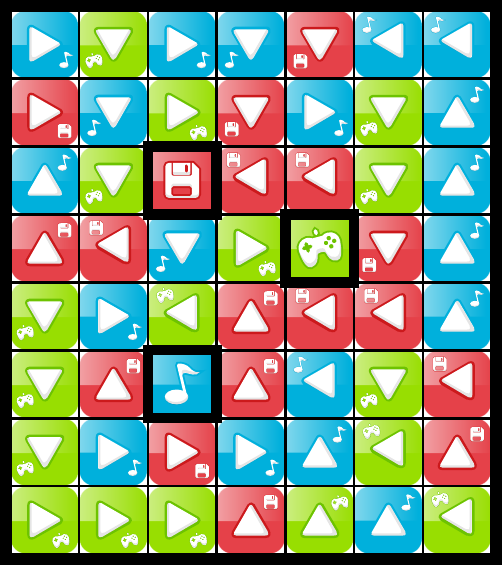}} &
    \subcaptionbox{C, $7\times8$}{\includegraphics[scale=\scale]{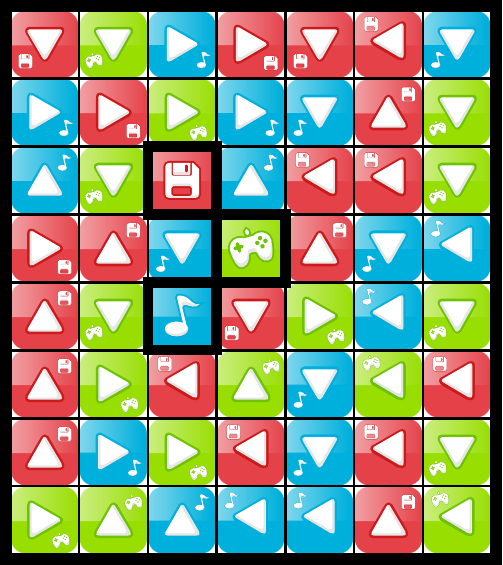}} &
    \subcaptionbox{C, $7\times8$}{\includegraphics[scale=\scale]{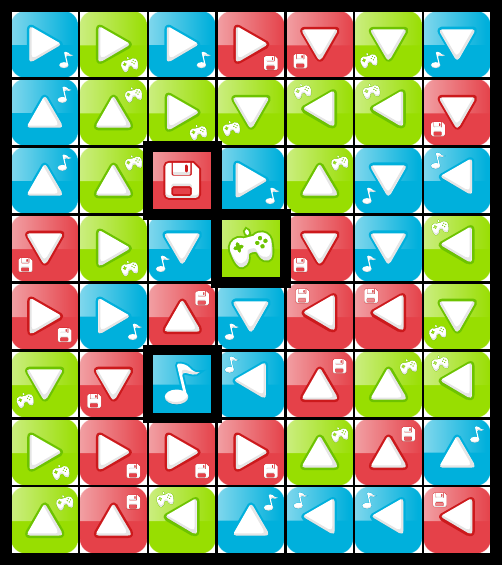}}
  \end{tabular}
  \caption{Solutions from the automated solver for sinks in the C pattern of size $7\times8$.}
  \label{fig:solver-solutions-C-7x8}
\end{figure}

\begin{figure}
  \centering
  \begin{tabular}{ccc}
    \subcaptionbox{C, $8\times7$}{\includegraphics[scale=\scale]{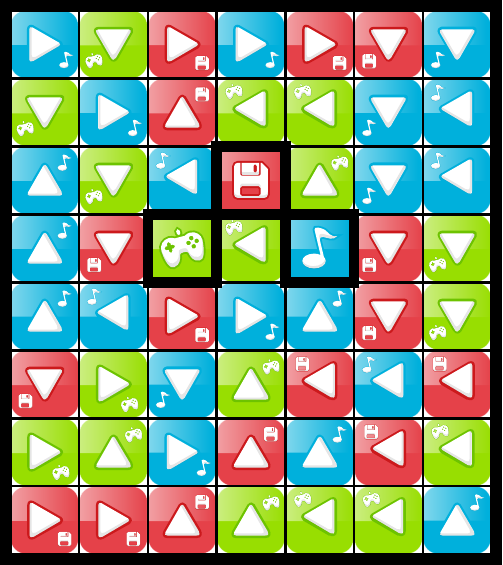}} &
    \subcaptionbox{C, $8\times7$}{\includegraphics[scale=\scale]{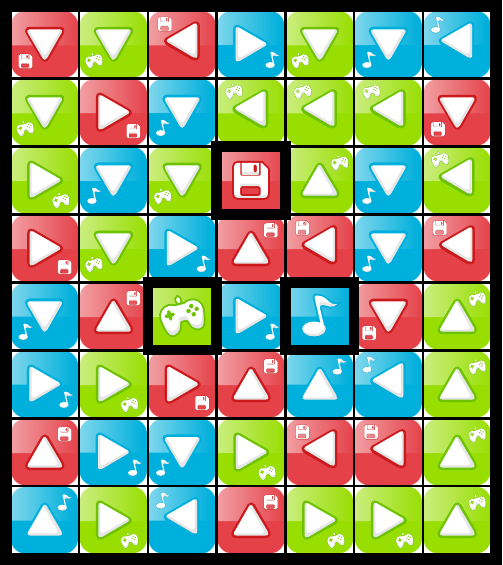}} &
    \subcaptionbox{C, $8\times7$}{\includegraphics[scale=\scale]{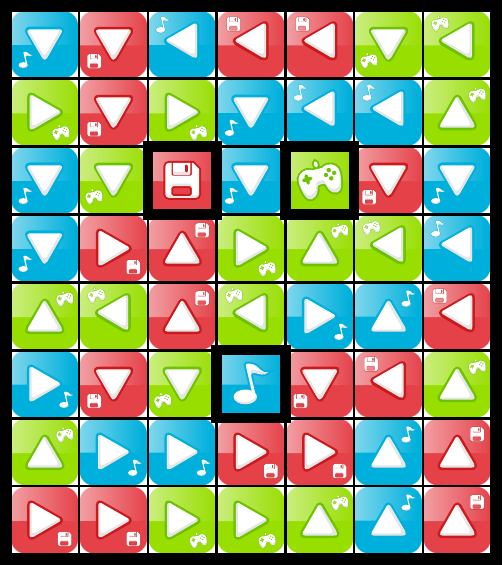}}
    \smallskip\\
    \subcaptionbox{C, $8\times7$}{\includegraphics[scale=\scale]{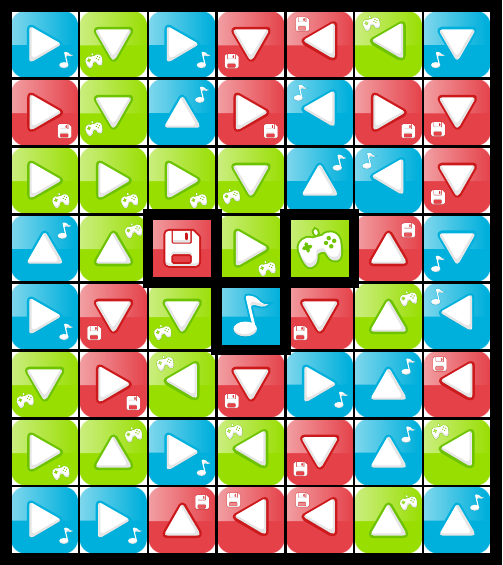}} &
    \subcaptionbox{C, $8\times7$}{\includegraphics[scale=\scale]{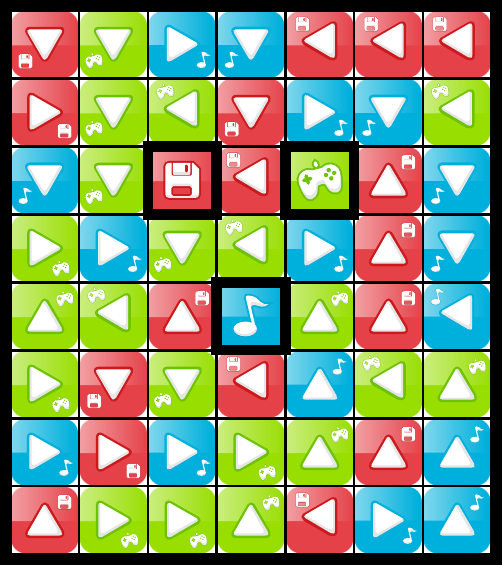}} &
    \subcaptionbox{C, $8\times7$}{\includegraphics[scale=\scale]{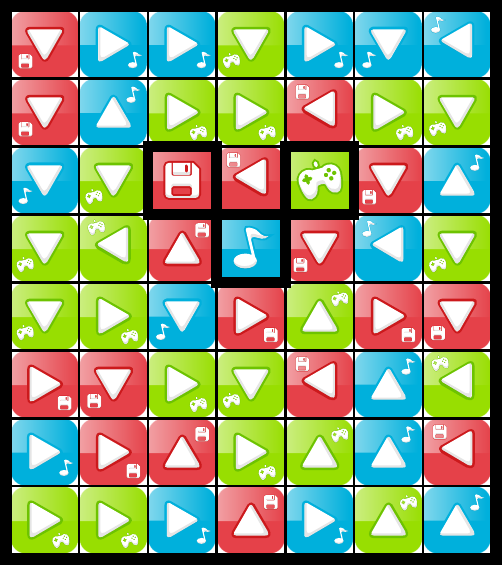}}
  \end{tabular}
  \caption{Solutions from the automated solver for sinks in the C pattern of size $8\times7$.}
  \label{fig:solver-solutions-C-8x7}
\end{figure}

\begin{figure}
  \centering
  \begin{tabular}{ccccc}
    \subcaptionbox{I, $5\times11$}{\includegraphics[scale=\scale]{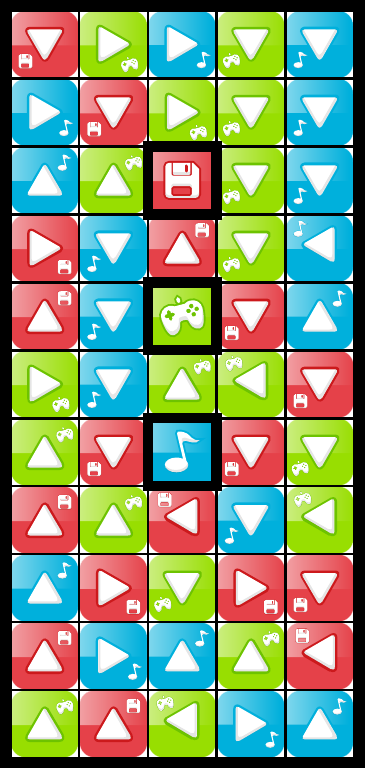}} &
    \subcaptionbox{I, $5\times11$}{\includegraphics[scale=\scale]{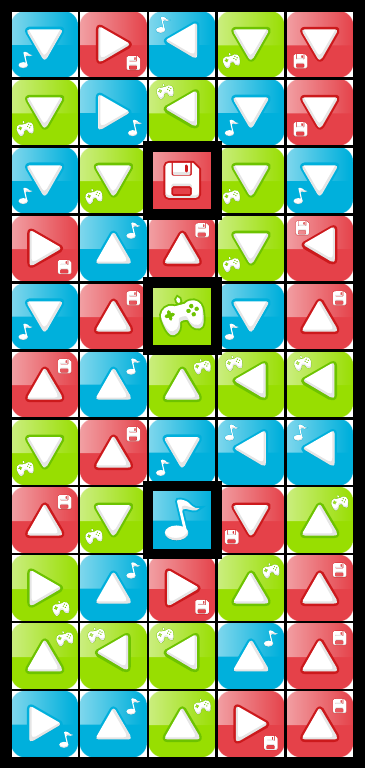}} &
    \subcaptionbox{I, $5\times11$}{\includegraphics[scale=\scale]{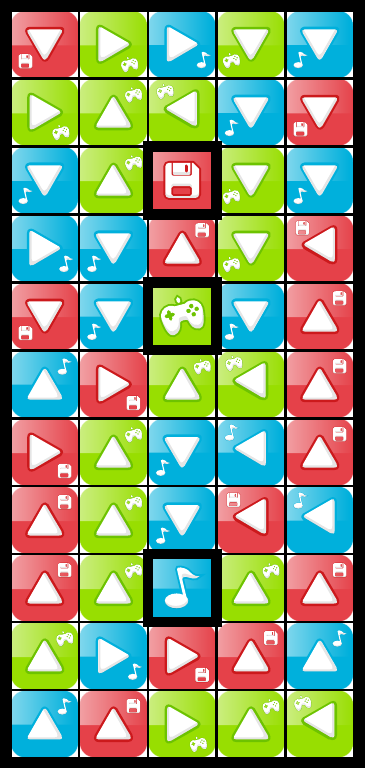}} &
    \subcaptionbox{I, $5\times11$}{\includegraphics[scale=\scale]{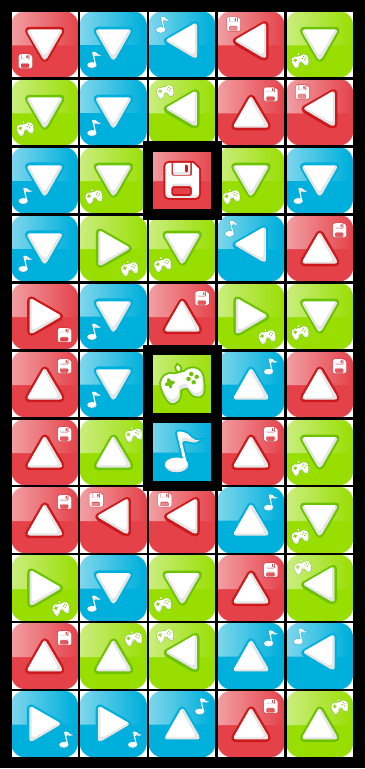}}
    \smallskip\\
    \subcaptionbox{I, $5\times11$}{\includegraphics[scale=\scale]{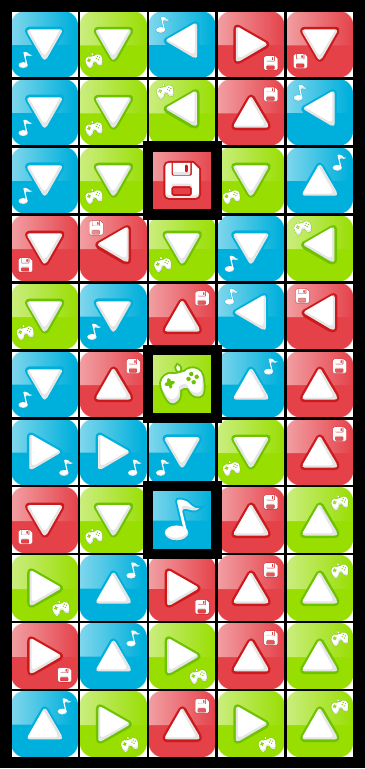}} &
    \subcaptionbox{I, $5\times11$}{\includegraphics[scale=\scale]{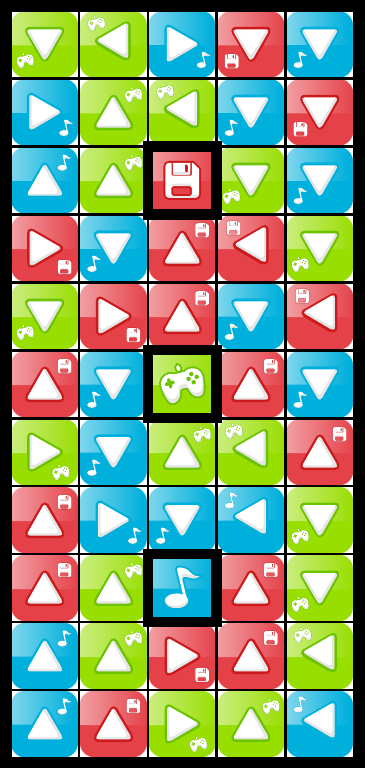}} &
    \subcaptionbox{I, $5\times11$}{\includegraphics[scale=\scale]{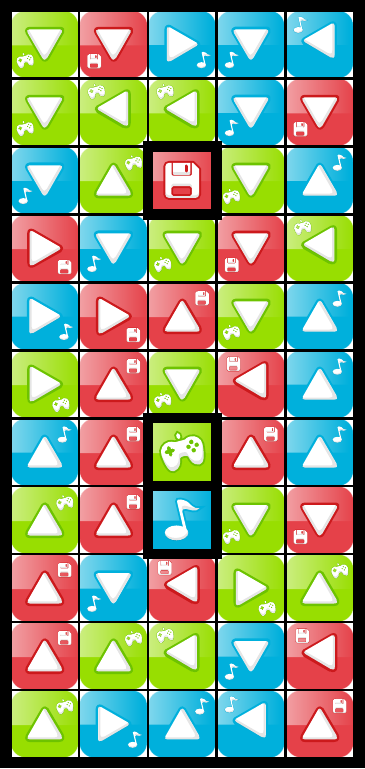}} &
    \subcaptionbox{I, $5\times11$}{\includegraphics[scale=\scale]{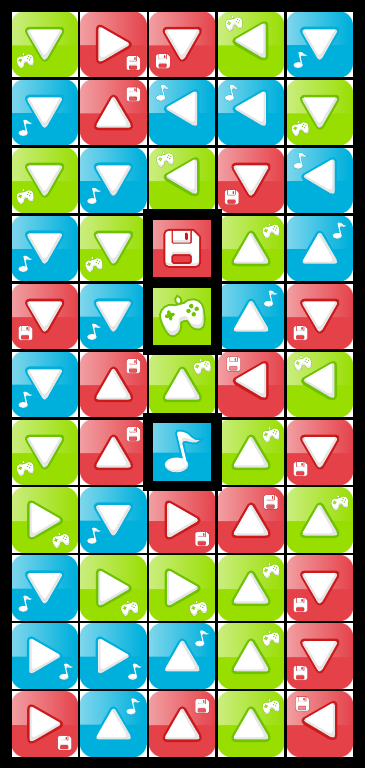}}
    \smallskip\\
    \subcaptionbox{I, $5\times11$}{\includegraphics[scale=\scale]{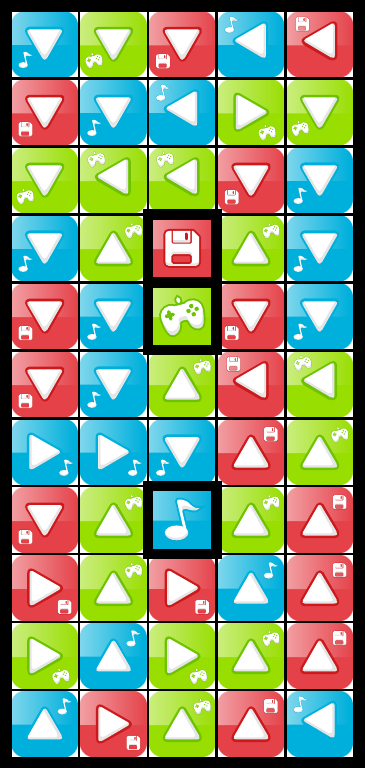}} &
    \subcaptionbox{I, $5\times11$}{\includegraphics[scale=\scale]{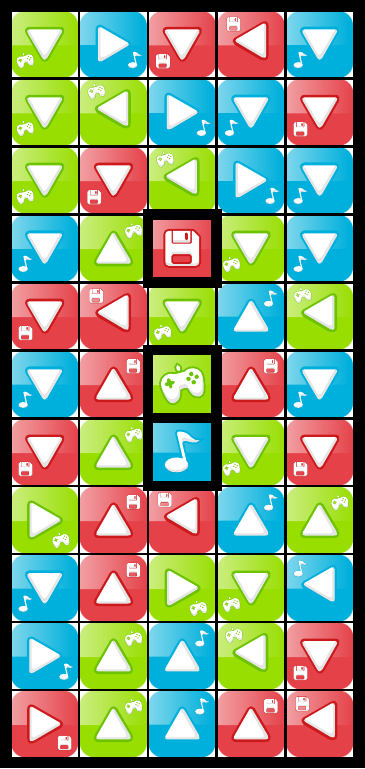}} &
    \subcaptionbox{I, $5\times11$}{\includegraphics[scale=\scale]{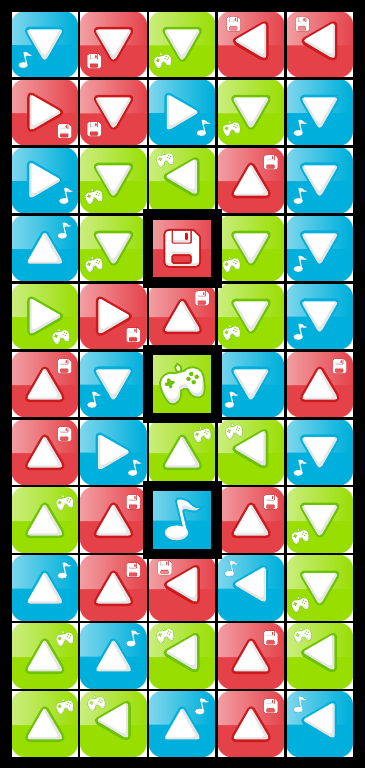}} &
  \end{tabular}
  \caption{Solutions from the automated solver for sinks in the I pattern
    of size $5\times11$.}
  \label{fig:solver-solutions-I-5x11}
\end{figure}

\begin{figure}
  \centering
  \begin{tabular}{ccc}
    \subcaptionbox{I, $6\times9$}{\includegraphics[scale=\scale]{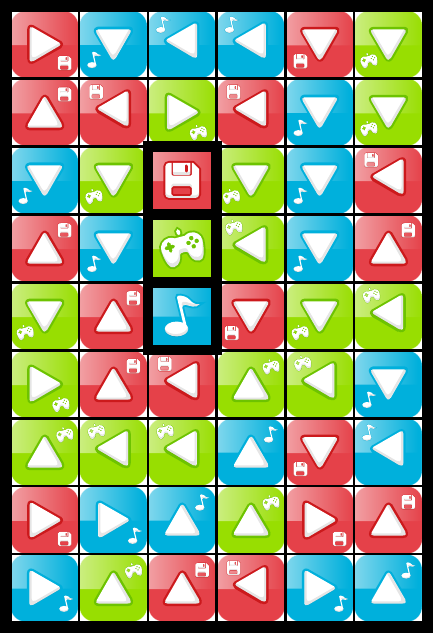}} &
    \subcaptionbox{I, $6\times9$}{\includegraphics[scale=\scale]{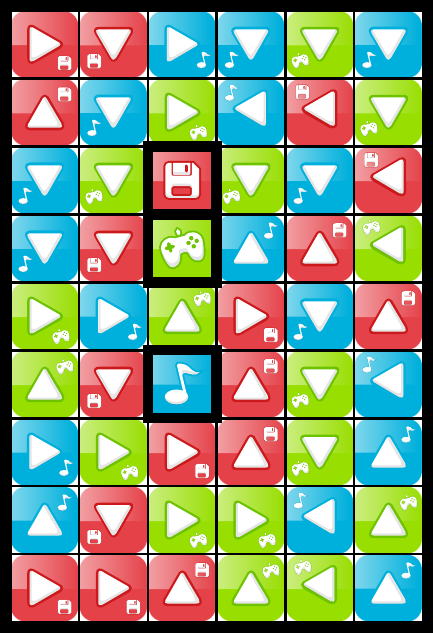}} &
    \subcaptionbox{I, $6\times9$}{\includegraphics[scale=\scale]{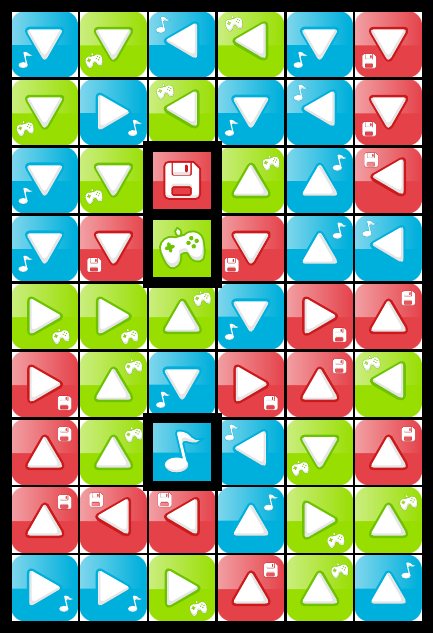}}
    \smallskip\\
    \subcaptionbox{I, $6\times9$}{\includegraphics[scale=\scale]{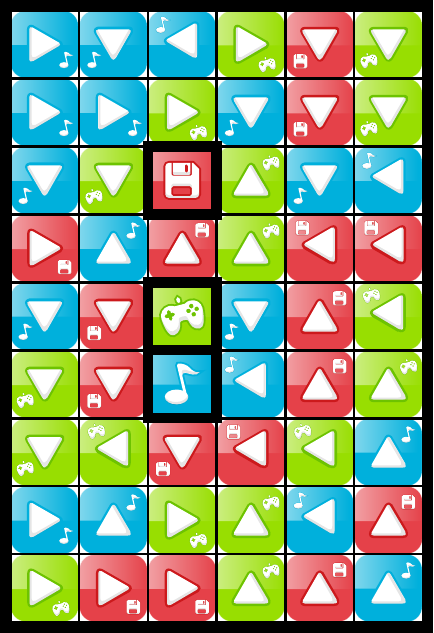}} &
    \subcaptionbox{I, $6\times9$}{\includegraphics[scale=\scale]{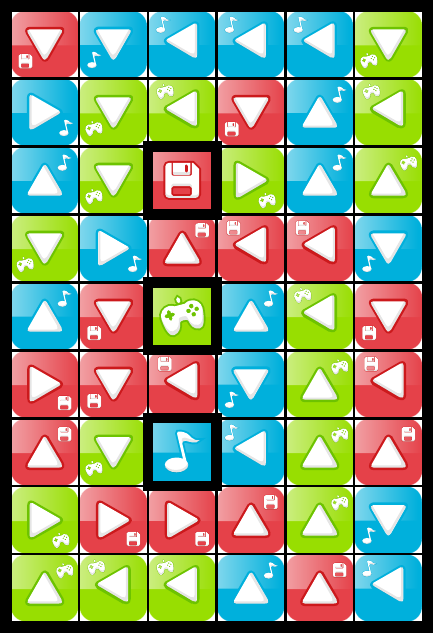}} &
    \subcaptionbox{I, $6\times9$}{\includegraphics[scale=\scale]{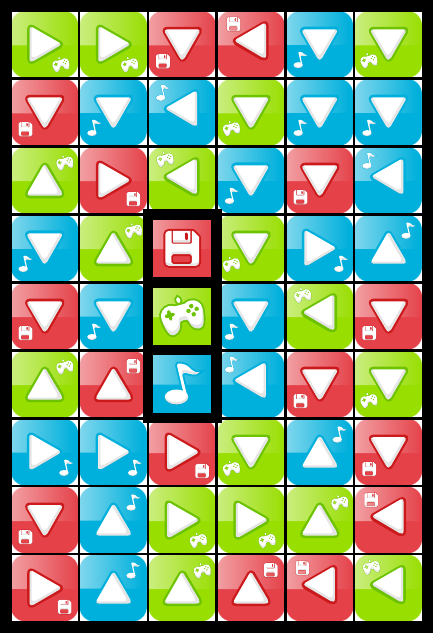}}
  \end{tabular}
  \caption{Solutions from the automated solver for sinks in the I pattern
    of size $6\times9$.}
  \label{fig:solver-solutions-I-6x9}
\end{figure}

\begin{figure}
  \centering
  \begin{tabular}{cc}
    \subcaptionbox{I, $7\times7$}{\includegraphics[scale=\scale]{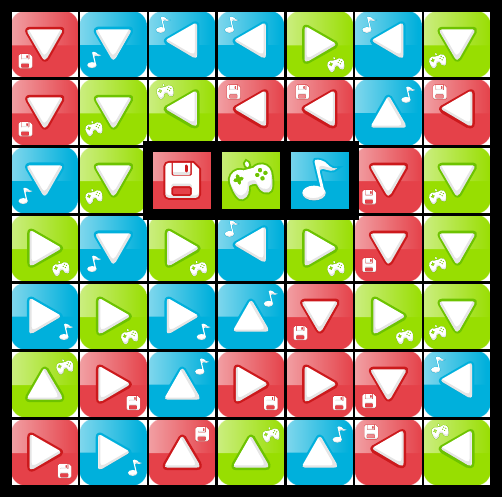}} &
    \subcaptionbox{I, $7\times7$}{\includegraphics[scale=\scale]{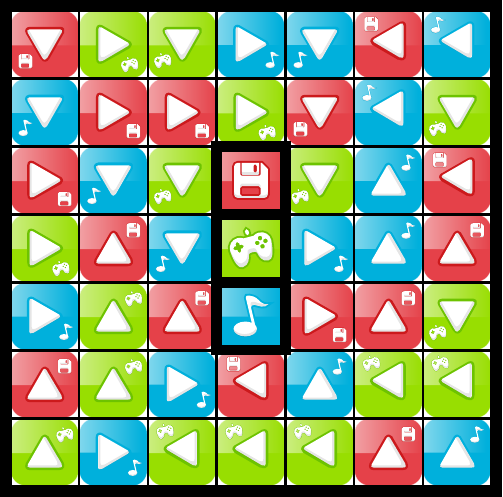}}
  \end{tabular}
  \caption{Solutions from the automated solver for sinks in the I pattern
    of size $7\times7$.}
  \label{fig:solver-solutions-I-7x7}
\end{figure}

\begin{figure}
  \centering
  \tabcolsep=0.75em
  \begin{tabular}{cccc}
    \subcaptionbox{J, $6\times9$}{\includegraphics[scale=\scale]{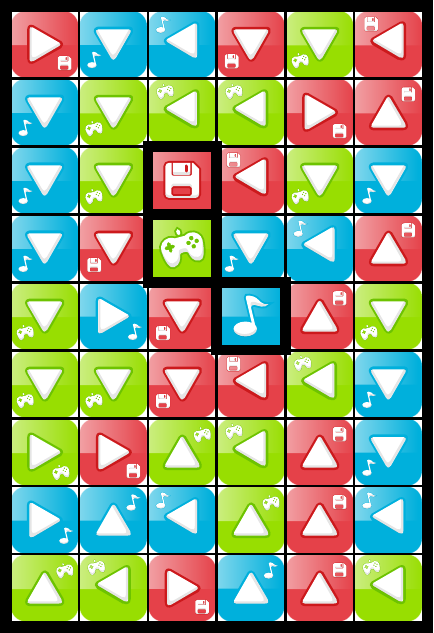}} &
    \subcaptionbox{J, $6\times9$}{\includegraphics[scale=\scale]{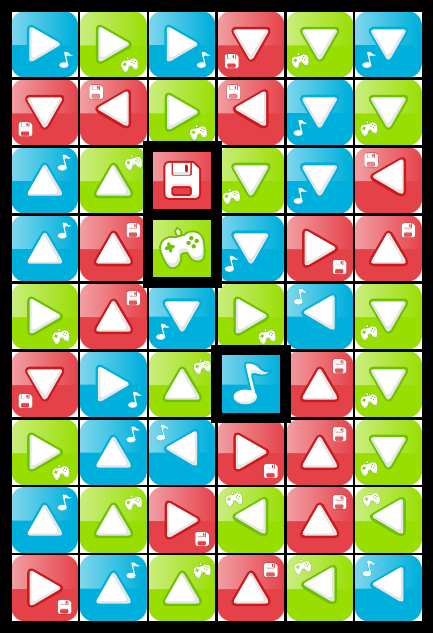}} &
    \subcaptionbox{J, $6\times9$}{\includegraphics[scale=\scale]{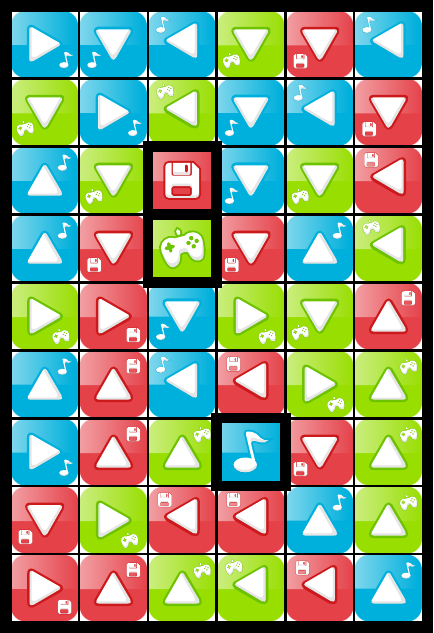}} &
    \subcaptionbox{J, $6\times9$}{\includegraphics[scale=\scale]{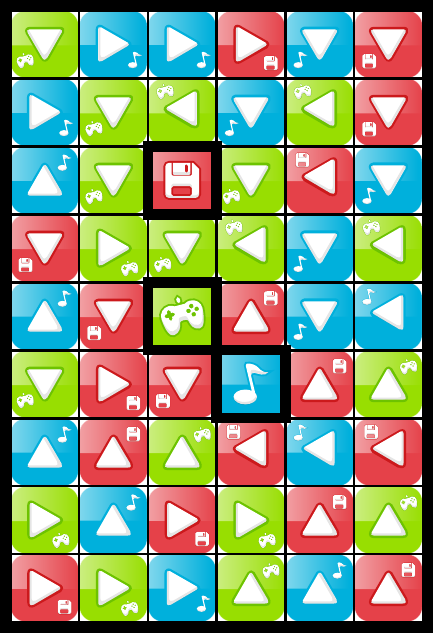}}
    \smallskip\\
    \subcaptionbox{J, $6\times9$}{\includegraphics[scale=\scale]{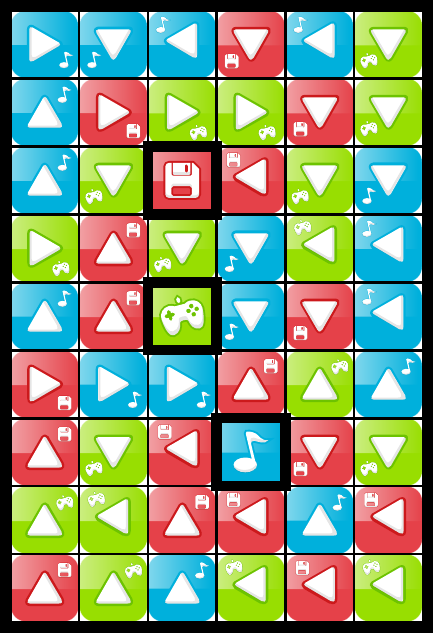}} &
    \subcaptionbox{J, $6\times9$}{\includegraphics[scale=\scale]{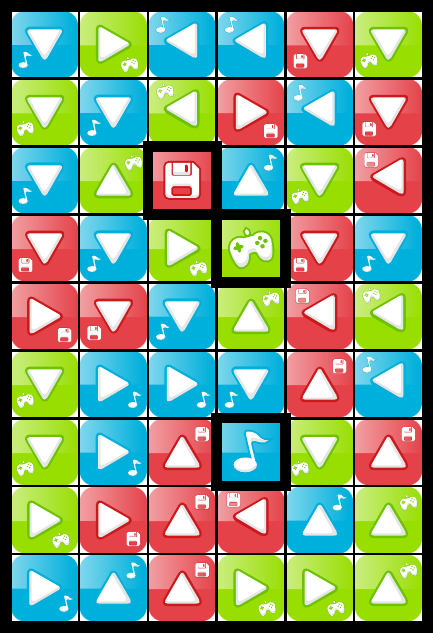}} &
    \subcaptionbox{J, $6\times9$}{\includegraphics[scale=\scale]{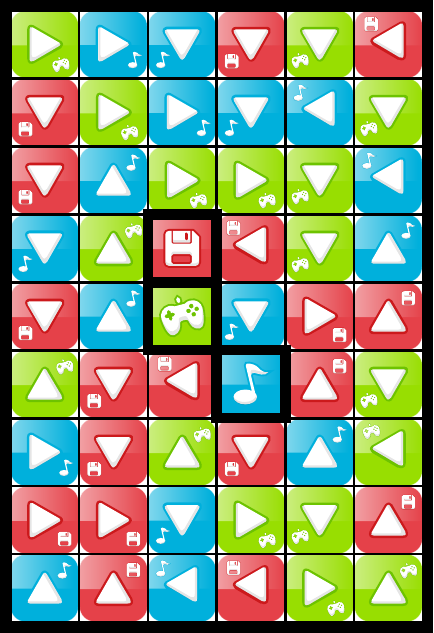}} &
    \subcaptionbox{J, $6\times9$}{\includegraphics[scale=\scale]{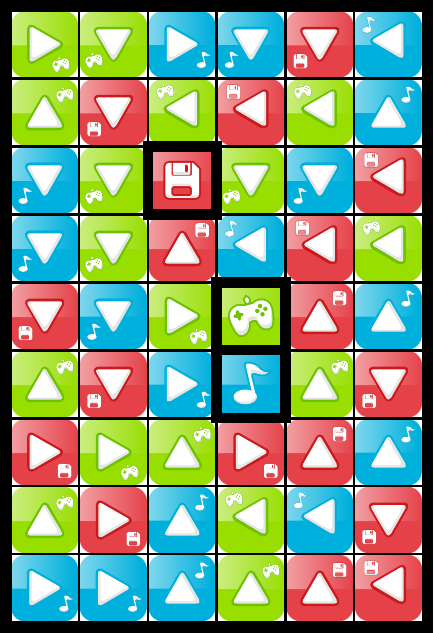}}
    \smallskip\\
    &
    \subcaptionbox{J, $6\times9$}{\includegraphics[scale=\scale]{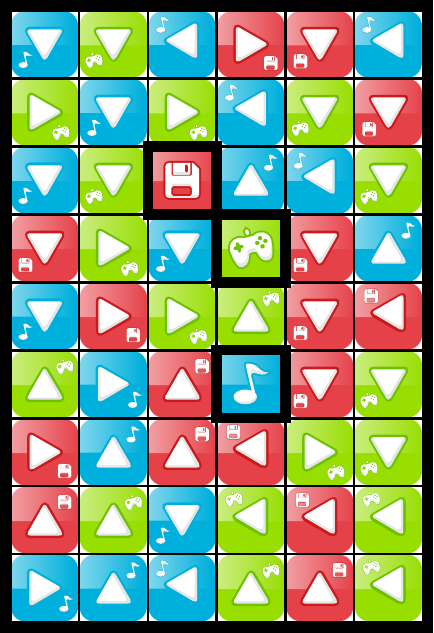}} &
    \subcaptionbox{J, $6\times9$}{\includegraphics[scale=\scale]{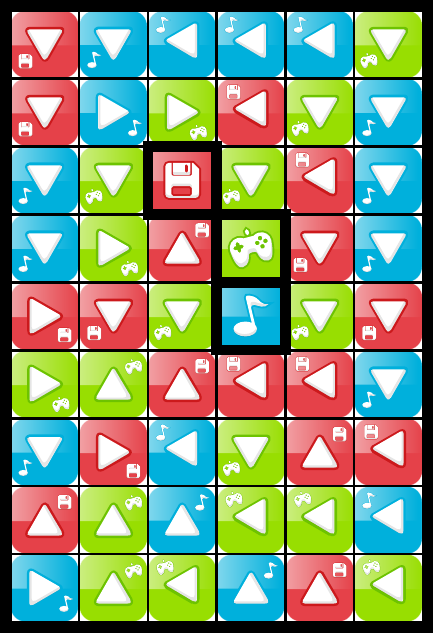}} &
  \end{tabular}
  \caption{Solutions from the automated solver for sinks in the J pattern
    of size $6\times9$.}
  \label{fig:solver-solutions-J-6x9}
\end{figure}

\begin{figure}
  \centering
  \tabcolsep=0.75em
  \begin{tabular}{ccc}
    \subcaptionbox{J, $7\times8$}{\includegraphics[scale=\scale]{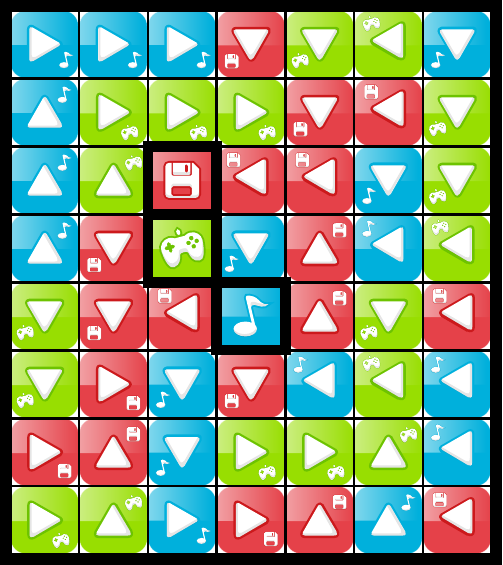}} &
    \subcaptionbox{J, $7\times8$}{\includegraphics[scale=\scale]{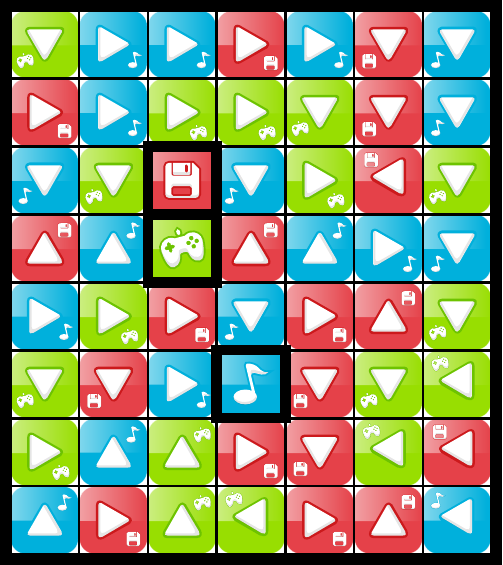}} &
    \subcaptionbox{J, $7\times8$}{\includegraphics[scale=\scale]{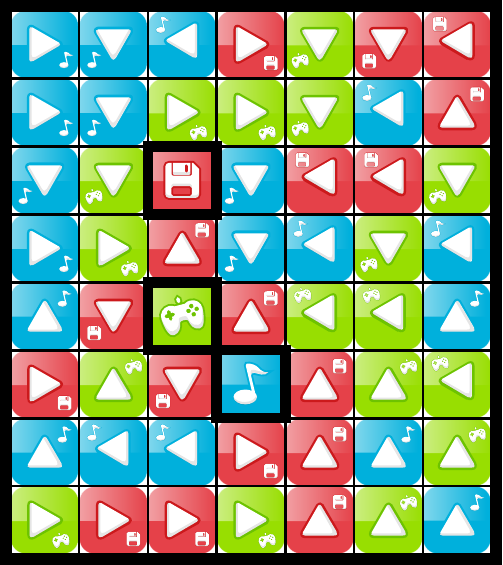}}
    \smallskip\\
    \subcaptionbox{J, $7\times8$}{\includegraphics[scale=\scale]{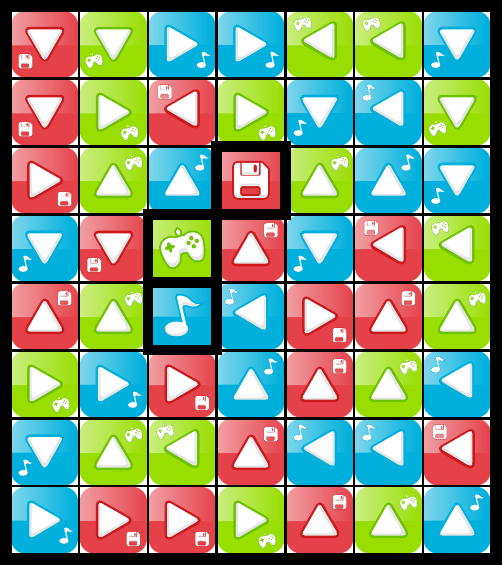}} &
    \subcaptionbox{J, $7\times8$}{\includegraphics[scale=\scale]{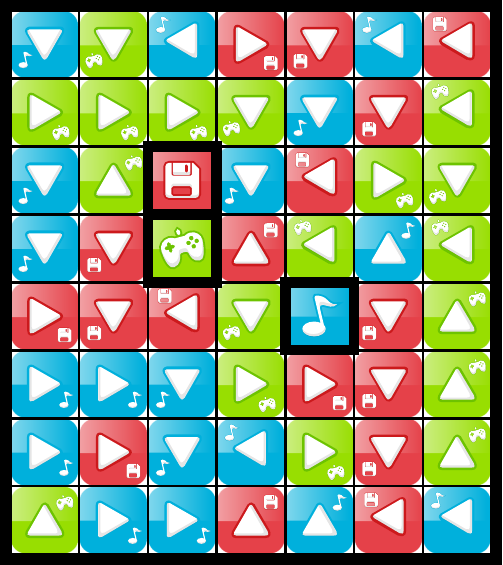}} &
    \subcaptionbox{J, $7\times8$}{\includegraphics[scale=\scale]{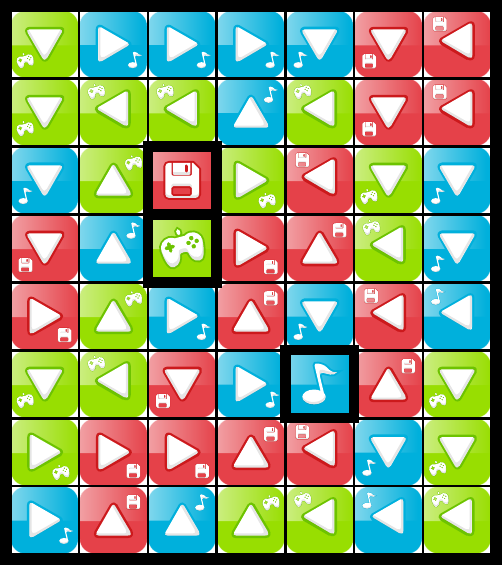}}
    \smallskip\\
    \subcaptionbox{J, $7\times8$}{\includegraphics[scale=\scale]{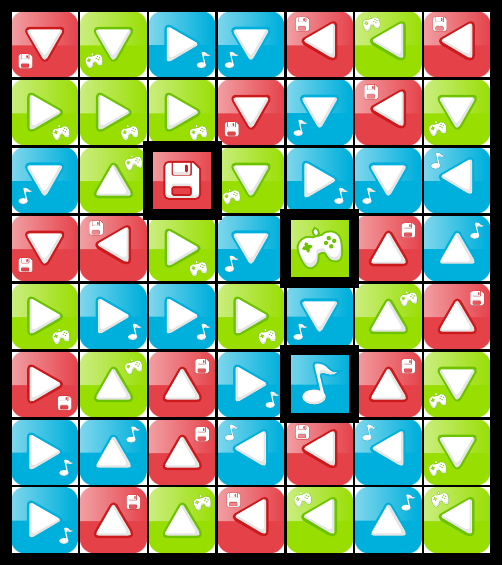}} &
    \subcaptionbox{J, $7\times8$}{\includegraphics[scale=\scale]{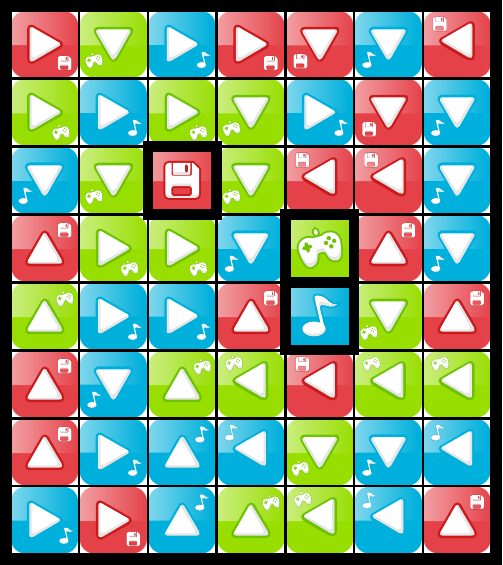}} &
    \subcaptionbox{J, $7\times8$}{\includegraphics[scale=\scale]{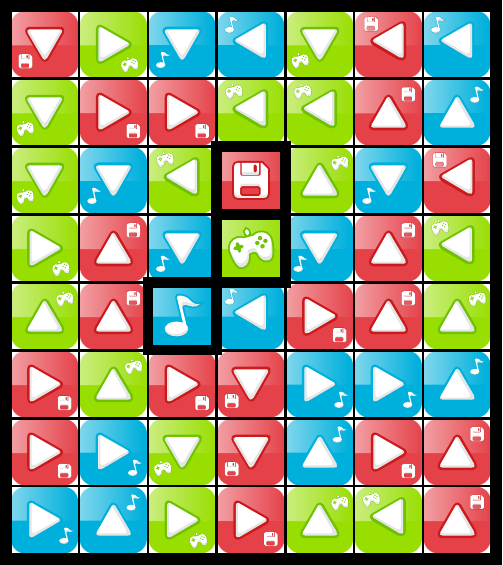}}
    \smallskip\\
    \subcaptionbox{J, $7\times8$}{\includegraphics[scale=\scale]{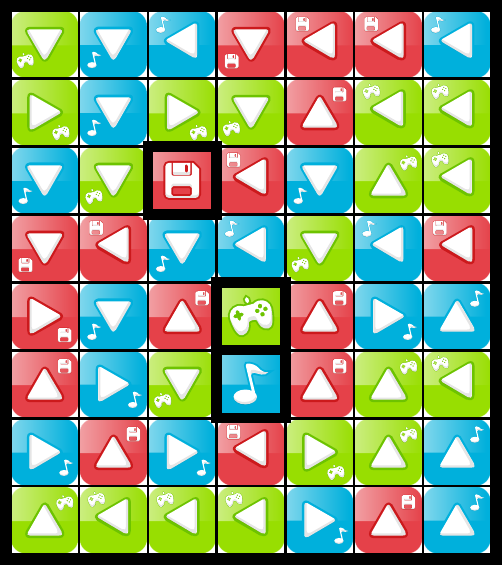}} &
    \subcaptionbox{J, $7\times8$}{\includegraphics[scale=\scale]{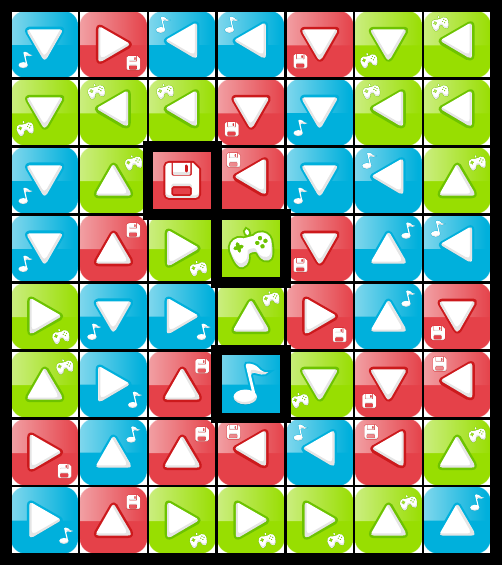}} &
    \subcaptionbox{J, $7\times8$}{\includegraphics[scale=\scale]{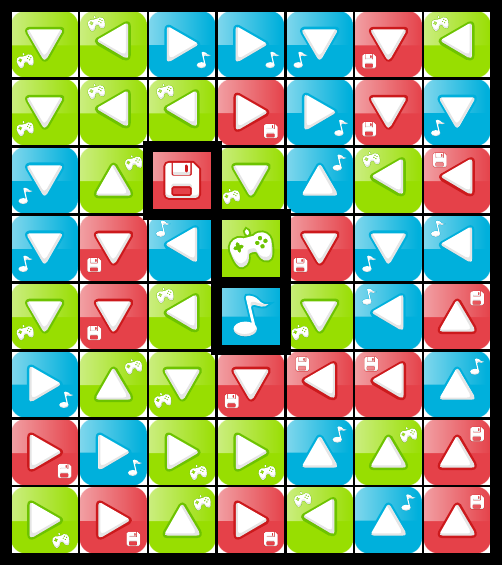}}
  \end{tabular}
  \caption{Solutions from the automated solver for sinks in the J pattern
    of size $7\times8$.}
  \label{fig:solver-solutions-J-7x8}
\end{figure}

\begin{figure}
  \centering
  \tabcolsep=0.75em
  \begin{tabular}{ccc}
    \subcaptionbox{J, $8\times7$}{\includegraphics[scale=\scale]{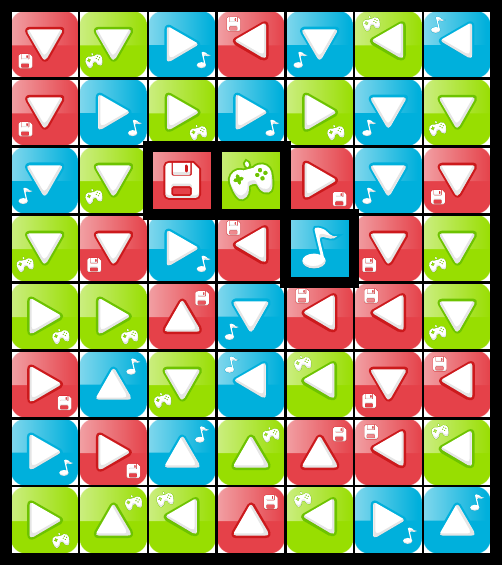}} &
    \subcaptionbox{J, $8\times7$}{\includegraphics[scale=\scale]{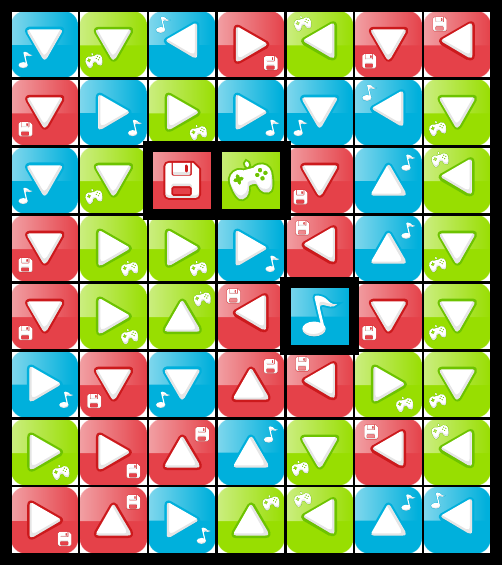}} &
    \subcaptionbox{J, $8\times7$}{\includegraphics[scale=\scale]{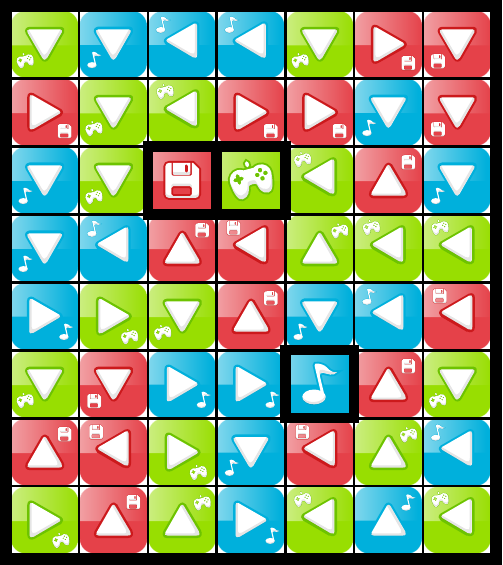}}
    \smallskip\\
    \subcaptionbox{J, $8\times7$}{\includegraphics[scale=\scale]{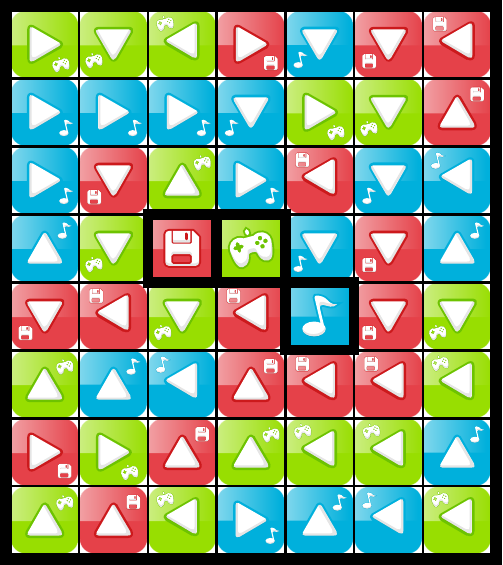}} &
    \subcaptionbox{J, $8\times7$}{\includegraphics[scale=\scale]{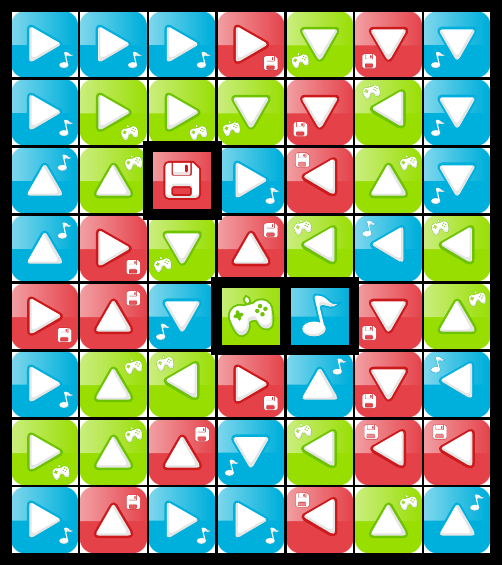}} &
    \subcaptionbox{J, $8\times7$}{\includegraphics[scale=\scale]{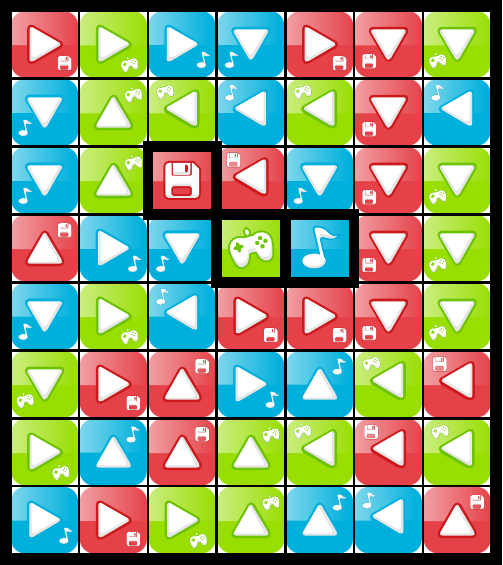}}
  \end{tabular}
  \caption{Solutions from the automated solver for sinks in the J pattern
    of size $8\times7$.}
  \label{fig:solver-solutions-J-8x7}
\end{figure}

\begin{figure}
  \centering
  \begin{tabular}{ccc}
    \subcaptionbox{L, $6\times9$}{\includegraphics[scale=\scale]{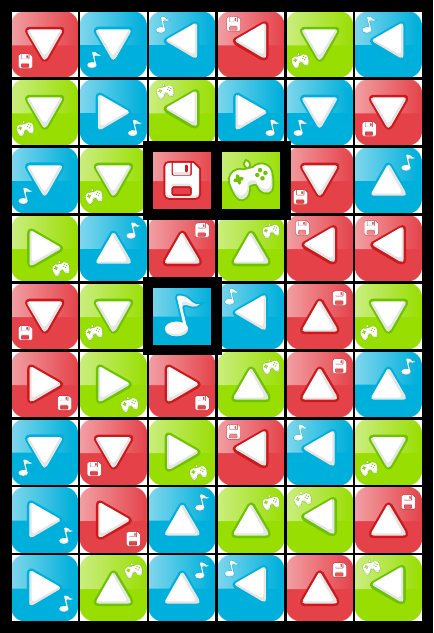}} &
    \subcaptionbox{L, $6\times9$}{\includegraphics[scale=\scale]{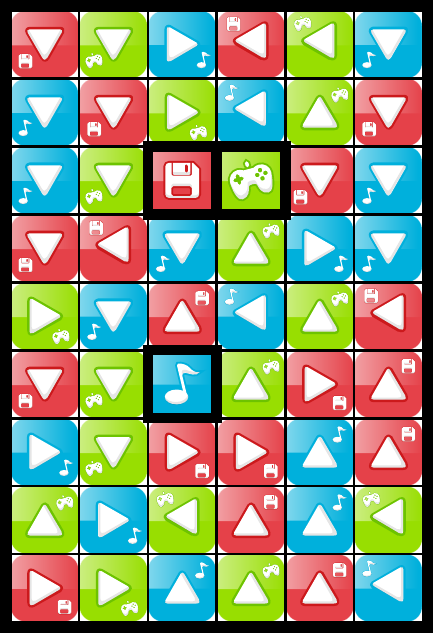}} &
    \subcaptionbox{L, $6\times9$}{\includegraphics[scale=\scale]{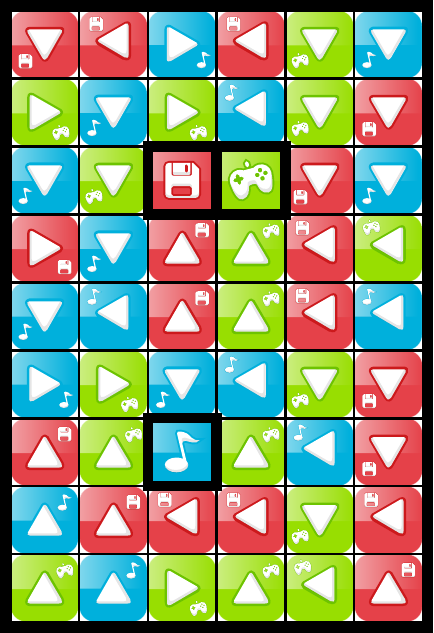}}
    \smallskip\\
    \subcaptionbox{L, $6\times9$}{\includegraphics[scale=\scale]{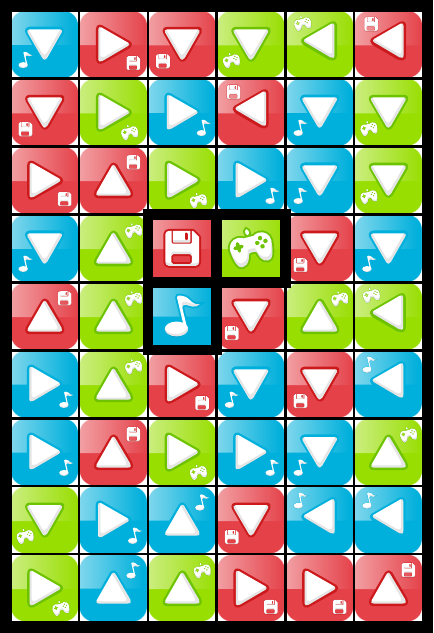}} &
    \subcaptionbox{L, $6\times9$}{\includegraphics[scale=\scale]{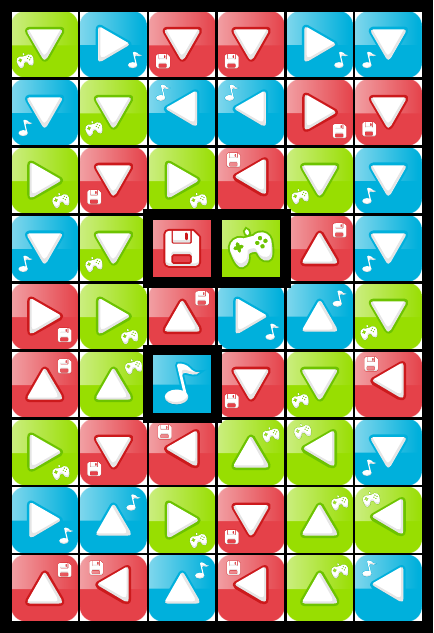}} &
    \subcaptionbox{L, $6\times9$}{\includegraphics[scale=\scale]{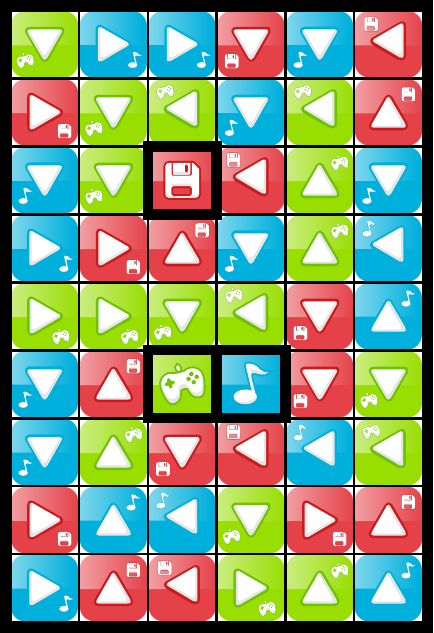}}
    \smallskip\\
    \subcaptionbox{L, $6\times9$}{\includegraphics[scale=\scale]{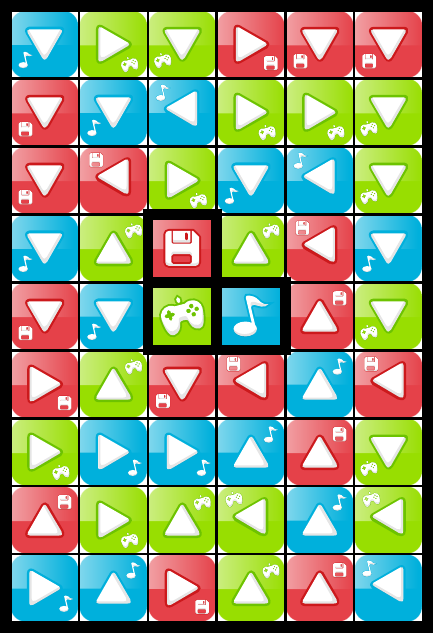}} &
    \subcaptionbox{L, $6\times9$}{\includegraphics[scale=\scale]{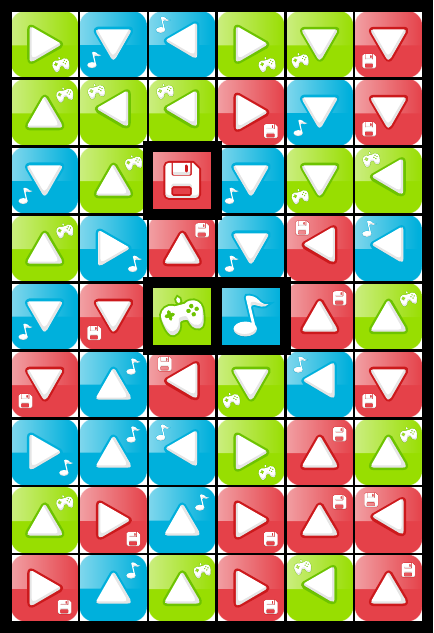}} &
    \subcaptionbox{L, $6\times9$}{\includegraphics[scale=\scale]{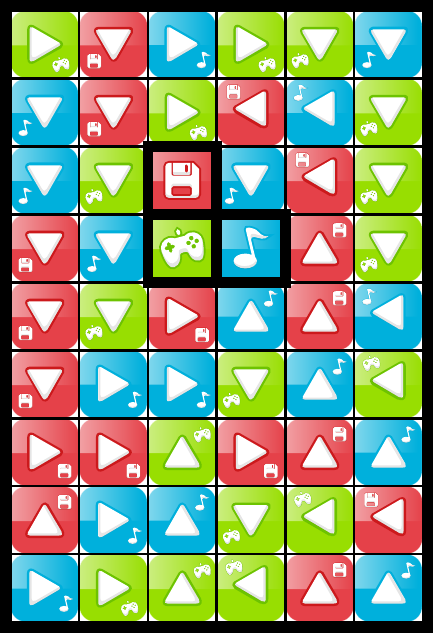}}
  \end{tabular}
  \caption{Solutions from the automated solver for sinks in the L pattern
    of size $6\times9$.}
  \label{fig:solver-solutions-L-6x9}
\end{figure}

\begin{figure}
  \centering
  \begin{tabular}{ccc}
    \subcaptionbox{L, $7\times7$}{\includegraphics[scale=\scale]{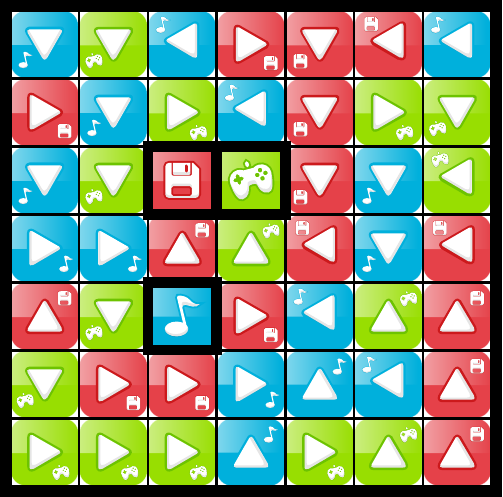}} &
    \subcaptionbox{L, $7\times7$}{\includegraphics[scale=\scale]{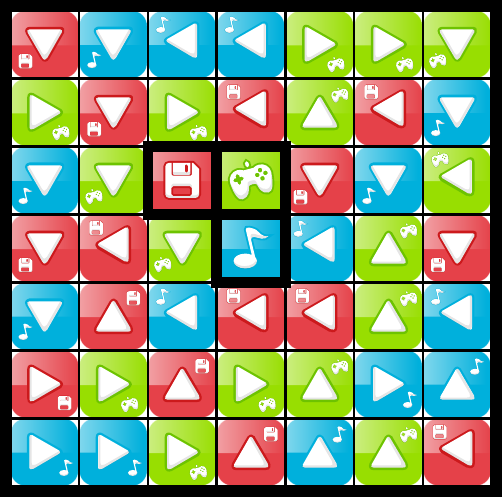}} &
    \subcaptionbox{L, $7\times7$}{\includegraphics[scale=\scale]{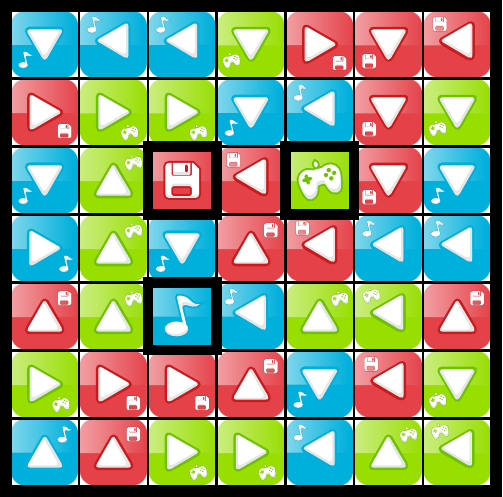}}
    \smallskip\\
    \subcaptionbox{L, $7\times7$}{\includegraphics[scale=\scale]{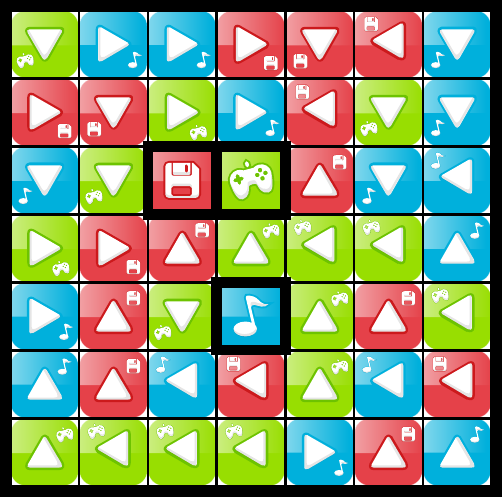}} &
    \subcaptionbox{L, $7\times7$}{\includegraphics[scale=\scale]{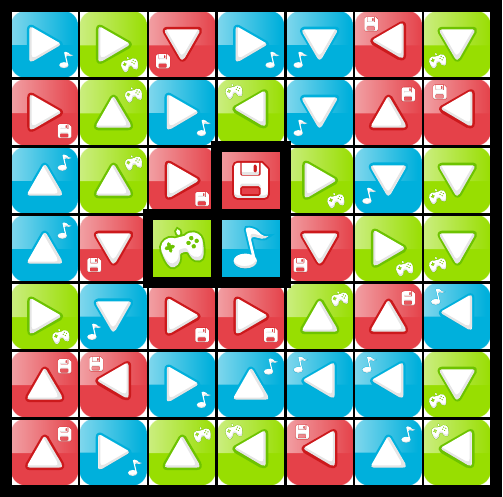}}
  \end{tabular}
  \caption{Solutions from the automated solver for sinks in the L pattern
    of size $7\times7$.}
  \label{fig:solver-solutions-L-7x7}
\end{figure}

\begin{figure}
  \centering
  \begin{tabular}{cc}
    \subcaptionbox{Y, $7\times8$}{\includegraphics[scale=\scale]{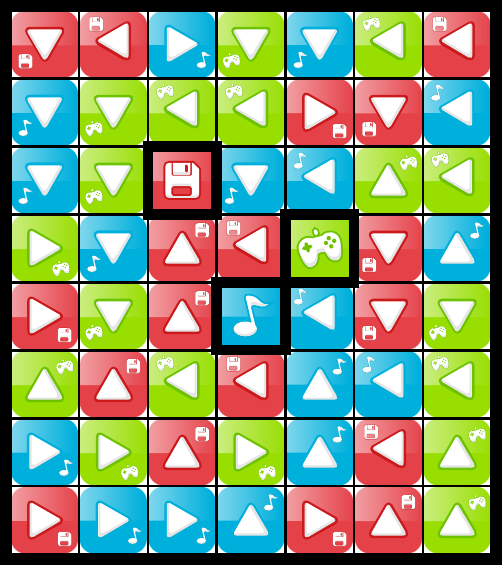}} &
    \subcaptionbox{Y, $7\times8$}{\includegraphics[scale=\scale]{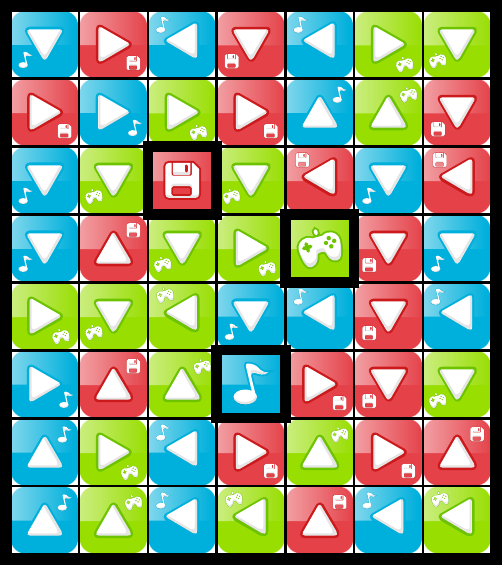}}
    \smallskip\\
    \subcaptionbox{Y, $7\times8$}{\includegraphics[scale=\scale]{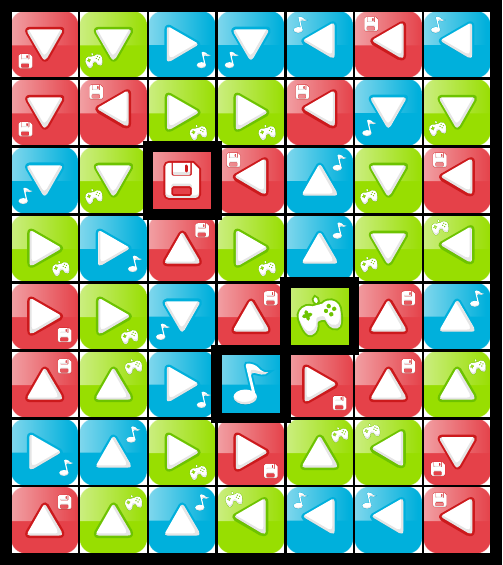}} &
    \subcaptionbox{Y, $7\times8$}{\includegraphics[scale=\scale]{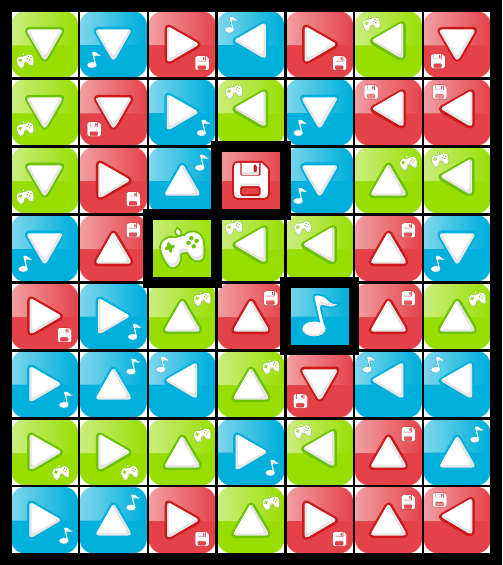}}
  \end{tabular}
  \caption{Solutions from the automated solver for sinks in the Y pattern
    of size $7\times8$.}
  \label{fig:solver-solutions-Y-7x8}
\end{figure}

\begin{figure}
  \centering
  \begin{tabular}{cc}
    \subcaptionbox{/, $7\times8$}{\includegraphics[scale=\scale]{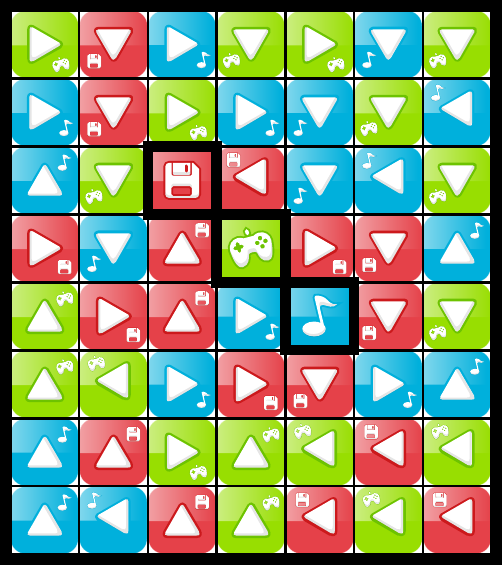}} &
    \subcaptionbox{/, $7\times8$}{\includegraphics[scale=\scale]{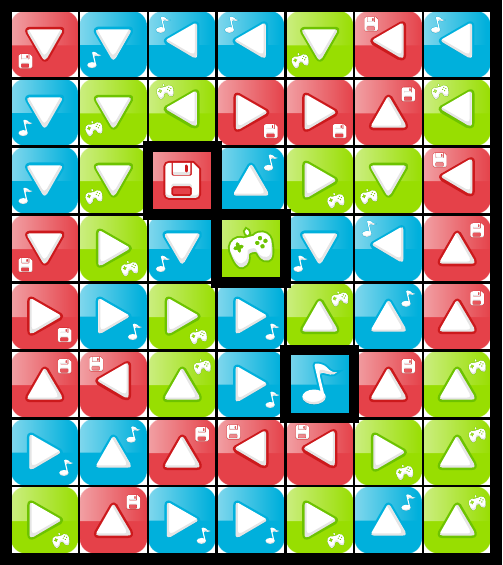}}
  \end{tabular}
  \caption{Solutions from the automated solver for sinks in the / pattern
    of size $7\times8$.}
  \label{fig:solver-solutions-/-7x8}
\end{figure}

\end{document}